\documentclass{article}
\usepackage[utf8]{inputenc}
\usepackage[english]{babel}
\usepackage{csquotes}
 
\usepackage{geometry}
\geometry{margin=2.5cm} 
\usepackage[url=false, doi=false]{biblatex}
\addbibresource{references.bib}

\usepackage{hyperref}

\hypersetup{colorlinks,linkcolor={blue},citecolor={blue},urlcolor={blue}}
\usepackage{amsmath}
\usepackage{amsfonts}
\usepackage{subcaption}
\usepackage{amsthm}
\newtheorem{theorem}{Theorem}
\newtheorem{definition}{Definition}
\newtheorem{lemma}{Lemma}
\newtheorem{corollary}{Corollary}
\newtheorem{proposition}{Proposition}

\usepackage{amssymb}
\usepackage{mathtools}
\usepackage{enumitem}
\usepackage{algorithm}
\usepackage{algorithmic}
\usepackage{soul}
\usepackage{algorithmic}
\usepackage[dvipsnames]{color,xcolor}
\usepackage[colorinlistoftodos]{todonotes}
\usepackage[many]{tcolorbox}
\usepackage{thmtools, thm-restate}
\newcommand{\R}{\mathbb{R}}

\newcommand{\Prob}[1]{\mathbb{P}\left({#1}\right)}

\newcommand{\calS}{\mathcal{S}}
\newcommand{\calP}{\mathcal{P}}
\newcommand{\calX}{\mathcal{X}}
\newcommand{\calD}{\mathcal{D}}
\newcommand{\E}{\mathbb{E}}
\newcommand{\prox}{\mathrm{prox}}
\newcommand{\Regret}{\mathcal{R}}

\newcommand{\bc}[1]{\left\{{#1}\right\}}
\newcommand{\br}[1]{\left({#1}\right)}
\newcommand{\bs}[1]{\left[{#1}\right]}

\DeclareMathOperator*{\argmin}{arg\,min}

\newcommand{\innerprod}[2]{\left\langle{#1},{#2}\right\rangle}

\newcommand{\norm}[1]{\left\| {#1} \right\|}
\newcommand{\abs}[1]{\left| {#1} \right|}

\newcommand{\trace}[1]{\text{tr}\br{#1}}

\newtcolorbox{boxF}{
    enhanced,
    boxrule = 1.5pt, 
    colframe = white, % making the base for dash line
}

\usepackage{tikz}
\usepackage{authblk} 
\usepackage{bbm}
\usetikzlibrary{graphs} % LaTeX and plain TeX
\usetikzlibrary[graphs] % ConTeXt
\newtheorem*{openquestion}{Open Question}
\newtheorem{remark}[theorem]{Remark}

\newcommand\blfootnote[1]{%
  \begingroup
  \renewcommand\thefootnote{}\footnote{#1}%
  \addtocounter{footnote}{-1}%
  \endgroup
}
%\title{Bandit dynamics in congestion games: regret and convergence to equilibrium with $\ell_2$ regularization}
\title{Polynomial Convergence of Bandit No-Regret Dynamics in Congestion Games}
\date{}

\begin{document}

\author[*1]{Leello Dadi}
\author[*2]{Ioannis Panageas}
\author[*1]{Stratis Skoulakis}
\author[*1]{Luca Viano}
\author[1]{Volkan Cevher}
\affil[1]{LIONS, École polytechnique fédérale de Lausanne (EPFL)}
\affil[2]{Computer Science, University of California, Irvine}
\maketitle
\blfootnote{*Equal contribution.}

\begin{abstract}
\noindent We introduce an online learning algorithm in the bandit feedback model
that, once adopted by all agents of a congestion game, results in game-dynamics that converge to an $\epsilon$-approximate Nash Equilibrium in a polynomial number of rounds with respect to $1/\epsilon$, the number of players and the number of available resources. The proposed algorithm also guarantees sublinear regret to any agent adopting it. As a result, our work answers an open question from \cite{du22} and extends the recent results of \cite{panageas2023semi} to the bandit feedback model. We additionally establish that our online learning algorithm can be implemented in \textit{polynomial time} for the important special case of Network Congestion Games on Directed Acyclic Graphs (DAG) by constructing an exact $1$-barycentric spanner for DAGs.       
\end{abstract}

\section{Introduction}
\label{sec:setup}
Congestion games represent a class of multi-agent games where $n$ self-interested agents compete over $m$ resources. Each agent chooses a subset of these resources, and their individual costs depend on the utilization of each selected resource (i.e., the number of other agents utilizing the same resource). For instance, in \textit{Network Congestion Games}, a graph is given, and each agent $i$ aims to travel from an initial vertex $s_i$ to a designated destination vertex $t_i$. The agent must then select a set of edges (i.e resources) constituting a valid $(s_i,t_i)$-path in the graph, while also trying to avoid congested edges.

Congestion games have been extensively studied over the years due to their wide-ranging applications~\cite{KoutsoupiasP99WorstCE,roughgarden2002bad,christodoulou,Fotakis2005226,Schafer10,Roughgarden09}. They always admit a Nash Equilibrium (NE) which is a \textit{steady state} at which no agent can decrease their cost by unilaterally deviating to another selection of resources~\cite{rosenthal1973class}. A Nash equilibrium is a static solution concept meaning that it does not describe how agents can end up in such an equilibrium state nor  it indicates how agents should update their strategies. It is well-known that \textit{better response dynamics}, in which agents sequentially update their resource selection, converges to a Nash Equilibrium and achieves accelerated rates for  interesting special cases of congestion games \cite{CS07,GLMM04}. 

Despite the aforementioned positive convergence results, \textit{better response dynamics} admit several caveats. The first is that their convergence properties heavily rely on the agents updating their strategies in a round-robin scheme which is a strong assumption in decentralized settings (in case of simultaneous updates better response dynamics may not converge to NE). The second is that computing a better response comes with the assumption that the agents are aware of the loads of all the available resources~\cite{CS07}. Finally, better response does not come with any kind of guarantees to individual agents, which raises concerns as to why a selfish agent should behave according to best-response.

Fortunately the online learning framework~\cite{H17} provides a very concrete answer on what natural strategic behavior means~\cite{EMN09}. There are various \textit{no-regret} algorithms that a selfish agent can adopt in the context of repeated game-playing and guarantee that no matter the actions of the other agents, the agents suffer a cost comparable to the cost of the \textit{best fixed action}~\cite{AHK12,Z03} chosen in hindsight. The latter guarantees persist even under \textit{bandit feedback model} at which the agent learns only the cost of its selected actions (resource-selection in the context of congestion games)~\cite{ACFS02,AB09}. Due to the forementioned merits of bandit online learning algorithms, there exists a long line of works providing no-regret bandit online learning algorithms in the context of congestion games~\cite{AK04, VHK07,GLLO07,BCK12,CL12,vempala, NB13, ASL14}\footnote{The setting is mostly know as Online Path Planning or linear bandits in the online learning literature.}.  

Despite the long interest in bandit online learning algorithms for congestion games, the convergence to Nash Equilibrium of such bandit learning algorithms remains unexplored. To the best of our knowledge \cite{du22} were the first to provide an update rule (performing under bandit feedback) that once adopted by all agents of a congestion game, the resulting strategies converge to an $\epsilon$-approximate Nash Equilibrium with rate polynomial in $n$, $m$ and $1/\epsilon$. However the update rule proposed  by \cite{du22} does not guarantee the no-regret property. As a result, \cite{du22} asked the following question:
\smallskip
\smallskip
\smallskip

\noindent \textbf{Open Question~(\cite{du22})}~ \textit{Is there there an online learning algorithm performing under the bandit feedback such that} 
\begin{enumerate}
\item \textit{guarantees no-regret to any agent that adopts it}
\item \textit{once adopted by all agents of a congestion game, the resulting strategies converge to an $\epsilon$-approximate Nash Equilibrium in $\mathrm{poly}(n,m,1/\epsilon)$ rounds.}
\end{enumerate}
\smallskip
\smallskip
\smallskip

\noindent In their recent work \cite{panageas2023semi} provided a positive answer for the \textit{semi-bandit feedback model} in which the agents learn the cost of every single selected edge. We remark that in the \textit{bandit feedback} the agents only learns the overall, total, cost of the selected path (sum of the costs of the selected edges) and thus \textit{semi-bandit feedback} comes as a special case of the \textit{bandit feedback}.

\subsection{Our Contribution and Techniques} 
The main contribution of our work consists in providing a positive answer to the open question of \cite{du22}. More precisely, we provide an online learning algorithm, called \textit{Online Gradient Descent with Caratheodory Exploration}, that simultaneously provides both regret guarantees and convergence to Nash Equilibrium. 
\smallskip
\smallskip

\noindent \textbf{Informal Theorem}~\textit{There exists an online learning algorithm (\textit{Online Gradient Descent with Caratheodory Exploration,} $\mathrm{BGD-CE}$) that performs under bandit feedback and guarantees $\mathcal{O}(m^{2.8}T^{4/5})$ regret to any agent that adopts it. Moreover if all agent adopt $\mathrm{BGD-CE}$, then the resulting strategies converge to an $\epsilon$-Nash Equilibrium after $\mathcal{O}(n^{13.5}m^9/\epsilon^5)$ steps.}
\smallskip
\smallskip

\noindent We additionally remark that our proposed online learning algorithm additionally improves on the convergence rate of the algorithm of \cite{du22}. The next table summarizes the regret bounds and the convergence results of the various online learning algorithms proposed over the years.

\begin{table}[h]
\centering
\caption{Comparison with previous related works. $^\star$A regret bound of $\mathcal{O}\left(m^3 T^{3/4}\right)$ can be obtained under a different choice of step size and exploration coefficients (see Remark~\ref{remark:regret})
\label{table:results}}
\resizebox{\textwidth}{!}{
\begin{tabular}{ %|p{3cm}||p{3cm}|p{3cm}|p{3cm}|  
@{}cccc@{}
}
 \hline
 \multicolumn{4}{|c|}{Regret Gurantees and Convergence rates} \\
 \hline
 Method & Regret Guarantees &Convergence to NE& Type of Feedback\\
 \hline
 EXP3 \cite{ACFS02} &   $\mathcal{O}(\sqrt{2^m T})$  & \textcolor{red}{No}  &Bandit\\ 
 Awerbuck \& Kleinberg \cite{AK04} &   $\mathcal{O}(m^{5/3}T^{2/3})$  & \textcolor{red}{No}  &Bandit\\
  GeometricHedge \cite{VHK07} &   $\mathcal{O}(m^{1.5}\sqrt{T})$  & \textcolor{red}{No}  &Bandit\\
   %\cite{BCK12} &   $\mathcal{O}(\sqrt{T})$  & \textcolor{red}{No}  &Bandit\\
 Frank-Wolfe with Exploration \cite{du22} &   \textcolor{red}{Not Available}  & $\mathcal{O}
 (n^{11}m^{12}/\epsilon^6)$   &Bandit\\
 SBGD-CE 
 \cite{panageas2023semi} & $\mathcal{O}(m^2T^{4/5})$ & $\mathcal{O}(n^6m^7/\epsilon^5)$&  Semi-Bandit\\
 BGD-CE (\textcolor{blue}{This Work}) & $\mathcal{O}(m^{2.8}T^{4/5})^{\star}$ & $\mathcal{O}(n^{13.5}m^{13}/\epsilon^5)$&Bandit\\
 \hline
\end{tabular}}
\end{table}

\begin{remark}\label{r:1}
We remark that due to their no-regret properties the online learning algorithms proposed by \cite{ACFS02,AK04,VHK07,BCK12} converge to a Coarse Correlated Equilibrium - an equilibrium notion that in the context of  congestion/potential games differs from Nash Equilibrium~\cite{CHM17,BR20}. No-regret dynamics are guaranteed to converge to Coarse Correlated Equilibria, which is a strict superset of Nash Equilibria and moreover can even contain strictly dominated strategies \cite{VZ13}.  
\end{remark}

\noindent All the aforementioned online learning algorithms concern general congestion games in which the strategy spaces of the agents do not admit any kind of combinatorial structure. As a result, \textit{all of the above online learning algorithms require exponential time with respect to the number of resources}\footnote{\cite{AK04} present an online learning algorithms that runs in polynomial-time for Network Congestion Games in Directed Acyclic Graphs}. For the important special case of Network Congestion Games in Directed Acyclic Networks \cite{AK04,FKLMPS20,FKLS12,AFL13,FKL15}, we provide a variant of our algorithm that preserves the above guarantees while running in polynomial time with respect to the number of edges. 
\smallskip
\smallskip
\smallskip
\smallskip

\noindent \textbf{Informal Theorem}~\textit{For Network Congestion games in Acyclic Directed Graphs (DAGs), Online Gradient Descent with Caratheodory Exploration, can be implemented in polynomial time.% with respect to $n$ (number of agents) and $m$ (number of edges)
}
\smallskip
\smallskip
\smallskip
\smallskip

\noindent \textbf{Our Techniques} The fundamental difficulty in designing no-regret online learning algorithms under bandit feedback is to guarantee that each resource is sufficiently explored. Unfortunately, standard bandit algorithms such as $\mathrm{EXP}3$ \cite{ACFS02} result in regret bounds of the form $\mathcal{O}(2^{m/2}\sqrt{T})$, that scale exponentially with respect to $m$.      
However, a long line of research in combinatorial bandits has produced algorithms with a regret polynomially dependent on $m$ \cite{AK04, VHK07,GLLO07,BCK12,CL12,vempala, NB13, ASL14}. These algorithms, in order to overcome the exploration problem, use various techniques that can roughly be categorized two camps, simultaneous exploration versus alternating explore-exploit, as described by \cite{abernethy2009competing}. However, to the best of our knowledge, none of these algorithms have been shown to converge to NE in congestion games once adopted by all agents (see Remark~\ref{r:1}).

Our online learning algorithm, guaranteeing both no-regret and convergence to equilibrium, is based on combining Online Gradient Descent~\cite{Z03} with a novel exploration scheme, much like \cite{flaxman2004online}. Our exploration strategy is based on coupling the notion of barycentric spanners~\cite{AK04} with the notion of Bounded-Away Polytopes proposed by ~\cite{panageas2023semi} for the semi-bandit case. More precisely, \cite{panageas2023semi} introduced the notion of $\mu$-Bounded Away Polytope which corresponds to the description polytope of the strategy space (convex hull of all pure strategies) with the additional constraint that each resource is selected with probability at least $\mu >0$. Projecting on this polytope ensures that the variance of the unobserved cost estimators remains bounded.
%The latter guarantees that the variance of the standard estimator $\hat{c}^t_e \leftarrow c^t_e/x_e^t$ for the semi-bandit case, is upper bounded by $1/\mu$. However in the bandit case the agents only learn the overall cost $\sum_{e\in p^t} c^t_e$ and thus more elaborate estimators are needed. 
In order to capture bandit estimators, we extend the notion of $\mu$-Bounded Away Polytope to denote the subset of the description polytope for which each point admits a decomposition with least $\mu$ weight on a pre-selected barycentric spanner $ \mathcal{B}$.
%In Theorem~\ref{thm:bigRegret} and Theorem~\ref{thm:bigNash} we establish that combining the latter polytope with Online Gradient Descent, one can establish both convergence to NE as well as the no-regret guarantees.

This technique of projecting on $\mu$-Bounded polytopes closely ressembles the \textit{mixing} strategies employed in online learning schemes that have alternating explore-exploit rounds. In those strategies, a fixed measure is added to bias the algorithm's chosen strategy. The projection on $\mu$-Bounded polytopes, however, scales the point before adding a bias, and, in some rounds, does not alter the point. It is therefore a mix of simultaneous and alternating exploration, depending on the round.

As all the previous online learning algorithms, Online Gradient Descent with Caratheodory Exploration $(\mathrm{OGD-CE})$, requires exponential-time in the case of general congestion games at which there is no combinatorial structure on the resources. In order to provide an polynomial-time implementation of $\mathrm{OGD-CE}$ for Network Congestion Games on Directed Acyclic Graphs we need to overcome a fundamental difficulty. In Section~\ref{sec:exact_spanners_dag}, we propose a novel construction of barycentric spanners for DAGs that outputs a $1$-barycentric spanner in polynomial time (see Algorithm~\ref{alg:basis}).

\subsection{Related Work}
\noindent \textbf{Online Learning and Nash Equilibrium} Our work falls squarely within the recent line of research studying the convergence properties of online learning dynamics in the context of repeated games~\cite{PSS22,ADFF22,DFG21,AFKL22,FALLKS22,HACM22,ZMABGY18,MZ19,CHM17}. Specifically \cite{heliou_learning_2017,PPP17,MZ19,ZMABGY18} establish asymptotic convergence guarantees for potential normal form games; congestion games are known to be isomorphic to potential games \cite{potgames}. Most of the aforementioned works use techniques from stochastic approximation and are orthogonal to ours. Furthermore, \cite{chen2016generalized, VAM21} study the convergence properties of first-order methods in non-atomic congestion games; non-atomic congestion games capture continuous populations and result in convex landscapes. On the other hand, atomic congestion games (the focus of this paper) result in non-convex landscapes.  
\smallskip
\smallskip

\noindent \textbf{Bandits and Online Learning} As already mentioned, our setting has been studied within the realm of online learning and bandits, where several no-regret algorithms for congestion games (or linear bandits) have been proposed. The main difference between our and previous works is that, once the previously proposed algorithms are adopted by all agents, the overall system only converges to a Coarse Correlated Equilibrium and not necessarily to a Nash equilibrium as our algorithm guarantees (see \cite{panageas2023semi}). The design of no-regret algorithms for this setting began with \cite{awerbuch2004adaptive} where a $O(T^{2/3})$ regret bound was achieved for linear bandit optimization against an oblivious adversary via introducing the notion of barycentric spanners. Then \cite{mcmahan2004online} built on this to propose a $O(T^{3/4})$ algorithm for linear bandits against an \textit{adaptive} adversary using again the barycentric spanners. The work of \cite{gyorgy2007line} also proposed a scheme achieving the same $\mathcal{O}(T^{3/4})$ rate. Then, the optimal rates were obtained by \cite{dani2007price} who were the first to get $O(\sqrt{T})$ expected regret with the geometric hedge algorithm and closely followed by \cite{abernethy2009competing} who achieved the same expected regret using self-concordant barriers. Both these optimal rates were obtained with barriers (entropic or self-concordant) that diverge as points get close to the boundary of the strategy space. Unfortunately such barriers significantly degrade convergence rates to equilibria so we instead use $\ell_2$ regularization in our work. 

Relatively recent papers have focused on providing \textit{efficient} algorithms with \textit{high-probability} guarantees against adaptive adversaries \cite{braun_efficient_2016, lee_bias_2020, zimmert_return_2022}. See also \cite{cesa2012combinatorial} for a  general framework on combinatorial bandits. 

\smallskip
\smallskip

\noindent \textbf{Existence and Equilibrium Efficiency} In the context of Network Congestion games, the problem of equilibrium selection and efficiency has received a lot of interest. In \cite{KoutsoupiasP99WorstCE}, the notion of \textit{Price of Anarchy} (PoA) was introduced that captures the ratio between the worst-case equilibrium and the optimal path assignment. Later works provided bounds on PoA \cite{roughgarden2002bad,christodoulou,Fotakis2005226,Schafer10,Roughgarden09,MS01} for both atomic and non-atomic settings. Another line of work has to do with the computational complexity of computing Nash equilibria in Network congestion games. Notably in \cite{fabrikant04} it was shown that computing a Nash equilibrium in symmetric Network Congestion games can be done in polynomial time and also showed that in the asymmetric case, computing a pure Nash equilibrium belongs to class PLS (believed to be larger class than P). Further works appearred that investigate deterministic or randomized polynomial time approximation schemes for approximating a Nash equilibrium \cite{FKS09,FKS08,CFGS11,CFGS12,C019,CJ23,CGGPW23,GP23,GN22,KS21,K21,AB09}.

\section{Preliminaries and Results}
\label{sec:notation}
In this section, we provide the necessary background on congestion games, the bandit feedback model and we introduce the mathematical notation used throughout the paper. 

We start with some elmentary notation. For a matrix $A$ with singular value decomposition $A = U \Sigma V^T$, we define its pseudoinverse, denoted as $A^+$, as $A^+ := V \Sigma^{+} U^T$ where $\Sigma^{+}$ is obtained taking the inverse of each non-zero element of the diagonal matrix $\Sigma$. We denote by $\mathbbm{1}$ the vector whose coordinates are all equal to $1$. The notation $f = \mathcal{O}(g)$ signifies that there exists a constant $C > 0$, independent of problem parameters, such that $f \leq Cg$, the notation $f = \tilde{\mathcal{O}}(g)$ further hides $\log(T)$ terms.
\smallskip

\subsection{Congestion games}\label{sub:1}
In congestion games, there exist a set of $n$ selfish agent and a set of $m$ resources $E$. Each agent $i\in [n]$ can select a subset of the resources $p_i \in S_i \subseteq 2^{E}$. A selection of resources $p_i \in S_i$ is also called a \textit{pure strategy} while the set of all pure strategies $\mathcal{S}_i$ is also called \textit{strategy space}. A selection of pure strategies profiles $p = (p_1,\ldots,p_n) \in \calS_1\times\dots\times\calS_n$ is called \textit{joint strategy profile} and the set $\calS :=\calS_1\times\dots\times\calS_n$ is called \textit{joint strategy space}. For a joint strategy profile $p \in \calS$, we also use the notation $p = (p_i, p_{-i})$ to isolate (only in syntax) the strategy $p_i$ of agent $i$ from the rest of the strategies $p_{-i}$ of the other agents.

Given $p=(p_1, \dots, p_n) \in \calS$, the \textit{load} of resource $e \in E$, denoted as $\ell_e(p)$, equals 
\[
\ell_e(p) = \sum_{i=1}^{n} \mathbbm{1}\br{e \in p_i}.
\]
and corresponds to the number of agents who have selected $e$ in their pure strategy. Each resource is additionally associated with a non-negative, non-decreasing \textit{congestion cost function} $c_e: \mathbb{N} \rightarrow [0, c_{\max}]$ that associates a cost $c_e(\ell)$ for a given load $\ell$. For a joint strategy profile $p = (p_i,p_{-i}) \in \mathcal{S}$, the cost of agent $i \in [n]$ equals,
\[
C_i(p_i, p_{-i}) = \sum_{e \in p_i} c_e(\ell_e(p_i, p_{-i})).
\]
and captures the congestion cost $c_e(\ell_e(p))$ of using resource $e \in p_i$.

%This cost introduces a competitive element between the agents for access to uncongested resources and an equilibrium is reached when the following holds.

\begin{definition}[Nash equilibrium]\label{d:pNash} A joint strategy profile $p =(p_1, \dots, p_n)\in \mathcal{S}$ is called an $\epsilon$-approximate pure Nash equilibrium if and only if for all agents $i\in[n]$,
\[
C_i(p_i, p_{-i}) \leq C_i(p'_i, p_{-i}) + \epsilon~~~\text{for any } p_i' \in \calS_i
\]
\end{definition}

To simplify notation we note that a pure strategy $p_i \in S_i$ can also be viewed as a $0/1$ vector $x^{p_i} \in \{0,1\}^m$. Moreover given a joint strategy profile $p = (p_i,p_{-i})\in \mathcal{S}_i$, we can construct a cost vector $c(\ell(p)) \in \mathbb{R}^m$ where $c_e(\ell(p)) = c_e(\ell_e(p_i,p_{-i}))$. Then the cost of agent $i \in [n]$ can be concisely described by an inner product as,
\[
    C_i(p_i, p_{-i}) = \sum_{e \in p_i} c_e(\ell_e(p_i, p_{-i})) = \innerprod{c(\ell(p))}{{p_i}}.
\]
\noindent An agent $i \in [n]$ can also select a probability distribution over its pure strategies $\mathcal{S}_i$. Such a probability distribution $\pi_i \in \Delta(\mathcal{S}_i)$ is called a \textit{mixed strategy}.

\begin{definition}[Expected cost]
Given joint mixed strategy profile $\pi = (\pi_i, \pi_{-i})$, the expected cost of agent $i$, equals 
\[
C_i(\pi_i, \pi_{-i}) := \E_{p \sim (\pi_i, \pi_{-i})}\bs{C_i(p)}
\]
\end{definition}

\noindent The notion of Nash Equilibrium provided in Definition~\ref{d:pNash} can be naturally extended in the context of mixed strategies.

\begin{definition}[Mixed Nash equilibrium]
A mixed joint strategy profile $\pi := (\pi_1,\ldots,\pi_n) \in \Delta(\calS_`) \times \dots \times \Delta(\calS_n)$ is called an $\epsilon$-approximate mixed Nash equilibrium if and only if for all agents $i \in [n]$, 
    \[
    C_i(\pi_i, \pi_{-i}) \leq C_i(\pi_i', \pi_{-i})+ \epsilon~~~~\text{for any }\pi_i' \in \Delta(\calS_i).
    \]

\end{definition}
\smallskip
\noindent \textbf{Network Congestion Games} An important special case of Congestion Games are the so-called \textit{Network Congestion Games}~\cite{fabrikant04}. In Network Congestion Games there exits an underlying directed graph $G(V,E)$ with the edges $E$ correspond to the available resources. Each agent $i \in [n]$ is associated with a \textit{sink node $s_i \in V$} and a \textit{target node $t_i \in V$} while its strategy space $\mathcal{S}_i\subseteq E$ equals,
\[\mathcal{S}_i = \{\text{all }(s_t,t_i)\text{-paths in the graph }G(V,E).\}\]
A  directed graph $G(V,E)$ is called \textit{Directed Acyclic Graph} (DAG) in case there are no cycles in $G(V,E)$.  The special structure of network games over DAGs allows for computationally efficient algorithms.

%example of Congestion Games with applications in routing problems are \textit{Network Congestion Games} where the set of resources correspond to the edges of a graph. 
%\begin{definition}
%    [Network Congestion Games] In network congestion games, we are given a graph $G=(V, E)$, with $m = |E|$. Each agent $i$ has an associated source node $s_i \in V$ and a sink or termination node $t_i \in V$. The set of edges in the graph constitutes the set of ressources in the congestion game. A path connecting a source $s_i$ node to a sink node $t_i$ in $G$ constitutes constitutes a viable strategy for agent $i$. For each agent $i$, we denote by $\calS_i$, the strategy space corresponding to set of all paths connecting $s_i$ to $t_i$. For each agent $i$, we can restrict our attention to the subgraph $G_i = (V_i, E_i)$ where $V_i$ and $E_i$ corresponds to the nodes and edges appearing in at least one path in $\calS_i$.
%\end{definition}
%\noindent We will also focus on smaller class of Network Congestion Games where the set of resources correspond to the edges of a directed acyclic graph defined next.

\subsection{Bandit Dynamics in Congestion Games}
When a congestion game is repeatedly played over $T$ rounds, each agent $i$ selects a new mixed strategy $\pi_i^t \in \Delta(\mathcal{S}_i)$ at each round $t \in [T]$ in their attempt to minimize their overall cost. The notion of \textit{bandit feedback} captures the fact that at the end of each round, each agent $i\in [n]$ is only informed on the overall cost of their selected strategy~\cite{du22}.   
\begin{boxF}
\noindent \textbf{Bandit Dynamics in Congestion Games} \\
\\
At each round $t = 1, \dots T$,
\begin{enumerate}
    \item Each agent $i \in [n]$ selects a mixed strategy, $\pi_i^t \in \Delta(\calS_i)$.
    \item Each agent $i \in [n]$ samples a pure strategy $p_i^t \sim \pi_i^t$ and incurs cost
    \[
    C_i(p_i^t, p^t_{-i}) = \sum_{e \in p_i^t} c_e(\ell_e(p^t_i, p^t_{-i}))
    \]
    \item Each agent $i \in [n]$ only observes $C_i(p_i^t, p^t_{-i})$ (overall cost) and updates its mixed strategy $\pi_i^{t+1} \in \Delta(\calS_i)$. 
\end{enumerate}
\end{boxF}
The only feedback received by agent $i$ after picking $p_i^t$ is the cost $C_i(p_i^t, p^t_{-i})$. This limited feedback is referred to as \textit{bandit feedback}~\cite{du22}. This contrasts with the \textit{full information feedback} where the agents observes the cost of \textit{all} the available resources $\{c_e(\ell(p^t)):~\text{for all }e\in E\}$ \cite{H17} and the \textit{semi-bandit feedback} setting where the agent observes the cost of each of the individual resources it has selected $\{c_e(\ell(p^t)):~\text{for all }e\in p_i^t\}$~\cite{panageas2023semi}.

Each agent $i\in [n]$ tries to selects the mixed strategies $\pi_i^t \in \Delta(\mathcal{S}_i)$ so as to minimize their overall cost over the $T$ rounds of play. Since the cost of the edges are determined by the strategies of the other agents that are unknown to agent $i$, the agent $i$ can assume that the cost of each agents are selected in an arbitrary and adversarial manner. Recalling that the cost $C_i(p_i^t, p^t_{-i})$ is linear in $p_i^t$, the problem at hand is a particular instance of the Online Resource Selection under Bandit Feedback~\cite{AB09}.
\begin{boxF}
\textbf{Online Resource Selection under Bandit Feedback} \\
\\
At each round $t=1, \dots, T$,
    \begin{enumerate}
        \item Agent $i$ picks a mixed strategy $\pi_i^t \in \Delta(\calS_i)$.
        
        \item An adversary picks a cost vector $c^t \in \R^m$, with $\|c^t\|_\infty \leq c_{\max}$.

        \item Agent $i$ samples a pure strategy $p_i^t \sim \pi_i^t$ and incurs cost $
        l_i^t = \innerprod{c^t}{p_i^t}.$
        \item Agent $i$ observes $l_i^t$ and updates its distribution $\pi_i^{t+1} \in \Delta(\calS_i)$.
    \end{enumerate}
\end{boxF}

The agent's goal is therefore to output a sequence of strategies $p_i^{1:T}$ that minimize the incurred costs against \textit{any} adversarially chosen sequence of cost vectors $c^{1:T}$ where $c^t$ can even depend on $\pi_i^{1:t-1}$. The quality of a sequence of play $p_i^{1:T}$ is measured in terms of \textit{regret}, capturing its suboptimality with respect to the best fixed strategy.

\begin{definition}[Regret] The regret of the sequence $p_i^{1:T}$ with respect to the cost sequence $c^{1:T}$ equals
    \[
\Regret\br{p_i^{1:T}, c^{1:T}} :=  \sum_{t = 1}^{T} \innerprod{c^t}{{p_i^t}} -  \min_{u \in \calS_i} \sum_{t = 1}^{T} \innerprod{c^t}{u}.
\]
In other words, regret compares the overall cost of a sequence of pure strategies with the cost of the best fixed strategy in hindsight.
\end{definition}

\noindent As already mentioned there are various online learning algorithms that even under the bandit feedback model are able guarantee sublinear regret almost surely. In the online learning literature such algorithms are called \textit{no-regret}~\cite{awerbuch2004adaptive, VHK07,GLLO07,BCK12,CL12,vempala, NB13, ASL14}.

%In other words, the desired outcome for agent $i$ is an update rule, or algorithm, for selecting $\pi_i^t$ that will output a sequence of plays $p_i^{1:T}$ that incurs an overall cost that is comparable to the one incurred by a fixed strategy \textit{selected in hindsight}, after having observed the sequence $c^{1:T}$. Such algorithms are said to be \textit{no-regret}, as defined below.

\begin{definition}[No-Regret]
    An online learning algorithm $\mathcal{A}$ for Linear Bandit Optimization is called no-regret if and only if for any cost vector sequence $c^1,\ldots,c^T$, $\mathcal{A}$ produces a sequence of mixed strategies $\pi_i^1,\ldots,\pi_i^{T}$ ($\pi_i^{t+1} = \mathcal{A}(l_i^1,\ldots,l_i^t)$) such that with high probability 
    $\Regret\br{p_i^{1:T}, c^{1:T}} = o(T)$.
\end{definition}

\noindent No-regret algorithms are considered to be a very natural way of modeling the behavior of selfish agents in repeatedly played games since they guarantee that the time-averaged cost approaches the time-average cost of the best fixed actions, no matter the actions of the other agents~\cite{EMN09}.

%\begin{definition}[No-Regret]
%    An algorithm that selects the plays $p_i^{1:T}$ against an adaptive adversary that selects the costs $c^{1:T}$ is said to be no-regret if, with high probability,
%    \[
%    \lim_{T\rightarrow \infty} \frac{1}{T}\Regret\br{p_i^{1:T}, c^{1:T}} = 0
%    \]
 %\end{definition}

\subsection{Our Results}\label{sub:3}
The main contribution of our work is the design of a no-regret online learning algorithm (under bandit feedback) with the property that if adopted by all agents of a congestion game, then the overall strategy profile converges to a Nash Equilibrium. As already mentioned, our results provide an affirmative answer to the open question posed by \cite{du22}. We also remark that despite the fact that various no-regret have been proposed over the years they do not guarantee convergence to Nash Equilibrium once adopted by all agents of a congestion game. The no-regret property of our algorithm is formally stated and established in Theorem~\ref{t:no_regret_main} while its convergence properties to Nash Equilibrium are presented in Theorem~\ref{t:nash_main}.

\begin{theorem}\label{t:no_regret_main}
\label{thm:bigRegretinformal}
There exists a no-regret algorithm, Bandit Gradient Descent with Caratheodory Exploration (BGD-CE) such that for any cost vector sequence $c_1,\ldots,c_T \in [0,c_{\mathrm{max}}]^m$ and $\delta > 0$
\[\Regret\br{p_i^{1:T}, c^{1:T}} := \sum_{t=1}^T \sum_{e \in p_i^t}c^t_e - \min_{p^\star_i \in \mathcal{S}_i} \sum_{t=1}^T \sum_{e \in p_i^\star}c^t_e \leq \tilde{\mathcal{O}}\left( m^{5.5}c_{\mathrm{max}}^2 T^{4/5}\sqrt{\log{\frac{1}{\delta}}}\right)\]
with probability $1-\delta$.
\end{theorem}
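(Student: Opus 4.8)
The plan is to analyze BGD-CE as projected Online Gradient Descent run on \emph{estimated} cost vectors, where the iterate $x^t$ is constrained to the $\mu$-Bounded Away Polytope $K_\mu$ and the played pure strategy $p_i^t$ is drawn from the Caratheodory decomposition of $x^t$, which by construction places total weight at least $\mu$ on the barycentric spanner $\mathcal{B}$. The first ingredient I would set up is the cost estimator. Let $B$ be the matrix whose columns are the spanner vectors; whenever the sampled strategy coincides with a spanner element $b_j\in\mathcal{B}$ (an event of probability $q_j \gtrsim \mu/m$), the scalar feedback $l_i^t=\innerprod{c^t}{b_j}$ yields $\hat c^t = \tfrac{1}{q_j}\,l_i^t\,(B^+)^\top e_j$. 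Using $\sum_j e_je_j^\top = I$ on $\mathrm{span}(\mathcal{B})$ gives conditional unbiasedness $\E\bs{\hat c^t \mid \mathcal{F}_{t-1}}=c^t$, and since this uses only the fresh randomness of round $t$, unbiasedness survives against the adaptive adversary. The exact $1$-barycentric property of $\mathcal{B}$ bounds $\norm{B^+}$ by $\mathrm{poly}(m)$, so together with $\abs{l_i^t}\le c_{\max}m$ and $q_j\gtrsim\mu/m$ I obtain two quantitative bounds that drive everything: the almost-sure bound $\norm{\hat c^t}^2 \le \mathrm{poly}(m)\,c_{\max}^2/\mu^2$ and the (much smaller) conditional second moment $\E\bs{\norm{\hat c^t}^2 \mid \mathcal{F}_{t-1}} \le \mathrm{poly}(m)\,c_{\max}^2/\mu$.

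With the estimator in hand I would split the regret against the best fixed pure strategy $p_i^\star$ by inserting the played iterate $x^t$, the estimated cost $\hat c^t$, and a comparator $u\in K_\mu$ chosen close to $p_i^\star$:
\[
\Regret\br{p_i^{1:T}, c^{1:T}} = \sum_{t} \innerprod{c^t}{p_i^t - x^t} + \sum_t \innerprod{c^t - \hat c^t}{x^t - u} + \sum_t \innerprod{\hat c^t}{x^t - u} + \sum_t \innerprod{c^t}{u - p_i^\star}.
\]
Three of these four terms are routine. The first sum is a bounded martingale difference sequence, since $\E\bs{p_i^t\mid\mathcal{F}_{t-1}}=x^t$, so Azuma--Hoeffding controls it at $\tilde{\mathcal{O}}(c_{\max}m\sqrt{T\log(1/\delta)})$. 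The third sum is handled by the deterministic projected-OGD guarantee $\sum_t\innerprod{\hat c^t}{x^t-u}\le \tfrac{\norm{x^1-u}^2}{2\eta}+\tfrac{\eta}{2}\sum_t\norm{\hat c^t}^2$, and here I would deliberately bound each $\norm{\hat c^t}^2$ by its almost-sure value, yielding $\tfrac{m}{2\eta}+\tfrac{\eta T\,\mathrm{poly}(m)c_{\max}^2}{2\mu^2}$; the appearance of $\mu^{-2}$ (rather than the in-expectation $\mu^{-1}$) is precisely what prevents a $\sqrt T$ rate. The last sum is the price of exploration: by the geometry of $K_\mu$ one can take $u=(1-\gamma)p_i^\star+\gamma\bar b$ with $\gamma\asymp\mu$ and $\bar b$ a spanner average, so $\norm{u-p_i^\star}_1\le\mathrm{poly}(m)\,\mu$ and the term is at most $\mathrm{poly}(m)\,c_{\max}\,\mu\,T$.

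The crux, and the step I expect to be hardest, is the second sum $\sum_t\innerprod{c^t-\hat c^t}{x^t-u}$. This is again a martingale by conditional unbiasedness, but its increments are large, of order $\norm{\hat c^t}\,\mathrm{diam}(K_\mu)\sim\mathrm{poly}(m)c_{\max}/\mu$, so Azuma is hopelessly lossy. I would instead invoke Freedman's inequality: the summed conditional variances are bounded via the \emph{second-moment} estimate by $\sum_t\E\bs{\innerprod{\hat c^t}{x^t-u}^2\mid\mathcal{F}_{t-1}}\le T\,\mathrm{poly}(m)c_{\max}^2/\mu$, while the worst-case increment (controlled by the almost-sure bound) contributes only a $T$-independent additive term $\mathrm{poly}(m)c_{\max}\log(1/\delta)/\mu$. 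Freedman then delivers a high-probability bound of the shape $\sqrt{T\,\mathrm{poly}(m)c_{\max}^2\log(1/\delta)/\mu}$. The main technical difficulty is doing this bookkeeping rigorously against an adaptive adversary: one must verify that $\hat c^t$ really is $\mathcal{F}_{t-1}$-conditionally unbiased, that the variance proxy is adapted, and that the quantitative control on $B^+$ from the exact $1$-barycentric spanner is strong enough to make both the variance and the maximal-increment terms polynomial in $m$.

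Collecting the four contributions gives a high-probability bound of the schematic form
\[
\frac{m}{\eta} + \frac{\eta T\,\mathrm{poly}(m)c_{\max}^2}{\mu^2} + \sqrt{\frac{T\,\mathrm{poly}(m)c_{\max}^2\log(1/\delta)}{\mu}} + \mathrm{poly}(m)c_{\max}\,\mu\, T,
\]
and the final step is to tune $\eta$ and $\mu$ as decreasing powers of $T$. Choosing $\eta\asymp T^{-3/5}$ and $\mu\asymp T^{-1/5}$ makes the second and fourth terms dominate at order $T^{4/5}$ (this large value of $\mu$ is what the downstream equilibrium analysis of Theorem~\ref{t:nash_main} demands), which after tracking the $\mathrm{poly}(m)$ factors through $\norm{B^+}$ yields the stated $\tilde{\mathcal{O}}\br{m^{5.5}c_{\max}^2 T^{4/5}\sqrt{\log(1/\delta)}}$ rate; alternatively, the regret-optimal tuning $\eta\asymp T^{-3/4},\ \mu\asymp T^{-1/4}$ balances the first, second, and fourth terms at $T^{3/4}$, giving the improved bound recorded in Remark~\ref{remark:regret}.
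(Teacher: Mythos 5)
Your architecture coincides with the paper's own proof: the same four-way split (the paper's first concentration lemma, Lemma~\ref{lem:first-concentration}, handles your first sum via Azuma--Hoeffding; Lemma~\ref{lem:second-concentration} your second; the moving-OGD lemma, Lemma~\ref{lem:OGD}, your third together with the comparator-shift/exploration term), the same use of the \emph{almost-sure} bound $\norm{B_i^\top\hat c^t}\lesssim \mathrm{poly}(m)c_{\max}/\mu_t$ inside the OGD term, and the same two tunings ($\mu\asymp T^{-1/5}$ for the NE-compatible $T^{4/5}$ rate, $\mu\asymp T^{-1/4}$ for the $T^{3/4}$ rate of Remark~\ref{remark:regret}). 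Two of your choices deviate in detail. First, your estimator (importance-weighted update only on rounds where a spanner element is drawn) differs from the paper's $\hat c^t = l_i^t M_{i,t}^+ p_i^t$, which extracts information from every sample; both satisfy the needed moment bounds, but note your claim $\E\bs{\hat c^t \mid \mathcal{F}_{t-1}} = c^t$ is literally false --- the expectation is the projection of $c^t$ onto $\mathrm{span}(\mathcal{B})$, and unbiasedness holds only for inner products against points of $\calX_i$ (the paper's ``orthogonal bias'' property, Lemma~\ref{lem:orthbias}), which fortunately is all you use. Second, you replace Azuma with Freedman on the estimation martingale, asserting Azuma is ``hopelessly lossy''; in fact the paper uses plain Azuma there: with increments $O(\mathrm{poly}(m)c_{\max}/\mu_t)$ and $\mu_t\asymp t^{-1/5}$, Azuma yields $\mathrm{poly}(m)c_{\max}\sqrt{\textstyle\sum_t \mu_t^{-2}\log(1/\delta)} = O(T^{7/10})$, comfortably below $T^{4/5}$ (and exactly $T^{3/4}$ under the $\mu\asymp T^{-1/4}$ tuning, still adequate). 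Freedman is valid and tighter, but buys nothing in the final rate since the OGD and exploration terms dominate.

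The genuine gap is the missing union bound over comparators. Your second and fourth sums are defined relative to $u$, a perturbation of $p_i^\star$; against the adaptive adversary you explicitly allow, $c^t$ depends on the realized play, so $p_i^\star = \argmin_{p\in\calS_i}\sum_t\innerprod{c^t}{p}$ is an $\mathcal{F}_T$-measurable random variable, not a fixed point. Your martingale-difference argument for $\sum_t\innerprod{c^t-\hat c^t}{x^t-u}$ --- whether run through Azuma or Freedman --- is only valid for a fixed (or predictable) $u$. The paper repairs exactly this: it proves the high-probability bound for each fixed $u\in\calS_i$ with failure probability $\delta/\abs{\calS_i}$ and union bounds, paying $\sqrt{\log\abs{\calS_i}}\le\sqrt m$ inside the logarithm; that factor is part of the $m$-exponent bookkeeping you gloss as $\mathrm{poly}(m)$. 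Without this step (or a covering/uniform-martingale substitute), your concentration bound does not apply to the comparator you actually need, and the proof as written fails for adaptive adversaries.
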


\begin{theorem}[Converge to NE]\label{thm:convergence_to_nash}
Let $\pi^1,\ldots, \pi^T \in \Delta(S_1)\times\ldots \times \Delta(S_1)$ the sequence of strategy profiles produced if all agents adopt Bandit Gradient Descent with Caratheodory Exploration (BGD-CE). Then for all $T \geq \Theta\left( n^{13}m^{13}/\epsilon^5 \right)$,
\[\frac{1}{T}\E \left[\sum_{t=1}^T\max_{i \in [n]}\left[c_i(\pi^{t}_i, \pi^{t}_{-i}) - \min_{\pi_i \in \Delta(\mathcal{P}_i)} c_i(\pi_i, \pi_{-i}^{t})\right]\right] \leq \epsilon.\]
\label{t:nash_main}
\end{theorem}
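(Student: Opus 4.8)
The plan is to exploit the fact that congestion games are exact potential games in the sense of Rosenthal~\cite{rosenthal1973class}: the Rosenthal potential $\Phi(p) = \sum_{e \in E}\sum_{k=1}^{\ell_e(p)} c_e(k)$ satisfies $C_i(p_i,p_{-i}) - C_i(p_i',p_{-i}) = \Phi(p_i,p_{-i}) - \Phi(p_i',p_{-i})$, and I would extend it multilinearly to mixed profiles by $\Phi(\pi) = \E_{p \sim \pi}\bs{\Phi(p)}$. First I would rewrite the per-agent Nash gap. Taking expectations over $p_{-i} \sim \pi_{-i}^t$ in the potential identity shows that $p_i \mapsto \E_{p_{-i}}\bs{C_i(p_i,p_{-i})}$ and $p_i \mapsto \E_{p_{-i}}\bs{\Phi(p_i,p_{-i})}$ differ by a constant, so, using that $c_i(\cdot,\pi_{-i}^t)$ is linear in its first argument, the gap $c_i(\pi_i^t,\pi_{-i}^t) - \min_{\pi_i} c_i(\pi_i,\pi_{-i}^t)$ equals the Frank--Wolfe gap $\max_{\pi_i}\innerprod{\bar c_i^t}{\pi_i^t - \pi_i}$ of the potential in block $i$, where $\bar c_i^t = \E_{p_{-i}^t}\bs{c(\ell(\cdot,p_{-i}^t))} = \nabla_{\pi_i}\Phi(\pi^t)$. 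The crucial structural consequence is that each agent's bandit-observed cost is an (approximately unbiased) estimate of a \emph{single} global potential's partial gradient. Since $\max_i \le \sum_i$, it suffices to bound $\tfrac1T \E\bs{\sum_t \sum_i \br{c_i(\pi^t_i,\pi^t_{-i}) - \min_{\pi_i} c_i(\pi_i,\pi_{-i}^t)}}$, which is where the polynomial dependence on $n$ originates.

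Next I would read the joint dynamics as a single projected-gradient descent on $\Phi$. BGD-CE has each agent run Online Gradient Descent, $\pi_i^{t+1} = \Pi_{\mathcal{K}_\mu}\br{\pi_i^t - \eta\,\hat c_i^t}$, where $\hat c_i^t$ is the barycentric-spanner least-squares estimate of $\bar c_i^t$ and $\mathcal{K}_\mu$ is the $\mu$-bounded-away polytope. Because $\E\bs{\hat c_i^t \mid \pi^t} \approx \bar c_i^t = \nabla_{\pi_i}\Phi(\pi^t)$, the simultaneous update is, in expectation, one projected-gradient step for $\Phi$ over $\prod_i \mathcal{K}_\mu$. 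Since $\Phi(\pi)$ is multilinear on a bounded domain it is $L$-smooth with $L = \mathrm{poly}(n,m,c_{\max})$, so the descent lemma yields $\E\bs{\Phi(\pi^{t+1}) - \Phi(\pi^t)} \le -\tfrac{1}{c\eta}\E\norm{\pi^{t+1}-\pi^t}^2 + \eta\,(\text{variance}) + (\text{bias})$. Telescoping over $t$ and using $0 \le \Phi \le n\,m\,c_{\max}$ bounds the accumulated squared movement $\sum_t \E\norm{\pi^{t+1}-\pi^t}^2$.

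The two error terms are governed exactly by the exploration scheme. Conditioning the least-squares estimator through the $1$-barycentric spanner $\mathcal{B}$ (Algorithm~\ref{alg:basis}) keeps its covariance well-behaved, and the $\mu$-bounded-away constraint guarantees each spanner direction is played with probability at least $\mu$, so $\E\norm{\hat c_i^t}^2 = \mathcal{O}\br{\mathrm{poly}(m)\,c_{\max}^2/\mu}$; this controls the variance contribution. The bias arises because updates are confined to $\mathcal{K}_\mu$ rather than the full simplex, which perturbs the strategy, the Frank--Wolfe gap, and the gradient identity each by $\mathcal{O}\br{\mu\cdot\mathrm{poly}(n,m,c_{\max})}$. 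These are precisely the two estimates that $\mu$-bounded polytopes were introduced to balance in the semi-bandit case~\cite{panageas2023semi}, now instantiated for full bandit feedback through the spanner.

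Finally I would convert accumulated movement into a Nash-gap bound: projected-gradient stationarity gives $c_i(\pi^t_i,\pi^t_{-i}) - \min_{\pi_i} c_i(\pi_i,\pi_{-i}^t) \le \tfrac{1}{\eta}\norm{\pi_i^{t+1}-\pi_i^t}\cdot\mathrm{diam}(\mathcal{K}_\mu) + (\text{estimation and bias error})$, and Cauchy--Schwarz turns $\sum_t \norm{\pi^{t+1}-\pi^t}$ into $\sqrt{T\sum_t \norm{\pi^{t+1}-\pi^t}^2}$. Combining with the telescoped potential bound yields an inequality of the form $\tfrac1T \E\bs{\sum_t \sum_i \text{gap}_i^t} \le \mathcal{O}\br{\tfrac{1}{\sqrt{\eta T}}} + \mathcal{O}\br{\eta\,\mathrm{poly}(n,m,c_{\max})/\mu} + \mathcal{O}\br{\mu\,\mathrm{poly}(n,m,c_{\max})}$, and optimizing $\eta$ and $\mu$ in $n,m,\epsilon$ lets me read off the horizon $T = \Theta\br{n^{13}m^{13}/\epsilon^5}$. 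I expect the main obstacle to be the quantitative tension in the exploration parameter $\mu$: the bandit variance forces $\mu$ large while the exploration bias and the degraded best-response signal force $\mu$ small, and resolving this trade-off while tracking every $\mathrm{poly}(n,m)$ factor through a \emph{non-convex} potential landscape—where only stationarity, not global optimality, is available—is the technical heart of the argument and is what separates this bandit result from the semi-bandit analysis of~\cite{panageas2023semi}.
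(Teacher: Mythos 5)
Your scaffolding matches the paper's: the Rosenthal potential and its multilinear extension, the gradient identity $C_i(\pi_i,\pi_{-i}) = \innerprod{\partial \Phi(x)/\partial x_i}{x_i}$ (Lemma~\ref{lem:linearizedcosts}), the spanner estimator with second moment $O(nm^4c_{\max}^2/\mu_t)$ and unbiasedness (Lemma~\ref{thm:estimator_props} --- note it is \emph{exactly} unbiased on $\calX_i$, not merely approximately), the $O(\mu\cdot\mathrm{poly}(n,m,c_{\max}))$ exploration bias, and the reduction from Nash gaps to first-order stationarity of the potential (Proposition~\ref{lem:stat2Nash}). The gap is in your convergence engine. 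The step ``projected-gradient stationarity gives $\mathrm{gap} \leq \frac{1}{\eta}\norm{\pi_i^{t+1}-\pi_i^t}\cdot\mathrm{diam}$'' is valid only when the step is taken with the \emph{true} gradient. Your iterates move by $\pi^{t+1} = \Pi\bs{\pi^t - \eta\,\hat{c}^t}$, and relating the true projected-gradient displacement $\norm{\pi^t - \Pi\bs{\pi^t - \eta\nabla\Phi(\pi^t)}}$ to the realized movement via nonexpansiveness of $\Pi$ forces the additive term $\eta\norm{\hat{c}^t - \nabla\Phi(\pi^t)}$; after dividing by $\eta$ this contributes $\E\norm{\hat{c}^t - \nabla\Phi(\pi^t)}$ \emph{per round}, which by Jensen is of order $\sqrt{nm^4}\,c_{\max}/\sqrt{\mu_t}$. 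This term carries no factor of $\eta$, does not vanish as the step size shrinks, and in fact diverges as $\mu_t \to 0$ --- the bandit estimator's variance is $\Theta(1/\mu_t)$ and the algorithm has no mini-batching to reduce it in first moment. Your telescoped potential bound is fine (there the gradient error enters via Young's inequality multiplied by $\eta$), but the movement-to-gap conversion is broken, and your claimed final form $O(1/\sqrt{\eta T}) + O(\eta\,\mathrm{poly}/\mu) + O(\mu\,\mathrm{poly})$ silently drops the irreducible $\sqrt{\mathrm{poly}/\mu}$ term.

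This failure mode is precisely what the paper's Moreau-envelope machinery is built to avoid. Lemma~\ref{lem:gaptoM} bounds the stationarity gap by $\lambda\norm{\nabla M^{t}_{\lambda\tilde{\Phi}}(\alpha^t)}_2$, a deterministic function of the \emph{current} iterate that makes no reference to the realized step; Theorem~\ref{thm:sgd} then bounds the average Moreau-gradient norm by a descent argument in which the noise appears only through the inner product $\E_t\bs{\innerprod{\nabla M^{t}_{\lambda\tilde{\Phi}}(\alpha^t)}{\nabla_t}} = \innerprod{\nabla M^{t}_{\lambda\tilde{\Phi}}(\alpha^t)}{\nabla\tilde{\Phi}(\alpha^t)} \geq \frac{1}{4}\norm{\nabla M^{t}_{\lambda\tilde{\Phi}}(\alpha^t)}_2^2$ (exact unbiasedness against the $\mathcal{F}_t$-measurable envelope gradient, Lemma~\ref{lem:biased-grad}) plus a second-moment term $\gamma_t^2\E\norm{\nabla_t}^2 = O(\gamma_t^2/\mu_t)$, which the schedule $\gamma_t \sim \sqrt{\mu_t/t}$ tames. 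Two further details your fixed-$\mu$ treatment glosses over: the algorithm uses a decreasing $\mu_t$, and the paper's telescoping of the time-varying envelopes needs the nesting $\calD^{\mu_t} \subset \calD^{\mu_{t+1}}$. Repairing your argument essentially amounts to reconstructing this envelope analysis; the movement-based conversion cannot be patched at the stated variance level.
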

\noindent We note that the exact same notion of \textit{best-iterate convergence} (as in Theorem~\ref{thm:convergence_to_nash}) is considered in \cite{du22,leonardos2022global,DWZJ22,Ana22,panageas2023semi}. In Corollary~\ref{c:markov} we present a more clear interpretation of Theorem~\ref{thm:convergence_to_nash}.
\begin{corollary}\label{c:markov}
In case all agents adopt BGD-CE for $T\geq \Theta(m^{13}m^{13}/\epsilon^5)$ then with probability $\geq 1-\delta$, 
\begin{itemize}
\item $(1-\delta)T$ of the strategy profiles $\pi^1,\ldots, \pi^T$ are $\epsilon/\delta^2$-approximate Mixed NE.
\item $\pi^{t}$ is an $\epsilon/\delta$-approximate Mixed NE once $t$ is sampled uniformly at random in $\{1,\ldots,T\}$    
\end{itemize}
\end{corollary}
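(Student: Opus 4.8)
The plan is to read Corollary~\ref{c:markov} as a purely probabilistic consequence of Theorem~\ref{thm:convergence_to_nash}, obtained through two applications of Markov's inequality (which is what the label \texttt{c:markov} anticipates). First I would introduce the per-round \emph{exploitability}
\[
X_t := \max_{i \in [n]}\left[c_i(\pi^{t}_i, \pi^{t}_{-i}) - \min_{\pi_i \in \Delta(\mathcal{P}_i)} c_i(\pi_i, \pi_{-i}^{t})\right],
\]
and record two elementary facts. First, $X_t \geq 0$, since for every agent $i$ the minimization over $\pi_i \in \Delta(\mathcal{P}_i)$ ranges over a set containing $\pi_i^t$, so each bracket is nonnegative. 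Second, by definition of a mixed Nash equilibrium, the profile $\pi^t$ is an $\alpha$-approximate Mixed NE if and only if $X_t \leq \alpha$. With this notation Theorem~\ref{thm:convergence_to_nash} reads $\E\!\left[\frac{1}{T}\sum_{t=1}^T X_t\right] \leq \epsilon$, which is the only analytic input needed.

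For the second bullet I would draw $t$ uniformly at random from $\{1,\ldots,T\}$ and view $X_t$ as a single nonnegative random variable over the joint randomness of the algorithm and of the index $t$. Its expectation is exactly the averaged quantity bounded in Theorem~\ref{thm:convergence_to_nash}, namely $\E[X_t] = \E\!\left[\frac{1}{T}\sum_{s=1}^T X_s\right]\leq \epsilon$. Markov's inequality then yields $\Prob{X_t > \epsilon/\delta} \leq \delta\,\E[X_t]/\epsilon \leq \delta$, so with probability at least $1-\delta$ the sampled profile $\pi^t$ satisfies $X_t \leq \epsilon/\delta$, i.e.\ it is an $\epsilon/\delta$-approximate Mixed NE.

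For the first bullet I would apply Markov a first time to the average $Y := \frac{1}{T}\sum_{t=1}^T X_t$, now over the internal randomness of the algorithm alone: since $\E[Y]\leq \epsilon$, we get $\Prob{Y > \epsilon/\delta}\leq \delta$, hence $Y \leq \epsilon/\delta$ on an event $\mathcal{E}$ of probability at least $1-\delta$. On $\mathcal{E}$ I would then run a deterministic counting (second Markov) argument over the uniform empirical measure on the rounds: letting $N$ be the number of indices $t$ with $X_t > \epsilon/\delta^2$, nonnegativity of the $X_t$ gives $Y \geq \frac{N}{T}\cdot\frac{\epsilon}{\delta^2}$, and combining with $Y \leq \epsilon/\delta$ forces $N/T \leq \delta$. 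Thus at least $(1-\delta)T$ of the rounds have $X_t \leq \epsilon/\delta^2$, i.e.\ are $\epsilon/\delta^2$-approximate Mixed NE, which is the claimed statement.

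There is no genuine analytic obstacle here; the whole argument is bookkeeping of two Markov inequalities. The only points requiring care are (i) correctly separating the two sources of randomness — the algorithm's internal coins, present in both bullets, and the extra uniform draw of $t$ used only in the second bullet — so that the expectations line up with the averaged bound of Theorem~\ref{thm:convergence_to_nash}; and (ii) the sign of $X_t$, since the nonnegativity $X_t \geq 0$ is exactly what licenses both Markov applications and the counting step.
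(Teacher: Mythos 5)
Your proposal is correct and takes essentially the same route as the paper: both define the per-round exploitability, apply Markov's inequality to the uniformly sampled round for the second bullet, and for the first bullet apply Markov to the time average $\frac{1}{T}\sum_t X_t$ and then run the deterministic counting argument $N\cdot\frac{\epsilon}{\delta^2}\leq T\cdot\frac{\epsilon}{\delta}$ to conclude $N\leq\delta T$. The only cosmetic difference is that you invoke the expectation bound of Theorem~\ref{thm:convergence_to_nash} directly, while the paper re-traverses the stationarity-to-Nash chain via Lemma~\ref{lem:stat2Nash} before applying Markov; your streamlined bookkeeping, including the explicit nonnegativity of $X_t$ that licenses both Markov steps, is sound.
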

\if 0
\begin{theorem}[Convergence to Nash]
\label{t:nash_main}
\textcolor{red}{ Let $\delta \in (0,1)$. Let all the agents adopt \texttt{Algorithm}~\ref{alg:bandit-alg} with step sizes $\gamma_t = \sqrt{\frac{c_{\max} \mu_t}{\theta n^3 m^3 t}}$ and biases $\mu_t = \frac{m^{4/5}n^{1/5}}{t^{1/5}c^{1/5}_{\max}}$ . We denote by $\pi^1, \dots, \pi^T$ the sequence of joint strategy profiles produced. Let $u$ be uniformly sampled from $[T]$, then, the joint strategy $\pi_u$ is a 
\[
\tilde{\mathcal{O}}\br{\frac{n^{2.7}m^{9/5}c_{\max}^{4/5}}{\delta}T^{-1/5}}-\text{mixed-Nash equilibrium},
\]
with probability $(1-\delta)$.}
\end{theorem}
\fi
\noindent The running time of $\mathrm{BGD}-\mathrm{CE}$ is exponential in general congestion games at which the strategy space $\mathcal{S}_i$ does admit any combinatorial structure. In Theorem~\ref{t:polynomial} we establish that for Network Congestion Games in Directed Acyclic Networks
$\mathrm{BGD}-\mathrm{CE}$ can be implemented in polynomial time.

\begin{theorem}\label{t:polynomial}
For Network Congestion Games over DAGs, $\mathrm{BGD}-\mathrm{CE}$ (Algorithm~\ref{alg:bandit-alg-dags}) can be implemented in polynomial time.
\end{theorem}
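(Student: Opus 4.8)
The plan is to show that, although the strategy space $\calS_i$ of $(s_i,t_i)$-paths may be of exponential size, every quantity that $\mathrm{BGD}-\mathrm{CE}$ manipulates admits a compact encoding of size $\mathrm{poly}(m)$, and every step can be carried out by polynomial-time convex-optimization and flow-decomposition routines. The starting point is to represent a mixed strategy $\pi_i \in \Delta(\calS_i)$ not by its (exponentially long) list of probabilities but by its expected incidence vector $x = \E_{p \sim \pi_i}[x^{p}] \in \R^m$, which lives in the description polytope $\mathcal{K}_i := \mathrm{conv}\br{\bc{x^{p} : p \in \calS_i}}$. For a DAG this polytope is exactly the $(s_i,t_i)$-flow polytope, which is classically known to admit an explicit $\mathcal{O}(m)$-constraint $H$-representation (flow conservation at every internal vertex together with nonnegativity); since the constraint matrix is totally unimodular, its vertices are precisely the $0/1$ path-incidence vectors. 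Thus throughout the run we maintain a single vector in $\R^m$ rather than a distribution over paths.

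Next I would verify that each operation of one round is polynomial in this encoding. First, the online gradient step is an $\mathcal{O}(m)$ vector update. Second, the projection onto the $\mu$-Bounded-Away Polytope is the key subroutine: by our extension this set is $\bc{\mu z + (1-\mu) y : z \in \mathrm{conv}(\mathcal{B}),\, y \in \mathcal{K}_i}$, and since both $\mathcal{K}_i$ and the simplex $\mathrm{conv}(\mathcal{B})$ spanned by the barycentric spanner $\mathcal{B}$ admit $\mathrm{poly}(m)$-size descriptions, so does their Minkowski-style combination; projecting onto it is a convex quadratic program over $\mathrm{poly}(m)$ variables and constraints, solvable in polynomial time by interior-point methods. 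Third, to sample the pure strategy $p_i^t \sim \pi_i^t$ and to build the estimator, I would produce an explicit sparse convex decomposition of the current point into paths by running a standard flow-decomposition on the DAG; this returns at most $m$ paths in polynomial time (Carathéodory guarantees $m+1$ suffice, and flow decomposition realizes this constructively), after which sampling a path is immediate.

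The remaining ingredient is the bandit cost estimator and the spanner it relies on. The estimator is formed from the $m \times m$ matrix $B$ whose rows are the incidence vectors of $\mathcal{B}$ together with its pseudoinverse $B^{+}$, so constructing it is $\mathcal{O}(m^3)$ linear algebra once $\mathcal{B}$ is available. The spanner itself is precomputed once, before play begins, by Algorithm~\ref{alg:basis} of Section~\ref{sec:exact_spanners_dag}, which I may invoke as a black box: it outputs an exact $1$-barycentric spanner of $\calS_i$ in polynomial time. Assembling the three steps above with this precomputation shows that every round of Algorithm~\ref{alg:bandit-alg-dags} costs $\mathrm{poly}(m)$ time, giving the claim.

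The step I expect to be the main obstacle is the polynomial-time construction of an \emph{exact} $1$-barycentric spanner. The generic Awerbuch--Kleinberg procedure only yields a $C$-approximate spanner and, moreover, presupposes a linear-optimization oracle; driving the approximation constant down to $1$ in general is not known to be efficient. The DAG structure is what makes exactness attainable, and the heart of the argument is the correctness proof of Algorithm~\ref{alg:basis}: one must show that the $m$ paths it selects are linearly independent and that every path of $\calS_i$ decomposes over them with coefficients in $[-1,1]$, while the selection itself runs in polynomial time. Granting that guarantee from Section~\ref{sec:exact_spanners_dag}, the polynomial-time implementation of $\mathrm{BGD}-\mathrm{CE}$ follows from the modular analysis above.
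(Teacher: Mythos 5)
Your proposal is correct and follows essentially the same route as the paper: exploit the $\mathcal{O}(|V|)$-constraint description of the flow polytope so that the projection step is a polynomial-size quadratic program, obtain the sparse Carath\'eodory decomposition for sampling constructively (your flow-decomposition argument is exactly what the paper's cited shortest-path-based procedure from \cite{panageas2023semi} and Theorem~\ref{thm:caratheodory} implement), and reduce the genuinely new difficulty to the exact $1$-barycentric spanner of Theorem~\ref{thm:dag_basis}, which you correctly identify as the heart of the matter and may invoke as the paper does. The only cosmetic discrepancies are that the algorithm projects in the basis coordinates onto $\calD_i^{\mu_t}$ rather than in edge space (both are poly-size QPs, so nothing changes) and that $B_i$ is $m\times s$ with $s\leq m$ rather than square, neither of which affects the argument.
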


%\begin{theorem}\label{t:polynomial}
%There exists an online learning algorithm (i.e. \texttt{Algorithm}~\ref{alg:bandit-alg-dags}) for the Linear Bandit Optimization problem in Network Congestion Games over DAGs such that simultaneously: (i) it is no-regret, (ii) if adopted by all the agents in a congestion games with bandit feedback, then the agents strategy profiles converge to a Mixed Nash equilibrium ,\textbf{(iii) runs in polynomial time}.
%\end{theorem}

\noindent The rest of the paper is organized as follows. In Section~\ref{sec:algorithm} we present, BGD-CE (Algorithm~\ref{alg:bandit-alg}) and explain the two main ideas behind its design. In Section~\ref{sec:computation} we present the polynomial-time implementation of BGD-CE (Algorithm~\ref{alg:bandit-alg-dags}) for the special case of  Network Congestion Games over DAGs. Finally in Section~\ref{sec:sketches}, we present the basic steps for establishing Theorem~\ref{t:nash_main} and Theorem~\ref{t:polynomial}.

\section{Bandit Online Gradient Descent with Caratheodory Exploration}
\label{sec:algorithm}
\noindent In this section, we present our online learning algorithm for general congestion games, called Bandit Online Gradient Descent with Caratheodory Exploration. The formal description of our algorithm lies in Section~\ref{subsec:alg} (Algorithm~\ref{alg:bandit-alg}). Before presenting our algorithm, in Section~\ref{sec:polytope}) we present the notion of Implicit Description Polytopes for Congestion Games and in Section~\ref{sec:spanners} the notion of Barycentric Spanners~\cite{awerbuch2004adaptive}.

\subsection{Implicit Description and Strategy Sampling}
\label{sec:polytope}
The set of resources can be numbered such that $E = \{1, \dots, m\}$. The latter allows for the strategy space $\calS_i$ to be embedded in the vertices of the $m$ dimensional hypercube. Indeed any $p_i \in \calS_i$ can be described, with a slight abuse of notation, by the vertex ${p_i} \in \{0,1\}^m$ where ${p_i}_e = 1$ if and only if $e \in p_i$. The following definition formalizes this embedding.

\begin{definition}[Implicit description polytope] 
\label{def:polytope}
For any element in $\calS_i$, let ${p_i}\in \{0, 1\}^m$ denote its encoding as a vertex in the hypercube. The implicit description polytope $\calX_i$ is given by the following convex hull
    \[
\calX_i := \text{conv}\left(\{ {p_i} \in \{0,1\}^m,\; p_i \in \calS_i\}\right),
    \]
    $\calX_i$ is a closed convex polytope so there exists $A_i \in \R^{r_i \times m}$ and $d_i \in \R^{r_i}$, for some $r_i \in \mathbb{N}$, such that
    \[
        \calX_i = \{x \in \R^m, A_ix\leq d_i\}
    \]
    The polytope is therefore defined by the pair $(A_i, d_i)$ and its size is given by $r_i$ and $m$.
\end{definition}

This implicit description polytope is of interest because the strategy space $\calS_i$ corresponds to its extreme points. Moreover, the set of distribution over the strategy space $\Delta(\calS_i)$ is also captured by the polytope as shown by the following definition.

\begin{definition}[Marginalization]
    For any $\pi_i \in \Delta(\calS_i)$ we can associate a point $x^{\pi_i} \in \calX_i$ defined as
\[
x^{\pi_i} = \sum_{p_i \in \calS_i} \Pr_{u\sim \pi_i}\bs{u = p_i}{p_i}.
\]
\end{definition}

\noindent The reverse correspondence of obtaining a distribution $\pi_i \in \Delta(\calS_i)$ from a point $x_i \in \calX_i$ can also established thanks to a result of Caratheodory \cite{caratheodory1907variabilitatsbereich}.
\begin{definition}[Caratheodory decomposition] Let $x_i \in \calX_i$. By Caratheodory's theorem, there exists $m+1$ strategies ${v_i^1, \dots v_i^{m+1}}$ and scalars ${\lambda_{1}, \dots, \lambda_{{m+1}}}$  such that
\begin{equation}
x_i = \sum_{j=1}^{m+1} \lambda_{j} v^j_i
\label{eq:cara}
\tag{CD}
\end{equation}
with $\lambda_j \geq 0$ and $\sum_{j}\lambda_j = 1$. The set $\mathcal{C}_i = \bc{(v_i^1, \lambda_{1}), \dots, (v_i^{m+1}, \lambda_{{m+1})}}$ is called a Caratheodory decomposition of $x$. %The associated distribution $\pi_i \in \Delta(\calS_i)$ such that $\Pr_{u\sim\pi_i}(u = v_i^k) = \lambda_k$ is called a Caratheodory decomposition of $x$.
\end{definition}

With the result above, any point in $\calX_i$ can be associated to a distribution that can be sampled easily. 
%\textcolor{red}{This decomposition makes it possible to sample from any distribution $\pi_i \in \Delta(\calS_i)$ by decomposing its marginalization $x^{\pi_i}$.  To alleviate notation, we will often conflate $p_i \in \calS_i$ and $\pi_i \in \Delta(\calS_i)$ with their vector representations $x^{p_i}$ and $x^{\pi_i}$ and the joint polytope is denoted by $\mathcal{X}\triangleq \mathcal{X}_1 \times \cdots \times \mathcal{X}_n$.}

%\begin{definition}[Caratheodory Sampler]
%    Given a marginalization $x \in \mathcal{X}_i$, the Caratheodory sampler is an algorithm that proceeds by first computing a Caratheodory decomposition $v_i^1, \dots v_i^{m+1}$ of $x$ and selects one of the $v_i$'s according to their associated coefficients in \eqref{eq:cara}. We will denote the selected $v_i$ as $\texttt{CaratheodorySampler}(x)$.
%\end{definition}

\subsection{Barycentric Spanners and  Bounded Away Polytopes}
\label{sec:spanners}
This section introduces the important concept of barycentric spanners \cite{awerbuch2004adaptive}. We will leverage barycentric spanners to ensure sufficient exploration of the resources set and hence guarantee low variance of the cost estimators.
\begin{definition}[$\vartheta$-spanners] \label{def:spanners}
    A subset of independent vectors $\{b_1, \dots, b_s\} \subseteq \calX_i$, with $s \leq m$, is said to be $\vartheta$-spanner of $\calX_i$, with $\vartheta \geq 1$, if, for all $x\in \calX_i$, there exists $\alpha \in \mathbb{R}^s$ such that
    \[
    x = \sum_{k=1}^{s}\alpha_kb_k\;\;\text{and }\;\; \alpha_i^2 \leq \vartheta^2,\; \text{for all $k\in[s]$}.
    \]
\end{definition}
%\noindent At first look, it can seem like some exponential dependence on problem parameters might be introduced by $\vartheta$. Luckily, thanks to a result of \cite{awerbuch2004adaptive}, this is not the case. 
\begin{theorem}[Existence of spanners (\cite{awerbuch2004adaptive}, Proposition 2.2)]
Any compact set $\calX_i \subset \R^m$ admits an $O(1)$-spanner.
\end{theorem}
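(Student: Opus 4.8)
The plan is to construct an explicit $1$-spanner (which is in particular $O(1)$) by a greedy volume-maximization argument, following the classical reasoning of \cite{awerbuch2004adaptive}. First I would reduce to the full-dimensional case: set $W = \mathrm{span}(\calX_i)$ and $s = \dim W \leq m$, and fix an orthonormal basis of $W$ so that each point of $\calX_i$ is represented by its $s$ coordinates in that basis. Since $\calX_i$ spans $W$, it contains $s$ linearly independent vectors, which guarantees that the objective maximized below is strictly positive at some candidate. Working inside $W$ rather than in the ambient $\R^m$ is exactly what lets me handle a polytope $\calX_i$ that is not full-dimensional, as is typically the case for strategy polytopes.

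Next, among all $s$-tuples $(b_1, \dots, b_s) \in \calX_i^s$, I would select one maximizing $\abs{\det B}$, where $B = [b_1 \mid \cdots \mid b_s]$ is the $s \times s$ matrix whose columns are the coordinate representations. Because $\calX_i$ is compact, the product $\calX_i^s$ is compact, and $\det$ is continuous, so the maximum is attained; since the maximal value is strictly positive, the maximizer $\bc{b_1, \dots, b_s}$ is automatically a basis of $W$ consisting of points of $\calX_i$.

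The core step is to verify that this basis is a $1$-spanner. Take any $x \in \calX_i \subseteq W$ and write it uniquely as $x = \sum_{k=1}^s \alpha_k b_k$. Fix an index $k$ and let $B^{(k)}$ denote the matrix obtained from $B$ by replacing its $k$-th column with $x$. Expanding $x$ and using multilinearity of the determinant in the $k$-th column, every term with $j \neq k$ has a repeated column $b_j$ and hence vanishes, leaving $\det B^{(k)} = \alpha_k \det B$ (this is precisely the Cramer's-rule identity). Since $x$ and all the remaining columns $b_j$ lie in $\calX_i$, the tuple defining $B^{(k)}$ is an admissible candidate, so maximality yields $\abs{\det B^{(k)}} \le \abs{\det B}$, whence $\abs{\alpha_k} \le 1$. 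As this holds for every $k$, the coefficients satisfy $\alpha_k^2 \le 1 = \vartheta^2$ with $\vartheta = 1$, which is exactly the defining property of Definition~\ref{def:spanners}.

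The only genuinely delicate points are existence of the maximizer, handled by compactness and continuity, and the non-full-dimensional case, handled by passing to $W = \mathrm{span}(\calX_i)$ and using the $s$-dimensional volume in place of an ambient determinant. I expect the determinant-swap identity to be the conceptual heart of the argument, though it is routine once the maximizer is set up correctly. I note finally that this construction delivers an \emph{exact} $1$-spanner, which is strictly stronger than the claimed $O(1)$; the weaker $O(1)$ bound only becomes the relevant target when one additionally demands an efficient algorithm, since exactly maximizing the determinant may be intractable, and a $C$-approximate maximizer yields a $C$-spanner by the identical computation.
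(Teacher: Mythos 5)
Your proof is correct and matches the argument behind the cited result: the paper itself does not reprove this theorem but defers to Proposition~2.2 of \cite{awerbuch2004adaptive}, whose proof is exactly your determinant-maximization over $\calX_i^s$ within $W=\mathrm{span}(\calX_i)$ followed by the Cramer's-rule column-swap bound $\abs{\alpha_k}\le 1$. Your closing remark is also accurate: compactness yields an exact $1$-spanner, and the $O(1)$ (i.e., $\vartheta>1$) relaxation only matters for the efficient construction via a linear optimization oracle, which is how the paper uses it in Section~4.
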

\noindent We adopt barycentric spanners as a key ingredient in our algorithm. Since barycentric spanners essentially form a kind of basis of the polytope $\calX_i$, we can introduce the basis polytope $\calD_i$ in the following defintion.
\begin{definition}[Basis polytope]
Let $B_i$ be the matrix whose columns are $\vartheta$-barycentric spanners $b_1, \dots, b_s$ of $\calX_i$. The polytope defined as
    \[
\calD_i = \bc{\alpha \in [-\vartheta, \vartheta]^s, \; B_i\alpha \in \calX_i}
    \]
    is referred to as the basis polytope.
\end{definition}

It is in this polytope that we can achieve fine control of norms necessary for our proofs, for this reason agents will operate in their respective basis polytopes. Moreover to ensure sufficient exploration, the boundaries of the polytope need to be avoided. More precisely, we introduce the notion of $\mu$-\textit{Bounded-Away Basis Polytope} that will be central for our proposed algorithm.

\begin{definition}
\label{l:bounded-away-polytope} 
Let $\mu >0$ be an exploration parameter. The $\mu$-\textit{Bounded-Away basis Polytope}, denoted as $\calD^{\mu}_i$,  is defined as 
\begin{align}
\label{eq:bounded-away}
    \calD^{\mu}_i \triangleq (1-\mu) \calD_i + \frac{\mu}{s}\mathbbm{1}.
\end{align} 
\end{definition}

\noindent We remark that the $\mu$-Bounded-Away Polytope $\calD_i^\mu$ is always non empty as it contains $\frac{1}{s}\mathbbm{1}$, moreover, $\calD_i^\mu \subseteq \calD_i$. A simplified version of this idea has been shown successful for the semi-bandit feedback model \cite{panageas2023semi} and it appeared in \cite{chen2021impossible} that used it in the context of online predictions with experts advice. 

Equation \eqref{eq:bounded-away} shows that any point $\alpha_i \in \calD_i$ admits a decomposition where $\frac{1}{s}\mathbbm{1}$ appears with coefficient $\mu$. Mapping back to the implicit description polytope, this implies that the point $x_i = B_i\alpha_i$ admits a decomposition that assigns a weight $\mu > 0$ to $\overline{b_i} = \frac{1}{|\mathcal{B}_i|} \sum_{b\in \mathcal{B}_i}b$, which can be understood as the uniform distribution over the spanners. In fact, there is a tractable way of obtaining this decomposition as evidenced by the following definition.
\if 0
\begin{definition}[Shifted Caratheodory decomposition]
\label{def:shiftedCara}
    Given a center point $u \in \calX_i$, the shifted Caratheodory decomposition of $x \in \calX_i^\mu$ is given by
    \[
    x = (1-\mu) \bs{\sum_{k=1}^{m+1}\lambda_k v_k} + \mu u
    \]
    where $(\lambda_k)_k$ and $(v_k)_k$ is the Caratheodory decomposition of the shifted point $\frac{1}{1-\mu}\br{x - \mu u} \in \calX_i$.
\end{definition}
\fi
\begin{definition}[Shifted Caratheodory decomposition]
\label{def:shiftedCara}
    Given a barycentric spanner $\mathcal{B}_i$ and the respective $\mu$-bounded away basis polytope $\calD_i$, for any $\alpha \in \calD^\mu_i$, with $\alpha = (1-\mu)z + \frac{\mu}{s}\mathbbm{1}$ for some $z\in\calD_i$, the shifted Caratheodory decomposition of $x = B_i\alpha$ is given by
    
    \[
    x = (1-\mu) \bs{\sum_{(p, \lambda_p) \in \mathcal{C}_i}\lambda_p \cdot p} + \frac{\mu}{|\mathcal{B}_i|} \sum_{b \in \mathcal{B}_i} b_i
    \]
    where $C_i$ is the Caratheodory decomposition of $B_i z\in \calX_i$.
\end{definition}

\noindent In Algorithm~\ref{alg:caratheodory_distr} we present how, for any $\alpha \in \calD^\mu_i$, a point $x = B_i\alpha \in \mathcal{X}_i$ can be decomposed to a probability distribution $\pi_x \in \Delta(\mathcal{S}_i)$.% such that $\mathrm{Pr}_{\pi_x}\left[\text{resource } e \text{ is selected}  \right] = x_e$.

%Our main algorithm (i.e. Algorithm~\ref{alg:bandit-alg}) leverages the related concept of Caratheodory distribution outlined in the following algorithmic block.

\begin{algorithm}[ht]
			\caption{\texttt{CaratheodoryDistribution}}
			\label{alg:caratheodory_distr}\nonumber
			\begin{algorithmic}[1]
	\STATE \textbf{Input:} $x \in \mathcal{X}_i$, exploration parameter $\mu > 0$, spanner $\mathcal{B}_i = \bc{b_1 \dots, b_s}$.
 \STATE Consider the shifted decomposition of $x$ (see Definition~\ref{def:shiftedCara}) with $\bar{b}_i = \frac{1}{|\mathcal{B}_i|}\sum_{b \in \mathcal{B}_i}  b$, i.e.
 \[
    x = (1-\mu) \bs{\sum_{(p, \lambda_p) \in \mathcal{C}_i}\lambda_p \cdot p} + \frac{\mu}{|\mathcal{B}_i|} \sum_{b \in \mathcal{B}_i} b_i
    \]
    where $C_i = \{(\lambda_1,v_i^1),\ldots,(\lambda_{m+1},v_i^{m+1})\}$ is the Caratheodory decomposition of $\frac{1}{1-\mu}\br{x - \frac{\mu}{|\mathcal{B}_i|} \sum_{b \in \mathcal{B}_i} b_i }$
       
\STATE \textbf{Output} $\pi_x \in \Delta(\mathcal{S}_i)$ with $\mathrm{supp}(\pi) = \bc{v_i^1 \dots v_i^{m+1}} \cup \mathcal{B}_i$ such that 

\begin{itemize}
\item $\mathrm{Pr}_{u \sim \pi_x}[u = v_k] = (1 - \mu)\lambda_k$~~~ for all $k \in [m+1]$ 

\item $\mathrm{Pr}_{u \sim \pi_x}[u = b_s] =  \mu/|\mathcal{B}_i|$~~~~~~~~ for all $b_s \in \mathcal{B}_i$
\end{itemize}
\end{algorithmic}
   
\end{algorithm}
\vspace{-5mm}
\subsection{Bandit Gradient Descent with Caratheodory Exploration}
\label{subsec:alg}
In this section we present our algorithm, called Bandit Gradient Descent with Caratheodory Exploration ($\mathrm{BGD}-\mathrm{CE}$) described in Algorithm~\ref{alg:bandit-alg}. 
\begin{algorithm}[ht]
			\caption{Bandit Gradient Descent with Caratheodory Exploration and Bounded Away polytopes}
			\label{alg:bandit-alg}\nonumber
			\begin{algorithmic}[1]
	%\STATE \textbf{Input:} Step size sequence $(\gamma_t)_t$, exploration coefficients $(\mu_t)_t$.	

    \STATE Agent $i$ computes a $\mathcal{O}(1)$-barycentric spanner (see Definition~\ref{def:spanners}) $\mathcal{B} = \bc{b_1, \dots, b_s}$.
    \STATE Agent $i$ sets $B_i \in \R^{m\times s}$ to be the matrix with columns  $\bc{b_1, \dots, b_s}$.
    %\STATE Agent $i$ defines a centering point $u_i = \frac{1}{s}\sum_{k=1}^sb_k \in \calX_i$.
    \STATE Agent $i$ selects an arbitrary $\alpha^1_i \in \mathcal{D}^{\mu_1}_i$.
				\FOR{each round $t=1,\ldots, T$}
       %\STATE Agent $i$ samples a strategy $p^t_i \sim \tilde{x}_i^t $ using the \texttt{Shifted Caratheodory} sampler with center $u_i$ (see \ref{def:shiftedCara}).
       \STATE Define $x^t_i = B_i \alpha_i^t$.
       \STATE Agent $i$ samples $p^t_i \sim \pi^t_i$ where $\pi^t_i = \texttt{CaratheodoryDistribution}(x^t_i; \mu_t, \mathcal{B})$~(Algorithm~\ref{alg:caratheodory_distr}).
       \STATE Agent $i$ suffers cost, \[l_i^t:= \innerprod{c^t}{{p_i^t}}\]     
         \STATE Agent $i$ sets 
         \(
         \hat{c}^t\leftarrow l_i^t \cdot M_{i,t}^+ p_i^t
         \)
         where $M_{i,t} = \E_{v \sim \pi_i^t }\left[vv^\top\right]$.
\smallskip
\STATE Agent~$i$ updates $\alpha_i^{t+1} = \Pi_{\calD^{\mu_{t+1}}_i}\br{\alpha_i^t - \gamma_t B_i^\top \hat{c}^t }$

% \STATE Agent~$i$ updates $x^{t+1}_i$ as,
 %      \[x_i^{t+1} = \Pi_{\calX^{\mu_{t+1}}_i}\br{x_i^t - \gamma_t \hat{c}^t }
%       \]
       
\ENDFOR
	\end{algorithmic}
   
\end{algorithm}

\noindent Algorithm~\ref{alg:bandit-alg} and is based on Projected Online Gradient Descent \cite{Z07} but it includes two important variations leveraging the technical tools introduced in the previous sections.
\smallskip
\smallskip

\noindent  \textit{Resources sampling} In Step 6 of Algorithm~\ref{alg:bandit-alg} we need to sample from a distribution over $\mathcal{S}_i$. As this set can be exponentially large, this sampling procedure might have complexity exponential in $m$. To avoid such a computational complexity, we will use a reparametrization of the problem to ensure that all the distributions $\pi^t_i$'s generated by the algorithm have sparse support.% For the important case of Directed Acyclic Graphs (DAGs), this ensures that the computational complexity of Algorithm~\ref{alg:bandit-alg} is only polynomial in the number of resources.
\smallskip
\smallskip

\noindent \textit{Bounded variance estimator} Since we work under bandit feedback, we can not directly observe all the entries of the cost vector. To circumvent this challenge, we adopt the standard estimator for online linear optimization with bandit feedback proposed in \cite{dani2007price} which is $
         \hat{c}^t\leftarrow l_i^t \cdot M_{i,t}^+ p_i^t$
         where $M_{i,t} = \E_{u \sim \pi_i^t }\left[uu^\top\right]$.
The bounds on the variance of this estimator depends on the inverse of the smallest nonzero eigenvalue of $M_{i,t}$ (see Lemma~\ref{lem:bounded-estimator}) but unfortunately this could be arbitrary small for points close to the boundaries of the polytope $\mathcal{X}_i$. 
For this reason, in Step~8 of Algorithm~\ref{alg:bandit-alg} we project on the set shrunk down polytope, $\mathcal{D}^\mu_i$, that ensures we are $\mu$ away from the boundary. %Finally, we show that for $x \in \mathcal{X}^\mu_i$ we can invoke Algorithm~\ref{alg:caratheodory_distr} to find a distribution $\pi \in \mathcal{S}_i$ such that it has marginal $x$ and well conditioned matrix $\E_{u \sim \pi }\left[uu^\top\right]$. 
Thanks to this, we can prove the following result concerning the cost estimator.
%Since we work under bandit feedback, we can not directly observe all the entries of the cost vector. To circumvent this challenge, we adopt the standard estimator for online linear optimization with bandit feedback proposed in \cite{dani2007price} which is $\hat{c}^t\leftarrow l_i^t \cdot M_{i,t}^+ p_i^t$where $M_{i,t} = \E_{u \sim \pi_i^t }\left[uu^\top\right]$and $M_{i,t}^+$ denotes its pseudoinverse.The bounds on the variance of this estimator depends on the inverse of the smallest nonzero eigenvalue of $M_{i,t}$ (see Lemma~\ref{lem:bounded-estimator}) but unfortunately this could be arbitrary small for points close to the boundaries of the polytope $\mathcal{X}_i$. For this reason, in Step~8 of Algorithm~\ref{alg:bandit-alg} we project on the set $\mathcal{X}^\mu_i$. Finally, we show that for $x \in \mathcal{X}^\mu_i$ we can invoke Algorithm~\ref{alg:caratheodory_distr} to find a distribution $\pi \in \mathcal{S}_i$ such that it has marginal $x$ and well conditioned matrix $\E_{u \sim \pi }\left[uu^\top\right]$. Formally, leveraging this alternative decomposition, we can prove the following result concerning the cost estimator.

 \begin{lemma} \label{lemma:properties_estimator_online} The estimator $\hat{c}^t = l_i^t \cdot M_{i,t}^+ p_i^t $ satisfies
\begin{enumerate}
\item $\E\left[\innerprod{\hat{c}^t}{x} \right] = \innerprod{c^t}{x}$ for $x \in \mathcal{X}_i$~~~~~~~~~ \textit{(Orthogonal Bias)}.
\item $
    \|B_i^\top\hat{c}^t\|_2 \leq \vartheta\frac{m^{5/2}}{\mu_t}c_{\max}.
    $~~~\textit{(Boundness)}.
\item $\E \left[ \norm{B_i^\top\hat{c}^t}_2^2 \right] \leq \frac{n m^4 c^2_{\max}}{\mu_t}$~~~\textit{(Second Moment)}
\end{enumerate}
\end{lemma}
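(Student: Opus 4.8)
The plan is to pass to the $s$-dimensional coordinates of the spanner, which converts the (generically rank-deficient) matrix $M_{i,t}$ into a genuinely invertible $s\times s$ matrix and makes all three bounds transparent. Every $v$ in the support of $\pi_i^t$ lies in $\calX_i\subseteq\mathrm{range}(B_i)$, so I would write $v=B_i\beta_v$ with $\beta_v:=B_i^+v$ (unique since the columns of $B_i$ are independent), giving $M_{i,t}=\E_{v\sim\pi_i^t}\bs{vv^\top}=B_iN_{i,t}B_i^\top$ with $N_{i,t}:=\E_{v\sim\pi_i^t}\bs{\beta_v\beta_v^\top}$. Because \texttt{CaratheodoryDistribution} places mass at least $\mu_t/s$ on every spanner vector $b_k$ and $\beta_{b_k}=B_i^+b_k=e_k$, I get the key semidefinite lower bound
\[ N_{i,t}\succeq\frac{\mu_t}{s}\sum_{k=1}^s e_ke_k^\top=\frac{\mu_t}{s}I_s, \]
so $N_{i,t}\succ 0$ and $\norm{N_{i,t}^{-1}}_{\mathrm{op}}\le s/\mu_t$. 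Since $M_{i,t}=B_iN_{i,t}B_i^\top$ with $B_i$ of full column rank and $N_{i,t}$ invertible, a direct check gives $M_{i,t}^+=B_i(B_i^\top B_i)^{-1}N_{i,t}^{-1}(B_i^\top B_i)^{-1}B_i^\top$ and hence the identity $B_i^\top M_{i,t}^+=N_{i,t}^{-1}B_i^+$; in particular $B_i^\top M_{i,t}^+p_i^t=N_{i,t}^{-1}\beta_{p_i^t}$. This single identity drives all three parts.

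For the orthogonal-bias property I would take the expectation over $p_i^t\sim\pi_i^t$ conditioned on the history (so that $c^t$ is fixed):
\[ \E\bs{\hat c^t}=\E\bs{\innerprod{c^t}{p_i^t}M_{i,t}^+p_i^t}=M_{i,t}^+\E\bs{p_i^t(p_i^t)^\top}c^t=M_{i,t}^+M_{i,t}\,c^t. \]
Since $M_{i,t}$ is symmetric PSD, $M_{i,t}^+M_{i,t}$ is the orthogonal projector onto $\mathrm{range}(M_{i,t})$, and the semidefinite bound above forces $\mathrm{range}(M_{i,t})=\mathrm{range}(B_i)\supseteq\calX_i$. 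Thus the projector fixes any $x\in\calX_i$, and using its symmetry, $\E\bs{\innerprod{\hat c^t}{x}}=\innerprod{c^t}{M_{i,t}^+M_{i,t}x}=\innerprod{c^t}{x}$.

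For boundedness I would write $\norm{B_i^\top\hat c^t}_2=\abs{l_i^t}\,\norm{N_{i,t}^{-1}\beta_{p_i^t}}_2$ and combine three estimates: $\abs{l_i^t}=\abs{\innerprod{c^t}{p_i^t}}\le c_{\max}\norm{p_i^t}_1\le mc_{\max}$ (a path uses at most $m$ edges), $\norm{N_{i,t}^{-1}}_{\mathrm{op}}\le s/\mu_t$, and $\norm{\beta_{p_i^t}}_2\le\sqrt s\,\vartheta$ from the $\vartheta$-spanner property; with $s\le m$ these multiply to the stated $\vartheta m^{5/2}c_{\max}/\mu_t$. For the second moment the naive operator-norm bound would cost a factor $1/\mu_t^2$, so the point is a trace identity: writing $\norm{N_{i,t}^{-1}\beta_{p_i^t}}_2^2=\beta_{p_i^t}^\top N_{i,t}^{-2}\beta_{p_i^t}$ and using $(l_i^t)^2\le m^2c_{\max}^2$,
\[ \E\bs{\norm{B_i^\top\hat c^t}_2^2}\le m^2c_{\max}^2\,\E_{p\sim\pi_i^t}\bs{\beta_p^\top N_{i,t}^{-2}\beta_p}=m^2c_{\max}^2\,\trace{N_{i,t}^{-2}\E_{p}\bs{\beta_p\beta_p^\top}}=m^2c_{\max}^2\,\trace{N_{i,t}^{-1}}, \]
because $\E_{p}\bs{\beta_p\beta_p^\top}=N_{i,t}$ collapses one power of $N_{i,t}^{-1}$. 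Finally $\trace{N_{i,t}^{-1}}\le s\cdot s/\mu_t\le m^2/\mu_t$ yields $m^4c_{\max}^2/\mu_t$, which implies the stated bound (weaker only by the harmless factor $n\ge 1$).

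The part I expect to require the most care is exactly this second-moment estimate: one must resist bounding $N_{i,t}^{-2}$ in operator norm and instead exploit $\E_p\bs{\beta_p\beta_p^\top}=N_{i,t}$ through the trace, since this is what turns a spurious $1/\mu_t^2$ into the correct $1/\mu_t$. The other recurring subtlety is bookkeeping with pseudoinverses: $M_{i,t}$ is generically rank-deficient in $\R^m$, so every step manipulating $M_{i,t}^+$ must be justified on the subspace $\mathrm{range}(B_i)$ where $p_i^t$ actually lives. The reparametrization through the invertible $N_{i,t}$ is precisely what makes these manipulations rigorous, and I would set it up first so that Parts~1--3 reduce to essentially one-line consequences of the identity $B_i^\top M_{i,t}^+=N_{i,t}^{-1}B_i^+$ and the bound $N_{i,t}\succeq(\mu_t/s)I_s$.
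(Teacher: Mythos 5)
Your proposal is correct and follows essentially the same route as the paper's proof: the factorization $M_{i,t}=B_iN_{i,t}B_i^\top$ with $N_{i,t}\succeq\frac{\mu_t}{s}I_s$ from the exploration mass, the identity $B_i^\top M_{i,t}^+p_i^t=N_{i,t}^{-1}\beta_{p_i^t}$ (the paper's equation for $B_i^\top\hat{c}^t$), the bound $|l_i^t|\,\|N_{i,t}^{-1}\|\,\|\beta_{p_i^t}\|_2\leq \vartheta m^{5/2}c_{\max}/\mu_t$, and the trace identity $\E_p\bs{\beta_p^\top N_{i,t}^{-2}\beta_p}=\trace{N_{i,t}^{-1}}$ that saves the crucial factor of $1/\mu_t$ in the second moment. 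Your phrasing of the bias via the orthogonal projector $M_{i,t}^+M_{i,t}$ fixing $\mathrm{range}(B_i)\supseteq\calX_i$ is just an equivalent formulation of the paper's computation $M_{i,t}^+M_{i,t}c^t=\br{B_i^\top}^+B_i^\top c^t$ followed by $\innerprod{c^t}{B_iB_i^+x}=\innerprod{c^t}{x}$.
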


Using Lemma~\ref{lemma:properties_estimator_online} we are able to establish both the no-regret property of Algorithm~\ref{alg:bandit-alg} as well as its convergence properties of Nash Equilibrium in case Algorithm~\ref{alg:bandit-alg} is adopted by all agents. In Theorem~\ref{thm:bigRegret} we formally stated and establish the no-regret property of Algorithm~\ref{alg:bandit-alg}. 

\begin{theorem}[No-Regret]
\label{thm:bigRegret}
Let $\delta \in (0,1)$. If agent $i \in [n]$ generates its strategies $p^{1:T}$ using \texttt{Algorithm}~\ref{alg:bandit-alg} with step sizes $\gamma_t = \sqrt{\frac{c_{\max} \mu_t}{\vartheta n^3 m^6 t}}$ and biases $\mu_t = \min\bc{\frac{n^{1/5}}{m^{7/5}t^{1/5}c^{1/5}_{\max}}, 0.5}$, then, for any adversarial adaptive sequence $c^{1:T}$, 
\[
\Regret\br{p_i^{1:T}, c^{1:T}} \leq \tilde{\mathcal{O}}\br{m^{5.5}c^2 T^{4/5}\sqrt{\log{\frac{1}{\delta}}} }
\]
with probability $1-\delta$.
\end{theorem}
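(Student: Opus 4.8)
The plan is to analyze Algorithm~\ref{alg:bandit-alg} as projected online gradient descent run in the reparametrized basis coordinates $\alpha_i^t \in \calD_i^{\mu_t}$, with the estimated gradient $g^t := B_i^\top \hat{c}^t$, and then to pay separately for three error sources: the bandit estimation, the bounded-away shift, and the randomness of sampling $p_i^t \sim \pi_i^t$. Write $u^\star := \argmin_{u\in\calS_i}\sum_t \innerprod{c^t}{u}$ and let $\alpha^\star \in \calD_i$ satisfy $B_i\alpha^\star = u^\star$ (possible by the spanner property). Since \texttt{CaratheodoryDistribution} is built so that $\pi_i^t$ has mean $x_i^t = B_i\alpha_i^t$, I would first split
\[
\Regret\br{p_i^{1:T}, c^{1:T}} = \sum_t \innerprod{c^t}{p_i^t - x_i^t} + \sum_t \innerprod{B_i^\top c^t}{\alpha_i^t - \alpha^\star}.
\]
The first sum is a martingale difference sequence with increments bounded by $\abs{\innerprod{c^t}{p_i^t - x_i^t}} \le m c_{\max}$, so Azuma--Hoeffding controls it by $\tilde{\mathcal{O}}(m c_{\max}\sqrt{T\log(1/\delta)})$.

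For the second sum I would introduce the feasible moving comparator $\tilde{\alpha}^{\star,t} := (1-\mu_t)\alpha^\star + \tfrac{\mu_t}{s}\mathbbm{1} \in \calD_i^{\mu_t}$ (Definition~\ref{l:bounded-away-polytope}) and write $\innerprod{B_i^\top c^t}{\alpha_i^t - \alpha^\star} = \innerprod{B_i^\top c^t}{\alpha_i^t - \tilde{\alpha}^{\star,t}} + \mu_t\innerprod{c^t}{\bar{b} - u^\star}$, where $\bar{b} = \tfrac1s\sum_k b_k = \tfrac1s B_i\mathbbm{1}$. The bias term is at most $\sum_t \mu_t \cdot 2m c_{\max}$, which with the schedule $\mu_t \sim t^{-1/5}$ gives a $\tilde{\mathcal{O}}(m c_{\max}\, T^{4/5})$ contribution. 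For the leading term I would replace $c^t$ by $\hat{c}^t$ and invoke the orthogonal-bias property (Lemma~\ref{lemma:properties_estimator_online}, part 1), so that $\innerprod{B_i^\top(c^t - \hat{c}^t)}{\alpha_i^t - \tilde{\alpha}^{\star,t}}$ is a mean-zero martingale difference and $\sum_t\innerprod{g^t}{\alpha_i^t - \tilde{\alpha}^{\star,t}}$ is amenable to the standard descent inequality.

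The descent bound is where the rate is set. Because $\mu_t$ is decreasing, $\calD_i^{\mu_t}\subseteq\calD_i^{\mu_{t+1}}$, so $\tilde{\alpha}^{\star,t}$ lies in the set projected onto at step $t+1$; nonexpansiveness of $\Pi_{\calD_i^{\mu_{t+1}}}$ toward $\tilde{\alpha}^{\star,t}$ then yields the per-step inequality. Summing and handling the moving comparator, the comparator path length telescopes, $\sum_t\norm{\tilde{\alpha}^{\star,t+1}-\tilde{\alpha}^{\star,t}} = \sum_t\abs{\mu_{t+1}-\mu_t}\cdot\norm{\alpha^\star-\tfrac1s\mathbbm{1}} \le \mu_1\cdot\mathcal{O}(R)$, so the drift stays lower order. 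This gives $\tfrac{R^2}{2\gamma_T} + \sum_t\tfrac{\gamma_t}{2}\norm{g^t}^2 + \text{(drift)}$ with diameter $R \le 2\vartheta\sqrt{m}$ (since $\calD_i\subseteq[-\vartheta,\vartheta]^s$, $s\le m$). Bounding $\norm{g^t}^2 \le \vartheta^2 m^5 c_{\max}^2/\mu_t^2$ by the worst-case boundedness estimate (Lemma~\ref{lemma:properties_estimator_online}, part 2) and plugging in $\gamma_t \sim \sqrt{\mu_t/t}$, $\mu_t \sim t^{-1/5}$ makes $\sum_t\tfrac{\gamma_t}{2}\norm{g^t}^2 \sim \sum_t t^{-1/5} \sim T^{4/5}$, which together with the bias produces the leading $T^{4/5}$ term, while $\tfrac{R^2}{2\gamma_T}\sim T^{3/5}$ is subdominant.

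Finally, the true-versus-estimated gradient martingale requires a high-probability bound via Freedman's inequality: its conditional variance per step is at most $\E[\norm{g^t}^2]\,R^2 \le \mathcal{O}(\vartheta^2 n m^5 c_{\max}^2/\mu_t)$ (Lemma~\ref{lemma:properties_estimator_online}, part 3) and its range is $\mathcal{O}(\vartheta^2 m^3 c_{\max}/\mu_t)$ (part 2). The summed variances scale like $\sum_t 1/\mu_t \sim T^{6/5}$, so Freedman yields a $T^{3/5}\sqrt{\log(1/\delta)}$ deviation plus a lower-order $T^{1/5}\log(1/\delta)$ range term, both dominated by $T^{4/5}$. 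Collecting all pieces and taking the maximal exponents across terms gives $\tilde{\mathcal{O}}(m^{5.5}c_{\max}^2 T^{4/5}\sqrt{\log(1/\delta)})$. I expect the main obstacle to be precisely this high-probability estimation step: controlling the Freedman variance and range despite the $1/\mu_t$ blow-up of the bandit estimator, and verifying that after substituting the exact $\gamma_t,\mu_t$ schedule every error source is simultaneously sublinear with the step-size/bias terms dominating at $T^{4/5}$ — this is where the precise polynomial exponents in $m,n,c_{\max}$ are pinned down.
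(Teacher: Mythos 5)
Your overall architecture matches the paper's proof: the same three-way decomposition (a comparator-free sampling martingale controlled by Azuma--Hoeffding with range $mc_{\max}$, a swap of $c^t$ for $\hat c^t$ justified by the orthogonal-bias property, and a moving-OGD bound whose bounded-away bias contributes $2mc_{\max}\sum_t\mu_t\sim mc_{\max}T^{4/5}$). Your two local variations are both sound. For the time-varying polytope you use a moving comparator $\tilde\alpha^{\star,t}=(1-\mu_t)\alpha^\star+\frac{\mu_t}{s}\mathbbm{1}$ together with the nesting $\calD_i^{\mu_t}\subseteq\calD_i^{\mu_{t+1}}$ (valid since $\frac1s\mathbbm{1}\in\calD_i$ and $\mu_t$ is non-increasing, a fact the paper itself uses in the Nash analysis), whereas the paper instead rescales the \emph{iterates} onto the fixed polytope $\calD_i$ via the projection identity of Lemma~\ref{lemma:proj} and absorbs the $(\mu_{t+1}-\mu_t)$ mismatch as a projection error handled by Young's inequality and telescoping; these are interchangeable, and your drift term $\sum_t\abs{\mu_t-\mu_{t+1}}/\gamma_t=\mathcal{O}(T^{2/5})$ is indeed lower order. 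For the estimation martingale you invoke Freedman, but the paper gets away with plain Azuma: the increments are bounded by $\mathcal{O}(m^{3}c_{\max}\vartheta^{3}/\mu_t)$, and $\sqrt{\sum_t\mu_t^{-2}}\sim T^{7/10}$ is already subdominant to $T^{4/5}$, so the variance-based refinement buys nothing here.

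The genuine gap is in how you instantiate the comparator. You fix $u^\star=\argmin_{u\in\calS_i}\sum_t\innerprod{c^t}{u}$ at the outset and then apply concentration to $\sum_t\innerprod{c^t-\hat c^t}{x_i^t-B_i\tilde\alpha^{\star,t}}$ with $\tilde\alpha^{\star,t}$ built from $u^\star$. Against an adaptive adversary (which the theorem explicitly covers), $u^\star$ is a random variable determined by the entire realized cost sequence $c^{1:T}$, which itself depends on the algorithm's randomness; the increments of your sum are therefore not adapted martingale differences, and neither Azuma nor Freedman applies as written. The repair is exactly the paper's final step: prove the bound for each \emph{fixed} $u\in\calS_i$ with failure probability $\delta/\abs{\calS_i}$, take a union bound over $\calS_i$, and only then specialize to the best action in hindsight. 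This is not cosmetic: the union bound injects the $\sqrt{\log\abs{\calS_i}}\le\sqrt m$ factor into the concentration terms (the paper's passage from $\sqrt{\log(\abs{\calS_i}/\delta)}$ to an extra $m^{1/2}$ in the final exponent), so your closing step of ``taking the maximal exponents across terms'' undercounts the $m$-dependence without it. With that fix inserted, the rest of your accounting — $\sum_t\gamma_t\norm{B_i^\top\hat c^t}_2^2\sim T^{4/5}$ via the boundedness estimate, $R^2/\gamma_T\sim T^{3/5}$ with $R\le 2\vartheta\sqrt m$, and the bias term dominating at $T^{4/5}$ — goes through and recovers the theorem.
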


\noindent In Theorem~\ref{thm:bigNash} we establish the convergence properties of Algorithm~\ref{alg:bandit-alg} to Nash Equilibrium. 

\begin{theorem}[Convergence to Nash]
\label{thm:bigNash} Let all the agents adopt \texttt{Algorithm}~\ref{alg:bandit-alg} with step sizes $\gamma_t = \sqrt{\frac{c_{\max} \mu_t}{n^3 m^6 t}}$ and biases $\mu_t = \frac{n^{1/5}}{m^{7/5}t^{1/5}c^{1/5}_{\max}}$ . We denote by $\pi^1, \dots, \pi^T$ the sequence of joint strategy profiles produced. Then, for $T \geq \Theta (m^{13} m^{13.5} / \epsilon^5)$,
\[\frac{1}{T}\E \left[\sum_{t=1}^T\max_{i \in [n]}\left[c_i(\pi^{t}_i, \pi^{t}_{-i}) - \min_{\pi_i \in \Delta(\mathcal{P}_i)} c_i(\pi_i, \pi_{-i}^{t})\right]\right] \leq \epsilon.\]
\end{theorem}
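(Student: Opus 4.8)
The plan is to exploit that congestion games are potential games and to view the simultaneous dynamics produced by Algorithm~\ref{alg:bandit-alg} as a noisy projected gradient descent on Rosenthal's potential \cite{rosenthal1973class,potgames}. Let $\Phi_R(p)=\sum_{e\in E}\sum_{k=1}^{\ell_e(p)}c_e(k)$ and let $\Phi(x)$ be its multilinear extension, i.e.\ the expected potential when every agent $j$ samples independently with marginal $x_j\in\calX_j$. First I would record two structural identities. For a fixed profile $x_{-i}$ of the other agents the expected cost is linear in the marginal $x_i$, namely $C_i(x_i,x_{-i})=\innerprod{\bar c_i}{x_i}$ with $\bar c_i=\bar c_i(x_{-i})$ depending only on the others' marginals (a single resource's load is a sum of independent Bernoulli variables). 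Combining this linearity with the exact potential property yields $\nabla_{x_i}\Phi(x)=\bar c_i(x_{-i})$ and shows that $\Phi$ is affine in each agent's block. Hence the per-agent Nash gap equals $\innerprod{\bar c_i^t}{x_i^t-u_i^t}$ for a best response $u_i^t$, and summing the nonnegative gaps over agents yields the Frank--Wolfe gap of $\Phi$:
\[
\max_{i\in[n]}\bs{C_i(\pi^t_i,\pi^t_{-i})-\min_{\pi_i}C_i(\pi_i,\pi^t_{-i})}\ \le\ \sum_{i\in[n]}\innerprod{\bar c_i^t}{x_i^t-u_i^t}\ =\ \max_{u\in\calX}\innerprod{\nabla\Phi(x^t)}{x^t-u}.
\]
It therefore suffices to bound the time-averaged expected Frank--Wolfe gap of $\Phi$ along the trajectory.

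Next I would establish smoothness of $\Phi$. Because $\Phi$ is multilinear across agents, its diagonal Hessian blocks vanish and each cross-agent second derivative is controlled by a discrete difference $\abs{c_e(\ell+1)-c_e(\ell)}\le c_{\max}$; this gives an $L$-smoothness bound $L=\mathrm{poly}(n,m)\,c_{\max}$, with additional $\vartheta^2 m$-type factors once we pass to the basis coordinates $\alpha_i$ through $x_i=B_i\alpha_i$ (since $\norm{B_i}_{op}^2\le sm\le m^2$). This is where the discreteness of the congestion costs is tamed: independent randomization convexifies and smooths the landscape, so no Lipschitz assumption on $c_e$ is required.

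With smoothness in hand I would run the descent estimate in the basis coordinates. Using $L$-smoothness,
\[
\E_t\bs{\Phi(x^{t+1})}\le\Phi(x^t)+\sum_{i\in[n]}\E_t\innerprod{\bar c_i^t}{x_i^{t+1}-x_i^t}+\tfrac{L}{2}\,\E_t\norm{x^{t+1}-x^t}^2 .
\]
The first-order term is handled by the orthogonal-bias property (part~1 of Lemma~\ref{lemma:properties_estimator_online}), which makes $B_i^\top\hat c^t$ an unbiased proxy for $B_i^\top\bar c_i^t$, together with the first-order optimality of the projection onto $\calD_i^{\mu_{t+1}}$; this lower-bounds the per-agent cost decrease by a multiple of $\gamma_t$ times the gradient-mapping norm $\norm{g^t}$, which I would relate to the Frank--Wolfe gap via the elementary bound $\max_u\innerprod{\nabla\Phi(x^t)}{x^t-u}\le\norm{g^t}\br{\mathrm{diam}+\gamma_t\norm{\nabla\Phi}}$. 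Two corrections appear: a \emph{variance} term, bounded through the second-moment estimate $\E\norm{B_i^\top\hat c^t}_2^2\le nm^4c_{\max}^2/\mu_t$ (part~3 of Lemma~\ref{lemma:properties_estimator_online}), contributing $\tfrac{L}{2}\gamma_t^2\cdot\mathrm{poly}(n,m)c_{\max}^2/\mu_t$ after summing over agents; and a \emph{bias} term of order $\mu_t\,\mathrm{poly}(m)c_{\max}$ arising from projecting onto the $\mu_t$-bounded-away polytope and from the forced exploration on the spanner $\mathcal{B}$.

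Finally I would telescope. Since $0\le\Phi\le\Phi_{\max}=\mathcal{O}(nmc_{\max})$, summing the one-step inequality and rearranging bounds $\sum_t\gamma_t\,\E[\mathrm{gap}^t]$ by $\Phi_{\max}$ plus the accumulated variance and bias; dividing by $\sum_t\gamma_t$ (equivalently, telescoping the increasing $1/\gamma_t$ weights against the bounded potential) converts this into a bound on $\tfrac1T\E\sum_t\mathrm{gap}^t$. Substituting the prescribed schedules $\gamma_t=\sqrt{c_{\max}\mu_t/(n^3m^6t)}$ and $\mu_t=n^{1/5}/(m^{7/5}t^{1/5}c_{\max}^{1/5})$ balances the variance ($\propto\gamma_t/\mu_t$) against the bias ($\propto\mu_t$) and the potential range, producing an average gap that decays like $T^{-1/5}$ up to the stated polynomial factors; setting it equal to $\epsilon$ gives the threshold $T\ge\Theta(\mathrm{poly}(n,m)/\epsilon^5)$. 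The main obstacle is exactly this final balancing in the \emph{non-convex} potential landscape: the bandit estimator's variance blows up like $1/\mu_t$ as exploration shrinks, while the bounded-away projection biases each step by $\Theta(\mu_t)$, and the simultaneous-update descent must absorb both through the smoothness term without any fixed comparator to telescope against — which is precisely why the potential function, rather than a no-regret argument, must supply the contraction.
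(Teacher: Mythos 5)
Your overall skeleton is sound and largely coincides with the paper's: Rosenthal's potential and its multilinear extension, the linearity of $C_i$ in the block $x_i$ with $\nabla_{x_i}\Phi = \bar c_i(x_{-i})$ (Lemma~\ref{lem:linearizedcosts}), smoothness of $\tilde\Phi(\alpha)=\Phi(B\alpha)$ with the extra $\norm{B}^2$ factor (Proposition~\ref{prop:smooth}), the gradient-mapping-to-Nash-gap conversion with the additive $\mu$ bias (this is exactly Proposition~\ref{lem:stat2Nash}), and the same schedule balancing yielding $T^{-1/5}$. The genuine gap is in your one-step descent estimate on $\Phi$ itself. You pair the deterministic gradient $\nabla\Phi(x^t)$ with the increment $x^{t+1}-x^t$ and invoke the orthogonal-bias property, but $\alpha^{t+1}=\Pi_{\calD^{\mu_{t+1}}}\bs{\alpha^t-\gamma_t\nabla_t}$ is a \emph{nonlinear} (projected) function of the noise, so unbiasedness cannot be applied inside that inner product. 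Comparing against the noiseless step $\bar\alpha^{t+1}=\Pi_{\calD^{\mu_{t+1}}}\bs{\alpha^t-\gamma_t\nabla\tilde\Phi(\alpha^t)}$ and using nonexpansiveness, the cross term $\E_t\innerprod{\nabla\tilde\Phi(\alpha^t)-\nabla_t}{\alpha^{t+1}-\alpha^t}$ costs $\Theta(\gamma_t\sigma_t^2)$ per round, with $\sigma_t^2=\Theta(nm^4c_{\max}^2/\mu_t)$ the estimator's second moment. Crucially $\sigma_t^2\geq nm^4c_{\max}^2$ for all $t$ (it \emph{grows} as $\mu_t\to 0$), so after telescoping against the bounded potential the resulting bound on the $\gamma_t$-weighted average stationarity gap carries an irreducible additive term $\br{\sum_t\gamma_t\sigma_t^2}/\br{\sum_t\gamma_t}=\Omega(nm^4c_{\max}^2)$ for \emph{every} choice of schedules; with the stated ones, $\sum_t\gamma_t\sigma_t^2\sim T^{3/5}$ against $\sum_t\gamma_t\sim T^{2/5}$, so your bound in fact diverges. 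The alternative of bounding the deterministic gradient mapping by the stochastic one plus the noise norm is equally fatal, since $\E\norm{\nabla_t-\nabla\tilde\Phi(\alpha^t)}\sim\sqrt{nm^4/\mu_t}\,c_{\max}$ does not vanish.

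This is precisely the obstruction the paper's Moreau-envelope machinery removes, and it is the missing idea in your plan. The paper runs the descent argument not on $\tilde\Phi$ but on the time-varying envelope $M^t_{\lambda\tilde\Phi}$: comparing $\alpha^{t+1}$ to the $\mathcal{F}_t$-measurable proximal point $\prox^t_{\frac{\lambda}{2}\tilde\Phi}(\alpha^t)$ and using nonexpansiveness of the projection, the noise pairs with a past-measurable vector, so unbiasedness (Lemma~\ref{thm:estimator_props}) annihilates it at first order and the variance enters only at second order, as $\frac{\gamma_t^2}{\lambda}\E\norm{\nabla_t}_2^2\sim\gamma_t^2/\mu_t$, which under the stated schedules sums to $O(\log T)$ (Theorem~\ref{thm:sgd}); Lemma~\ref{lem:biased-grad} then converts the resulting inner product into $\tfrac14\norm{\nabla M^t_{\lambda\tilde\Phi}(\alpha^t)}_2^2$, and Lemma~\ref{lem:gaptoM} transfers the envelope gradient to the stationarity gap $G^t$. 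Two bookkeeping points your sketch also leaves implicit but that the paper needs: telescoping the time-varying envelopes requires the nesting $\calD^{\mu_t}\subseteq\calD^{\mu_{t+1}}$ (valid because $\mu_t$ decreases and $\tfrac1s\mathbbm{1}\in\calD$), and the per-round quantity in the theorem is recovered from the averaged envelope-gradient bound via Proposition~\ref{lem:stat2Nash} together with the $\tfrac1T\sum_t\mu_t$ bias term. In short: keep your reduction and your balancing, but replace the direct descent lemma on $\Phi$ by the envelope analysis — without it, the non-vanishing bandit variance makes the contraction you describe impossible to extract.
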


We remark that the complexity of Algorithm~\ref{alg:bandit-alg} is polynomial with respect to the size of \textit{implicit polytope $\mathcal{X}_i$}. However the for general congestion games the size of $\mathcal{X}_i$ can be exponential on $m$. Moreover constructing an $\mathcal{O}(1)$-barycentric spanner for general congestion games also requires exponential time in $m$ \cite{awerbuch2004adaptive} when the size of the polytope is exponential. However for the important special case of network games over DAGs we present how Algorithm~\ref{alg:bandit-alg} can be implemented in polynomial time with respect to $m$ (see Section~\ref{sec:exact_spanners_dag}). To establish the latter we present a novel algorithm for constructing $1$-barycentric spanners for network games for the special case of DAGs that runs in polynomial time (see Algorithm~\ref{alg:caratheodory_distr}). Finally in Section~\ref{sec:sketches}, we present the proof sketches of both Theorem~\ref{thm:bigRegret} and Theorem~\ref{thm:bigNash}.

%\section{Bandit Dynamics in General Congestion Games}
%\label{sec:main-algorithm}
%\input{sections/bandit-alg}

%\section{Efficient Bandit Dynamics in DAGs with exact barycentric spanners}
%\label{sec:DAGs}
%\input{sections/dags}

\section{Implementing Algorithm~\ref{alg:bandit-alg} in Polynomial-Time for DAGs}
\label{sec:computation}

In this section we present how Algorithm~\ref{alg:bandit-alg} can be implemented in polynomial time for the special case of DAGs. The latter involves two key steps. The first one consists in computing barycentric spanners in polymomial while the second in efficiently computing a Caratheorody Decomposition. We remark that none of the above steps can be done in polynomial time for general congesiton games. To tackle the first challenge in Algorithm~\ref{alg:basis} we present a novel and efficient procedure for spanner construction which also consists the main technical contribution of this section. To tackle the second challenge, we use the approach introduced in the previous work of \cite{panageas2023semi}. Overall, we present the computationally efficient version of Algorithm~\ref{alg:bandit-alg} for the case of Network Congestion Games over DAGs in Algorithm~\ref{alg:bandit-alg-dags}.

\subsection{Complexity for general congestion games}
%Up to this point, it remains unclear if the implementation of Algorithm~~\ref{alg:bandit-alg} can be carried out efficiently. To ensure that this is indeed the case we need to ensure that a $\theta$-spanner for reasonably small $\theta$ exists and can be computed efficiently.
%The preceding theorems involve simple problem constants as $\Phi_{\max}$, $L_\Phi$ which are easily computable from the potential function for congestion games. The more delicate constant is $\vartheta$. 

For $\vartheta=\mathcal{O}(1)$ but with $\vartheta > 1$, \cite{awerbuch2004adaptive} shows that it is possible to compute a $\vartheta$-spanner for any compact set with a polynomial number of calls to a linear minimization oracle. The time complexity of this oracle depends polynomially on $r_i$ and $m$ where $r_i$ is the number of rows in $(A_i, d_i)$, the implicit description of $\calX_i$. The updates of $\texttt{Algorithm}$~\ref{alg:bandit-alg} further require a Caratheodory decomposition for sampling at step 3, the inversion of a $m\times m$ matrix $M_{i,t}$ and finally a projection onto $\calX_i$. Overall the complexity of a single update is therefore $\mathrm{poly}(r_i, m)$. For general congestion games, it can be the case that $r_i$ is exponential in $m$. For the special case of network games however, $\calX_i$ corresponds to the flow polytope for which $r_i \leq m$. We discuss this special case in the next section.

\subsection{Efficient implementation of Algorithm~\ref{alg:bandit-alg} for DAGs}
\label{sec:exact_spanners_dag}
As aforementioned, an efficient implementation is possible if the set of resources correspond to the edges of a DAG.
First, recall that the implicit description polytope $\mathcal{X}_i$ admits a polynomial description. Indeed, in network congestion games $\mathcal{X}$ has the following simple form.
\begin{definition}[Flow polytope]
The implicit description polytope of a Network Congestion Game over a \textit{directed acyclic graph} $G(V,E)$ with start and target node $s_i,t_i \in V$ is given by 
\begin{align*}
    \mathcal{X}_i \triangleq \bigg\{ x \in\{0,1\}^m &: \sum_{e \in \mathrm{Out}(s_i)} x_e = 1\\ & \sum_{e \in\mathrm{In}(v)} x_e =  \sum_{e \in\mathrm{Out}(v)} x_e \quad \forall v \in V \setminus \bc{s_i, t_i} \\
    & \sum_{e \in \mathrm{In}(t_i)} x_e = 1 \bigg\} 
\end{align*}
\end{definition}
Notice that the number of constraints is simply $\abs{V}$. Therefore, a DAG admits an implicit description with $r_i = \abs{V} < m$. Moreover, we have the following important characterization of the extreme points.
\begin{lemma} \cite[Lemma 11]{panageas2023semi}
The extreme points of the $(s_i,t_i)$-path polytope $\mathcal{X}_i$ correspond to $(s_i,t_i)$-paths of $G(V,E)$ and vice versa.
\end{lemma}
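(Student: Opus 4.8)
The plan is to establish the correspondence in both directions, treating the polytope $\mathcal{X}_i$ as the set of vectors satisfying the listed flow-conservation constraints together with the box constraints $0 \le x \le 1$ implicit in the embedding $\mathcal{X}_i \subseteq [0,1]^m$. The two pillars of the argument are (i) total unimodularity of the node--arc incidence matrix, which forces all extreme points to be integral, and (ii) acyclicity of $G$, which rules out directed cycles and thereby pins down each integral extreme point as a single path.

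\emph{Paths give extreme points.} First I would take an arbitrary $(s_i,t_i)$-path $P$ with indicator vector $x^P \in \{0,1\}^m$ and check the three constraint families directly: $P$ uses exactly one edge out of $s_i$ and exactly one edge into $t_i$, and at every internal vertex $v$ it enters once and leaves once, so $\sum_{e \in \mathrm{In}(v)} x^P_e = \sum_{e \in \mathrm{Out}(v)} x^P_e$. Hence $x^P \in \mathcal{X}_i$. Since $\mathcal{X}_i \subseteq [0,1]^m$ and $x^P$ is a vertex of the hypercube, it cannot be written as a proper convex combination of two distinct points of $\mathcal{X}_i$ (any such combination would place some coordinate strictly between $0$ and $1$), so $x^P$ is an extreme point.

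\emph{Extreme points are paths.} The key step is integrality. I would write the conservation constraints as $N x = b$, where $N$ is the node--arc incidence matrix of $G$ and $b$ is the integral right-hand side dictated by the unit in/out flow at $s_i$ and $t_i$. Since $N$ is totally unimodular and the box bounds $0 \le x \le 1$ are integral, every extreme point of the polytope is integral, i.e. $x^\star \in \{0,1\}^m$. Thus $x^\star$ is a $0/1$ unit flow from $s_i$ to $t_i$. By the flow-decomposition theorem, $x^\star$ splits into one $(s_i,t_i)$-path plus a collection of directed cycles carrying the residual flow; because $G$ is a DAG it has no directed cycle, so the residual is empty and the support of $x^\star$ is exactly the edge set of one $(s_i,t_i)$-path. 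This yields the reverse correspondence and, combined with the first direction, the claimed equivalence.

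The step I expect to be the main obstacle is making the integrality argument airtight: fixing the correct sign convention for $N$ and $b$, confirming that the nonnegativity/box constraints are genuinely part of the description so the total-unimodularity-plus-integral-data theorem applies, and ensuring boundedness of the polytope. It is precisely here that acyclicity does the real work, both in guaranteeing boundedness (no circulation can inflate edge flows) and, via the flow decomposition, in eliminating the cycle terms; in a general directed graph the integral feasible points would also include path-plus-cycle configurations, so the DAG hypothesis cannot be dropped. An alternative to invoking total unimodularity, which I would keep in reserve, is a direct perturbation argument: if $x^\star$ had a fractional coordinate, conservation would let one trace a structure in the fractional support along which flow can be pushed by $\pm\varepsilon$, exhibiting $x^\star$ as the midpoint of two feasible points and contradicting extremality.
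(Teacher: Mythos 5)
The paper itself offers no proof of this lemma: it is imported verbatim by citation from \cite{panageas2023semi}, so there is no internal argument to compare against. Judged on its own terms, your proof is correct and is the standard route: total unimodularity of the incidence system plus integral data (Hoffman--Kruskal) forces integral extreme points, and flow decomposition plus acyclicity collapses each integral unit flow to a single $(s_i,t_i)$-path; the forward direction (path indicators are vertices because they are $0/1$ points of a subset of $[0,1]^m$) is also airtight. You were right to silently read the paper's displayed definition, which writes $x \in \{0,1\}^m$, as the LP relaxation $x \in [0,1]^m$ --- otherwise ``extreme points'' is vacuous as stated --- and right that the DAG hypothesis is what excludes path-plus-cycle integral points.

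One repair is needed at exactly the spot you flagged. As written in the paper, the constraint at $s_i$ is $\sum_{e \in \mathrm{Out}(s_i)} x_e = 1$ (in-edges of $s_i$ omitted) and similarly at $t_i$, so the constraint matrix is \emph{not} literally the node--arc incidence matrix $N$, and ``$N$ is totally unimodular'' does not apply verbatim. Two fixes work: (i) restrict to the subgraph $G_i$ of edges lying on at least one $(s_i,t_i)$-path, as the paper does in Section~\ref{sec:thm7proof}; there acyclicity guarantees $s_i$ has no incoming and $t_i$ no outgoing relevant edges, so the rows become genuine incidence rows (and one must separately argue, via the no-$t_i \rightsquigarrow s_i$-path observation, that stray edges outside $G_i$ carry zero flow at every feasible point); or (ii) verify total unimodularity of the actual matrix directly, e.g.\ by Ghouila--Houri or the two-nonzeros-per-column criterion, placing the $s_i$ row in one class and all remaining rows in the other. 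With either patch, your argument is complete; your perturbation fallback (pushing $\pm\varepsilon$ of flow around the fractional support) would also succeed and is closer in spirit to a from-scratch proof, at the cost of some case analysis that TU handles wholesale.
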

Therefore, despite the fact that there potentially exponentially many extreme points of $\mathcal{X}_i$, the set $\mathcal{X}_i$ is described concisely by $|V|$ constraints.
The first important consequence of this result is that invoking the following theorem (Theorem~\ref{thm:caratheodory}) we can ensure that Step~5 in Algorithm~\ref{alg:bandit-alg} runs in polynomial time.

\begin{theorem} \cite{GLS88}\label{thm:caratheodory}
Let  $x\in \calX_i = \{u \in [0,1]^m,\;  A_i u \leq d_i\}$, with $A_i \in \R^{r_i\times m}$ and $d_i\in \R^m$. Then a Caratheodory decomposition can be computed in polymomial time with respect to $r_i$ and $m$. 
\end{theorem}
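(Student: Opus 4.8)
The plan is to turn the existential statement of Caratheodory's theorem into a constructive, recursive \emph{peeling} procedure whose only nontrivial primitive is a linear programming oracle over $\calX_i$ and its faces. Given $x \in \calX_i$, I would first identify the minimal face $F$ of $\calX_i$ containing $x$, namely $F = \{u \in \calX_i : A_= u = d_=\}$, where $A_=, d_=$ collect the rows of $(A_i, d_i)$ that are tight at $x$; the point $x$ lies in the relative interior of $F$ by construction, and computing $F$ costs only $O(r_i m)$ since it amounts to checking which of the $r_i$ inequalities hold with equality. If $\dim F = 0$ then $x$ is a vertex and we are done. Otherwise, pick a vertex $v$ of $F$ (which is also a vertex of $\calX_i$, since $\calX_i \subseteq [0,1]^m$ is bounded and hence every face has vertices), extend the ray $t \mapsto v + t(x-v)$ past $x$, and let $y = v + t^{\star}(x-v)$ with $t^{\star} \geq 1$ the largest value keeping the ray inside $\calX_i$. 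Then $x = \tfrac{1}{t^{\star}} y + (1 - \tfrac{1}{t^{\star}}) v$ is an explicit convex combination, and $y$ sits on the relative boundary of $F$, hence in a proper subface $F'$ with $\dim F' < \dim F$.

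The algorithm then recurses on $y$ inside its own minimal face and splices the result: if $y = \sum_j \lambda_j v_j$ is the decomposition returned for $y$, then $x = (1-\tfrac{1}{t^{\star}}) v + \tfrac{1}{t^{\star}}\sum_j \lambda_j v_j$. Because each recursive call strictly decreases the dimension of the governing minimal face, the recursion has depth at most $m$ and is \emph{linear} (each level contributes exactly one new vertex $v$), so the final decomposition uses at most $m+1$ vertices, matching the Caratheodory bound. The per-level work is polynomial: the vertex $v$ is obtained by solving one linear program over $F$ (minimize an arbitrary linear objective and extract a basic optimal solution), and $t^{\star}$ is computed by a closed-form min-ratio test, $t^{\star} = \min_{k:\, A_k(x-v) > 0} (d_k - A_k v)/(A_k(x-v))$, ranging over the rows $k$ that are slack at $x$ (the rows in $A_=$ stay tight along the whole segment and never bind). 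Both steps depend polynomially on $r_i$ and $m$, and there are at most $m+1$ of them, yielding the claimed overall polynomial running time.

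The main obstacle is computational rather than combinatorial: guaranteeing that each vertex-finding LP can be solved \emph{exactly} and that one can actually extract a genuine vertex (a basic feasible solution), not merely an approximately optimal interior point, in time polynomial in $r_i$ and $m$. This is precisely the delicate part handled by the ellipsoid method with rational arithmetic of \cite{GLS88}: with rational data $(A_i, d_i)$ one obtains exact rational optima of polynomially bounded bit-length, and a basic optimal solution can be recovered in polynomial time, after which all of the affine bookkeeping (identifying tight rows, forming $F'$, evaluating the min-ratio test) is exact. A secondary point to verify is that a strict dimension drop really occurs at each step, i.e.\ that $y$ acquires at least one new tight constraint relative to $x$; this follows because $x$ lies in the relative interior of $F$ while $y$ lies on its relative boundary. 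Once this is checked, the depth bound, the $m+1$ vertex count, and the polynomial-time guarantee all follow immediately.
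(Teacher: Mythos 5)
Your proposal is correct, and it is essentially the canonical argument: the paper itself provides no proof of this statement (it is cited directly from \cite{GLS88}, with a pointer to \cite[Algorithm 1]{panageas2023semi} for the flow-polytope case), and your face-peeling construction — locate the minimal face containing $x$, pick a vertex $v$ of it, shoot the ray from $v$ through $x$ to the relative boundary via the min-ratio test, and recurse on the resulting point $y$ — is precisely the standard constructive Caratheodory procedure underlying the cited reference. Your accounting is sound on all the points that matter: since $x$ lies in the relative interior of its minimal face while $y$ acquires a newly tight constraint, the face dimension strictly decreases at each level, giving depth at most $m$ and at most $m+1$ vertices; each level costs one exactly-solvable LP plus a closed-form ratio test, and you correctly identify exact rational vertex extraction (a basic, not merely near-optimal, solution) as the one delicate primitive, which is exactly what the machinery of \cite{GLS88} supplies.
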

\noindent Given a shortest path algorithm, this can be done using \cite[Algorithm 1]{panageas2023semi}.
Moreover, also the projection in Step~8 of Algorithm~\ref{alg:bandit-alg} can be computed up to arbitrary accuracy in polynomial time given that $\mathcal{X}$ can be represented via $|V|$ affine constraints.    
The second computational bottleneck in the general case is the spanner computation. However, for the special case of DAGs, we present next an algorithm that construct exact $1$-spanner which has better computational complexity compared to \cite{awerbuch2004adaptive}.The improvement is possible because the  approach by \cite{awerbuch2004adaptive} does not exploit the specific structure of DAGs. We propose, instead, an algorithm that stays in the natural parametrization of the problem and outputs a $1$-spanner. The construction is detailed in Algorithm~ \ref{alg:basis} and rests on a clever use of prefix paths. All in all, we have the next formal result.

%In this section present the algorithm in the special case of network games that runs in polynomial time in terms of $m$ and $n$. 

%With the setting of network games fixed, the first step is computing an exploration set that in polynomial-time with respect to the number of resources. The method of \cite{awerbuch2004adaptive} allows one to compute a $\vartheta$-spanner for any $\vartheta > 1$ with a polynomial number of calls to the linear minization oracle over $\calX_i$ after embedding it in a lower dimensional space where it has non empty interior. Instead of adopting their approach, which does not exploit the specific structure of DAGs, we propose an algorithm that stays in the natural parametrization of the problem and outputs a $1$-spanner. The construction is detailed in section \ref{sec:basis} and rests on a clever use of prefix paths. The next theorem follows from that section. 

\begin{theorem}
\label{thm:dag_basis}
Given a Directed Acyclic Graph $G=(V, E)$ with source $s_i \in V$ and sink $t_i \in V$, there exists a polynomial time algorithm (i.e. Algorithm~\ref{alg:basis}) computing an exact $1$-spanner for $\calX_i$.
\end{theorem}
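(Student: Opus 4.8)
The plan is to verify the three defining properties of an (exact) $1$-spanner from Definition~\ref{def:spanners} for the output of Algorithm~\ref{alg:basis}: that the produced vectors lie in $\calX_i$ and are linearly independent, that they span $\mathrm{span}(\calX_i)$ (so every $x \in \calX_i$ admits a representation), and that in this representation every coefficient satisfies $\alpha_k^2 \le 1$, i.e. $\vartheta = 1$. I would begin by preprocessing: delete every vertex and edge that lies on no $(s_i,t_i)$-path (detected by a reachability pass from $s_i$ and a co-reachability pass to $t_i$), since such edges satisfy $x_e = 0$ on all of $\calX_i$; then fix a topological order $s_i = v_1, \dots, v_N = t_i$ of the surviving vertices. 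The crucial structural fact I would use is that the extreme points of $\calX_i$ are exactly the $(s_i,t_i)$-paths (the cited lemma from \cite{panageas2023semi}), together with the observation that, once an independent spanning set $b_1,\dots,b_s$ is fixed, the coordinates $\alpha_k(x)$ of $x = \sum_k \alpha_k b_k$ are affine-linear functions of $x$. Hence $\max_{x \in \calX_i}\abs{\alpha_k(x)}$ is attained at a vertex of $\calX_i$, i.e. at a path, which reduces the barycentric bound $\vartheta = 1$ to the purely combinatorial statement: every $(s_i,t_i)$-path decomposes in the chosen basis with coefficients in $[-1,1]$.

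Next I would describe the basis itself. Running through the topological order, Algorithm~\ref{alg:basis} maintains for each vertex $v$ a canonical prefix path $\pi(v)$ from $s_i$ to $v$, chosen \emph{consistently} (a prefix forest, so $\pi(v)$ extends the $\pi$ of its chosen predecessor), and symmetrically a canonical suffix $\sigma(v)$ from $v$ to $t_i$. For each retained edge $e = (u,w)$ the candidate basis vector is the indicator of the concatenated path $\pi(u)\,\cdot\, e \,\cdot\, \sigma(w)$; because the graph is a DAG and the order is topological, this concatenation never repeats a vertex and is therefore a genuine $(s_i,t_i)$-path, i.e. an extreme point of $\calX_i$. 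The algorithm then extracts a maximal linearly independent subset of these $O(\abs{E})$ candidates, of size $s = \dim \mathrm{span}(\calX_i)$; this subset spans because the decomposition constructed in the next step already expresses every path — hence every extreme point, hence all of $\calX_i$ — inside the span of the candidate vectors.

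The genuine work lies in the coefficient bound. The naive fundamental-cycle basis built from a single global reference path fails here, since that reference appears in many cycles and its coefficient can grow well beyond $1$; the point of the \emph{prefix} construction is exactly to localize each basis vector's contribution. I would fix an arbitrary path $P = w_0 \xrightarrow{e_1} w_1 \cdots \xrightarrow{e_\ell} w_\ell$ and decompose $x^P$ by walking along $P$ and, at each step, replacing the current prefix by the canonical one; each replacement rewrites the running path as $b_{e_j}$ plus a shorter canonical correction, and the corrections telescope, so that the accumulated coefficient on each basis vector collapses to a value in $\{-1,0,1\}$. The engine behind integrality is the total unimodularity of the node–arc incidence structure of the flow polytope, which forces the $\alpha_k(x^P)$ to be integers once the basis is read off as a submatrix of the incidence matrix in suitable coordinates; the \emph{nesting} of the prefixes is what upgrades ``integer'' to ``in $[-1,1]$''.

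Finally, for the polynomial-time claim I would note that each ingredient — the two reachability passes, the topological sort, the prefix/suffix forest, forming the candidate paths, extracting a maximal independent set by Gaussian elimination, and solving the resulting triangular system for the coefficients — costs $\mathrm{poly}(\abs{V},\abs{E})$. The main obstacle I anticipate is precisely the third step: establishing the exact bound $\vartheta = 1$ rather than merely $\vartheta = O(1)$. Bounding every coefficient by $1$ is a statement about how an arbitrary path overlaps the canonical prefixes, and getting the telescoping to close with coefficients never leaving $\{-1,0,1\}$ — as opposed to the growth with path depth that a careless choice of prefixes would produce — is the delicate part of the argument and the reason the nested prefixes must be chosen exactly as the algorithm prescribes.
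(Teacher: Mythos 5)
Your overall strategy---canonical prefix/suffix paths, one candidate path per edge, and a telescoping rewrite of an arbitrary path---is close in spirit to the paper's proof, and your opening reduction (coefficients are linear functions of $x$, hence extremized at vertices, so it suffices to decompose $(s_i,t_i)$-paths) is exactly the paper's first step. But there is a genuine gap precisely where you yourself locate the ``delicate part,'' and the mechanism you propose does not close it. Your construction produces one candidate $b_e = \pi(u)\cdot e \cdot \sigma(w)$ per edge, i.e.\ $m_i$ vectors, while $\mathrm{span}(\calX_i)$ has dimension only $s = m_i - n_i + 2$. The telescoping along a path $P$ does work, and yields the identity $x^P = \sum_{e \in P} b_e - \sum_{w\ \mathrm{internal\ to}\ P} b_{\hat{e}(w)}$, where $\hat{e}(w)$ is the canonical incoming edge of $w$ (so that $\pi(w)\cdot\sigma(w) = b_{\hat{e}(w)}$); since each vertex and edge occurs at most once on $P$, the coefficients indeed lie in $\{-1,0,1\}$---but only with respect to this \emph{linearly dependent} family of $m_i$ vectors. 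Definition~\ref{def:spanners} requires an \emph{independent} set, and once you ``extract a maximal linearly independent subset by Gaussian elimination,'' every discarded $b_{\hat{e}(w)}$ appearing in the identity must be re-expanded in the kept vectors, and nothing in your argument controls the resulting coefficients. The appeal to total unimodularity cannot rescue this: TU would at best give integrality of the coefficients, and integrality is not the issue---coefficients of magnitude $2$ are perfectly integral. The step ``nesting upgrades integer to $[-1,1]$'' is asserted rather than proved, and it is exactly the content of the theorem.

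The paper avoids ever forming a redundant family. It first designates one \textcolor{blue}{blue} outgoing edge per internal vertex, so that flow conservation \eqref{eq:predictable} makes the remaining \textcolor{red}{red} edges an exact coordinate system of dimension $s = m_i - n_i + 2$ (blue coordinates are recovered by the linear $\mathrm{Fill}$ operator). It then builds exactly one basis path $b_h$ per red edge $e_h$, in topological order, where crucially the prefix of $b_h$ is a truncation of the \emph{previously built basis path} $b_{\texttt{Prefix}(h)}$ rather than a per-vertex canonical prefix (Algorithm~\ref{alg:basis}); connecting and suffix segments are blue. This recursion forces the red-coordinate identity $\mathrm{Red}(b_h) - \mathrm{Red}(b_{\texttt{Prefix}(h)}) = v_h$, the $h$-th canonical unit vector, which simultaneously gives linear independence of the $s$ basis paths (triangularity) and the decomposition $\mathrm{Red}(x^P) = \sum_{h \in P}\br{r_h - r_{\texttt{Prefix}(h)}}$. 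The bound $\vartheta = 1$ then follows from the prefix property (Proposition~\ref{prop:prefix}): \texttt{Prefix} is injective on the red edges of any single path, because those edges are pairwise connected, so no basis vector is subtracted twice and every coefficient lands in $\{-1,0,1\}$. Your vertex-level prefix forest could plausibly be repaired along these lines---restrict the basis to one path per ``red'' edge and run the telescoping in red coordinates---but as written, the proof of the exact bound $\vartheta = 1$ is missing, not merely deferred.
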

We give a constructive proof of Theorem~\ref{thm:dag_basis} in Section~\ref{sec:thm7proof}.
\begin{figure}[t]
    \centering
    \begin{tikzpicture}
  \graph[nodes={draw, circle}, edge quotes={fill=white,inner sep=1pt}] { 
    s -> [thick, red] { b , c [>bend right, <bend right] } -> d -> e -> {f [>thick, >red], g [>bend right, <bend right]} -> t
     };
    \end{tikzpicture}
    \caption{\small{Construction of a $1$-spanner for DAGs. We illustrate 
    Algorithm \ref{alg:basis} on a simple graph. We can select the three red edges as the non redundant edges. We cover these  using 3 paths that will constitute the basis. For edge $s\rightarrow b$, we select $s\rightarrow b\rightarrow d\rightarrow e\rightarrow g\rightarrow t$. For the edge $s\rightarrow c$, we first check if is reachable from edge $s\rightarrow b$, we notice it is not. We then find a path starting from $s$. In this case, we select $s\rightarrow c\rightarrow d\rightarrow e\rightarrow g\rightarrow t$. For edge $e\rightarrow f$ we check if is reachable from the last covered edge (in topological order), we notice it is reachable from edge $s\rightarrow c$ so we select $s\rightarrow c\rightarrow d\rightarrow e\rightarrow f\rightarrow t$. The key idea we use to construct a 1-spanner is to ensure that when we cover edges, we first try to reach them with the previously covered edges going in reverse topological order. This prefix property ensures the 1-spanner property.}}
    \label{fig:basis}
\end{figure}
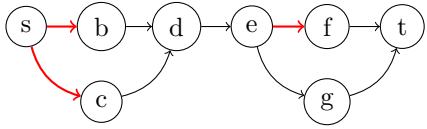
Overall, for the case of DAG we have the following algorithm that runs in polynomial time.
The difference compared to the general case (i.e. Algorithm~\ref{alg:bandit-alg}) is that in Step 2 the spanner is computed efficiently invoking Algorithm~\ref{alg:basis}.
\begin{algorithm}[ht]
			\caption{Bandit Gradient Descent with Caratheodory Exploration and Bounded Away polytopes (Agent's $i$ perspective) \textbf{for DAGs} }
			\label{alg:bandit-alg-dags}\nonumber
			\begin{algorithmic}[1]
	\STATE \textbf{Input:} Step size sequence $(\gamma_t)_t$, bias coefficients $(\mu_t)_t$, a constant $\vartheta$.

    \STATE Agent $i$ computes a $1$-barycentric spanner $\mathcal{B} = \bc{b_1, \dots, b_s}$ with Algorithm~\ref{alg:basis}.
    %\STATE Agent $i$ defines a centering point $u_i = \frac{1}{s}\sum_{k=1}^sb_k \in \calX_i$.
    \STATE Agent $i$ selects an arbitrary $x^1_i \in \calX_i$.
				\FOR{each round $t=1,\ldots, T$}
       %\STATE Agent $i$ samples a strategy $p^t_i \sim \tilde{x}_i^t $ using the \texttt{Shifted Caratheodory} sampler with center $u_i$ (see \ref{def:shiftedCara}).
       %\STATE Define $z^t_i = \frac{1}{1-\mu_t}\br{x^t_i - \mu_t u_i}$
       \STATE Agent $i$ sets $x_i^t = B_i \alpha_i^t$.
       \STATE Agent $i$ samples $p^t_i \sim \pi^t_i$ where $\pi^t_i = \texttt{CaratheodoryDistribution}(x^t_i; \mu_t, \mathcal{B})$ (Algorithm~\ref{alg:caratheodory_distr}).
       \STATE Agent $i$ suffers cost, \[l_i^t:= \innerprod{c^t}{{p_i^t}}\]     
         \STATE Agent $i$ sets 
         \(
         \hat{c}^t\leftarrow l_i^t \cdot M_{i,t}^+ p_i^t
         \)
         where $M_{i,t} = \E_{v \sim \pi_i^t }\left[vv^\top\right]$.
\smallskip
	\STATE Agent~$i$ updates $\alpha^{t+1}_i$ as,
       \[\alpha_i^{t+1} = \Pi_{\calD^{\mu_{t+1}}_i}\br{\alpha_i^t - \gamma_t B_i^T \hat{c}^t }
       \]
       
\ENDFOR
	\end{algorithmic}
   
\end{algorithm}

\subsection{Constructing the spanner of DAGs}
\label{sec:thm7proof}
In this section we present Algorithm~\ref{alg:basis} that computes an $1$-barycentric spanner for the special case of DAGs. To simplify notation for a given agent $i\in [n]$, we denote by $\calS_i \subset \R^m$, the strategy space corresponding to set of all paths connecting $s_i$ to $t_i$. We can restrict our attention to the subgraph $G_i = (V_i, E_i)$ where $V_i$ and $E_i$ corresponds to the nodes and edges appearing in at least one path in $\calS_i$.

\subsubsection{Blue edges}

The convex hull of the strategy space $\calS_i$ forms the path polytope $\calX_i = \text{conv}(\calS_i)$. This polytope is included in a subspace of $\R^{m}$ of dimension $m_i - n_i + 2$, where $n_i = |V_i|$. Indeed, for each node $v \in V \backslash \{s_i, t_i\}$, we can pick one outgoing edge $e_v^* \in \text{out}(v)$ such that for any $x \in \calP_i$, we have
\begin{equation}
\label{eq:predictable}
x_{e_v^*} = \sum_{e \in \text{in}(v)} x_e - \sum_{e \in \text{out}(v), e \neq e_v^* } x_e
\end{equation}
for all $v \in V \backslash \{s_i, t_i\}$. These equations come from reasoning about flow preservation.
Consequently, $\calX_i$ belongs to the intersection of $n_i - 2$ hyperplanes, which is of dimension at most $m_i - n_i + 2$. In other words, although the strategy space is of dimension $m_i$, the degrees of freedom are restricted by the graph structure as some coordinates are redundant and predictable from other coordinates (see \eqref{eq:predictable}). We single out these redundant edges in the following definition.

\begin{definition}
For all $v \in V_i \backslash \{s_i, t_i\}$ (i.e all nodes except the blue and termination nodes), we arbitrarily pick one edge denoted $e_v^* \in \text{out}(v)$ that will be referred to as a \textit{{\textcolor{blue}{blue}} edge}.
\end{definition}

The remaining edges will be referred to as a \textcolor{red}{red} edges. These red edges will aid us in constructing a $1$-spanner. Indeed, from equation \eqref{eq:predictable}, we can see that the coordinates corresponding to \textcolor{blue}{blue} edges can be determined by the values at the \textcolor{red}{red} edges.

\subsubsection{Basis construction}

In order to construct the basis, we first need to perform a \textit{topological ordering} of the nodes. A topological ordering of the nodes of a graph is a total ordering of the nodes such that for every directed edge with source vertex $u \in V$ and destination vertex $v \in V$, the node $u$ comes before $v$ in the ordering. We will use the $<$ symbol to denote such an ordering.

Let $v_1=s_i, v_2,  \dots, v_n = t_i$ be a topological ordering of the nodes of $G_i$. This induces a topological ordering on the edges (sorted according to their origin node). We will construct a $1$-spanner for $\calX_i$ following this ordering. The following simple lemma about blue paths will be essential.

\begin{definition}[Blue path]
    A path in $G_i$ is said to be a \emph{blue path} if consists entirely of blue edges.
\end{definition}

\begin{lemma}[Blue path lemma]
\label{lem:clean-path}
For any node $v_k \in V_i \backslash \{s_i\}$, there exists a blue path connecting $v_k$ to $v_n = t_i$.
\end{lemma}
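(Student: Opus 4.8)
The plan is to prove the statement by a greedy walk that only follows blue edges, with termination guaranteed by the acyclicity of $G_i$; equivalently, one can phrase the same argument as a reverse induction along the topological order $v_1 = s_i < v_2 < \dots < v_n = t_i$. Before running the argument I would first isolate two structural facts about the restricted subgraph $G_i = (V_i, E_i)$, whose nodes and edges all lie on at least one $(s_i,t_i)$-path. First, every node $v \in V_i \setminus \{t_i\}$ has at least one outgoing edge in $G_i$: since $v$ lies on some $(s_i,t_i)$-path that does not end at $v$, the path must continue past $v$, giving an outgoing edge. In particular a blue edge $e_v^* \in \mathrm{out}(v)$ is well defined for every $v \in V_i \setminus \{s_i, t_i\}$, and $t_i$ is the unique node of $V_i \setminus \{s_i\}$ having no blue edge. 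Second, $s_i$ has no incoming edge in $G_i$: any edge $(u, s_i)$ lying on an $(s_i,t_i)$-path would yield a directed walk $s_i \to \dots \to u \to s_i$, i.e.\ a cycle, contradicting that $G_i$ is a DAG. Hence $s_i$ is unreachable from any other node inside $G_i$.

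The main argument then proceeds as follows. Starting from the given node $v_k \neq s_i$, I would build a walk: at the current node $w$, if $w = t_i$ the walk stops, and otherwise I traverse the blue edge $e_w^*$. This is always legitimate because every visited node is distinct from $s_i$ --- the start $v_k$ is so by hypothesis, and every subsequently visited node is reached through an edge, hence has an incoming edge and so is not $s_i$ by the second fact --- so by the first fact each such $w \neq t_i$ indeed possesses a blue edge to follow. Each traversal strictly advances the position in the topological order, so the walk visits each node at most once and must terminate after finitely many steps; it can only terminate at $t_i$, the unique non-$s_i$ node without an outgoing blue edge. Concatenating the traversed blue edges produces the desired blue path from $v_k$ to $t_i$. (In the inductive phrasing, the base case $v_n = t_i$ is the trivial empty blue path, and for $v_k \in V_i \setminus \{s_i,t_i\}$ the blue edge $e_{v_k}^* = (v_k, w)$ has $w$ strictly later in the topological order and $w \neq s_i$, so the induction hypothesis supplies a blue path from $w$ to $t_i$, which prepending $e_{v_k}^*$ extends to one from $v_k$.)

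The only genuinely delicate point --- and the step I would treat most carefully --- is ensuring that the walk is always well defined: it must never stall at a non-$t_i$ sink and must never arrive at $s_i$, where no blue edge has been assigned. Both potential failures are ruled out precisely by restricting attention to the reachability subgraph $G_i$: its defining property (every node and edge lies on some $(s_i,t_i)$-path) is exactly what guarantees that every non-terminal node has an outgoing blue edge and that $s_i$ has no incoming edge. Given these two facts, termination is immediate from acyclicity via the strictly increasing topological index, and no nontrivial computation is required.
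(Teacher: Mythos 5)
Your proof is correct and follows essentially the same route as the paper: the paper argues by reverse induction on the topological order (each blue edge leads strictly forward, and the inductive hypothesis supplies a blue path from there to $t_i$), which is exactly your greedy blue-walk in its inductive phrasing. The only difference is that you explicitly verify the structural facts the paper leaves implicit --- that every node of $V_i \setminus \{t_i\}$ has an outgoing edge in $G_i$ (so blue edges are well defined) and that $s_i$ has no incoming edge (so the walk never reaches a node without an assigned blue edge) --- which is a welcome tightening rather than a different argument.
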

\begin{proof}
We proceed by induction on the topological ordering.
For $v_{n-1}$, we pick a blue outgoing  edge. By definition of a topological ordering, the chosen edge will necessarily lead to $v_n = t_i$.

Now let $k \in [2, n-2]$ and assume that the lemma holds for all for $l > k$. We consider the node $v_{k}$ and pick an outgoing blue edge. It will lead to a node $v_l$ with $l > k$. By induction hypothesis, there exists a path connecting $v_l$ to $t_i$ that only consists of blue edges. Concatenating the picked outgoing edge with this path yields the result for $v_k$ so the lemma holds for $k$.
\end{proof}

\noindent We now have all the tools needed for the construction of the basis $b_1, \dots, b_s$ where $s = m_i - n_i + 2$ is the total number of red edges. We provide the procedure in Algorithm \ref{alg:basis}.

\begin{algorithm}[ht]
			\caption{Edge covering basis}
			\label{alg:basis}
			\begin{algorithmic}[1]
				\STATE {\bfseries Input:} Red edges $e_1, \dots, e_s$ in topological order.
                  \smallskip
                  \STATE $\mathrm{Basis} \leftarrow \varnothing$
                  \FOR{$h=1$ to $s$}
                    \STATE Let $p_{e_h \rightarrow t_i}$ be a \textit{blue path} connecting $\text{dest}(e_h)$ to $t_i$ (given by Lemma \ref{lem:clean-path}).
                    \FOR{$k=h-1$ to $1$}
                    \IF{ there exists a path $p_{k\rightarrow h}$ joining $\text{dest}(e_k)$ to $\text{source}(e_h)$}
                        \STATE Set $b_h \gets$ Truncate$(b_k, e_k)\;|\;p_{k\rightarrow h}\;|\; p_{e_h \rightarrow t_i}$
                        \STATE Set \texttt{Prefix}$(h) \gets k$  %\textcolor{red}{break the for loop if we entered the if ?}
                        \STATE \textbf{break}
                    \ENDIF
                    \ENDFOR
                    \IF{there is no preceding red edge connected to $e_h$}
                    \STATE Let $p_{s_i \rightarrow e_h}$ be a \textit{blue} path connecting $s_i$ to dest$(e_h)$.
                    \STATE Set $b_h \gets$ $p_{s_i \rightarrow e_h}\;|\; p_{e_h \rightarrow t_i}$
                    \STATE Set \texttt{Prefix}$(h) \gets \bot$
                    \ENDIF
                    \STATE $\mathrm{Basis} \leftarrow \mathrm{Basis} \cup \{ b_h \}$ 
                  \ENDFOR
				\STATE {\bfseries return} $\mathrm{Basis}$
			\end{algorithmic}
\end{algorithm}

\begin{proposition}[Prefix property]
Consider a covering basis generated by Algorithm~\ref{alg:basis}. Let $e_k < e_l$ be two red edges. If $e_k$ and $e_l$ are connected in $G(V_i, E_i)$, then 
\texttt{Prefix}$(k) \neq\texttt{Prefix}(l)$ where \texttt{Prefix} is the value set at lines $8$ and $13$ of Algorithm~\ref{alg:basis}.
\label{prop:prefix} 
\end{proposition}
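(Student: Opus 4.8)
The plan is to argue directly from the precise semantics of the \texttt{Prefix} pointer as assigned by Algorithm~\ref{alg:basis}, rather than from any geometric property of the constructed paths. The inner loop at lines 5--11 scans the already-processed red edges in \emph{decreasing} index order ($j = h-1, h-2, \dots, 1$) and breaks at the first index $j$ for which there is a directed path joining $\text{dest}(e_j)$ to $\text{source}(e_h)$; if no such index exists, lines 12--16 set $\texttt{Prefix}(h) \gets \bot$. The first step I would take is to record the resulting characterization. Writing $R(h) := \{\, j < h : \text{there is a directed path from } \text{dest}(e_j) \text{ to } \text{source}(e_h)\,\}$, the break-on-first-success structure of the descending loop gives exactly
\[
\texttt{Prefix}(h) = \begin{cases} \max R(h) & \text{if } R(h) \neq \varnothing,\\ \bot & \text{if } R(h) = \varnothing.\end{cases}
\]
The key feature to extract here is \emph{maximality}: because the loop runs downward and halts at the first qualifying index, $\texttt{Prefix}(h)$ is the \emph{largest} red-edge index below $h$ that can reach $e_h$.

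Next I would unpack the hypothesis. Since $e_k < e_l$ in the topological ordering, any directed path between these edges must run from $e_k$ to $e_l$; hence the assumption that $e_k$ and $e_l$ are connected in $G(V_i,E_i)$ means precisely that there is a directed path from $\text{dest}(e_k)$ to $\text{source}(e_l)$. Therefore $k \in R(l)$, and in particular $R(l) \neq \varnothing$, so $\texttt{Prefix}(l)$ is a genuine index satisfying $\texttt{Prefix}(l) = \max R(l) \geq k$.

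Finally I would compare the two pointers. By the characterization above, $\texttt{Prefix}(k)$ is either $\bot$ or an element of $R(k)$, and every element of $R(k)$ is strictly smaller than $k$; treating $\bot$ as distinct from every numeric index, in all cases $\texttt{Prefix}(k) \neq \texttt{Prefix}(l)$ follows from $\texttt{Prefix}(l) \geq k > \texttt{Prefix}(k)$. This closes the argument.

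The only place where I would be careful — and what I regard as the true content of the statement rather than a routine step — is the bookkeeping around the sentinel $\bot$ together with the \emph{direction} of the topological order. One must verify both that connectivity forces $\texttt{Prefix}(l) \neq \bot$ (so the two pointers cannot collapse to the empty sentinel) and that ``first success scanning backwards'' genuinely selects the maximal qualifying index; it is exactly this maximality that guarantees $\texttt{Prefix}(l) \geq k$ rather than some smaller value that might coincide with $\texttt{Prefix}(k)$. Everything beyond this reduces to transitivity of reachability in the DAG, which needs no separate argument.
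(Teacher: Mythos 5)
Your proof is correct and follows essentially the same route as the paper's: both rest on the two facts that $\texttt{Prefix}(k) < k$ by construction and that the descending scan, combined with connectivity, forces $\texttt{Prefix}(l) \geq k$, whence the two pointers differ. The only difference is presentational — the paper phrases it as a two-line contradiction while you argue directly via the $\max R(h)$ characterization, with slightly more explicit handling of the $\bot$ sentinel, which the paper's contradiction framing covers implicitly.
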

\begin{proof}
    Suppose $i = \texttt{Prefix}(k) = \texttt{Prefix}(l)$. Then by construction $e_i<e_k<e_l$. On the other hand, since the prefixes are set in reverse topological order and $e_k$ and $e_l$ are connected, we must have \texttt{Prefix}$(l) \geq k$. A contradiction.
\end{proof}

This prefix property is the central ingredient needed to prove that the generated basis is a $1$-barycentric spanner. We show this formally in the following theorem.

\begin{theorem}[1-Spanner]
Let $b_1, \dots, b_s$ be the covering basis generated by \texttt{Algorithm}~\eqref{alg:basis}. For any $x \in \calX_i$, there exists $\alpha \in \R^s$ such that
\[
x = \sum_{h=1}^{s} \alpha_h b_i \quad\quad \text{ and } \alpha_h^2 \leq 1
\]
\end{theorem}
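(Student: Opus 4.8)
The plan is to reduce the statement to a triangular linear system in the ``red'' coordinates and then to read off the coefficients $\alpha_h$ explicitly from the prefix forest built by Algorithm~\ref{alg:basis}. First I would exploit the fact that $\calX_i$ is parametrized by its red edges: by the conservation identities~\eqref{eq:predictable}, solving the intermediate nodes in topological order writes every blue coordinate as one fixed linear function of the red coordinates, a relation valid for any vector obeying the node-conservation equations. Each $b_h$ is an $(s_i,t_i)$-path and hence obeys these homogeneous equations, so any combination $\sum_h \alpha_h b_h$ does too. Therefore two such vectors that agree on all red coordinates coincide, and it suffices to solve $R\alpha = r(x)$ for $\alpha\in\R^s$, where $r(x)=(x_{e_1},\dots,x_{e_s})$ and $R_{lh}=(b_h)_{e_l}$, while guaranteeing $\alpha_h^2\le 1$.

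Next I would pin down the support of each column $b_h$, i.e.\ the set $S_h$ of red edges it traverses. The segment of $b_h$ after $e_h$ is a blue path, and the prefix segment only visits nodes preceding $\mathrm{source}(e_h)$ in topological order; hence $b_h$ uses $e_h$ exactly once and every other red edge it uses has a strictly smaller index, so $R$ is upper triangular with unit diagonal. The crucial refinement is that the connecting paths $p_{k\to h}$ carry \emph{no} red edges: if a red edge $e_j$ lay on a path from $\mathrm{dest}(e_k)$ to $\mathrm{source}(e_h)$ with $k=\texttt{Prefix}(h)$, then $\mathrm{dest}(e_j)$ would reach $\mathrm{source}(e_h)$, and one checks $k<j<h$ (its source is after $\mathrm{dest}(e_k)$ but before $\mathrm{source}(e_h)$), contradicting that the inner loop selects the largest admissible index. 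Since $\mathrm{Truncate}(b_k,e_k)$ merely drops the final blue tail of $b_k$, this gives $S_h=S_{\texttt{Prefix}(h)}\cup\{h\}$ (with $S_h=\{h\}$ at the roots), so the red edges of $b_h$ are exactly its prefix-ancestor chain and $R$ is the ancestor-indicator matrix of the prefix forest.

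With this structure the $l$-th equation of $R\alpha=r(x)$ becomes the subtree sum $\sum_{h\in\mathrm{subtree}(l)}\alpha_h = x_{e_l}$, whose tree (Möbius) inversion yields the closed form $\alpha_l = x_{e_l}-\sum_{c:\,\texttt{Prefix}(c)=l} x_{e_c}$. To bound it I would invoke Proposition~\ref{prop:prefix}: distinct children of $l$ share the prefix $l$, so they cannot be connected in $G(V_i,E_i)$, i.e.\ they form an antichain for reachability. Writing $x\in\calX_i$ as a convex combination of $(s_i,t_i)$-paths and noting that each such path meets an antichain in at most one edge, we get $\sum_{c}x_{e_c}\le 1$; together with $x_{e_l}\in[0,1]$ this gives $\alpha_l\in[x_{e_l}-1,\,x_{e_l}]\subseteq[-1,1]$, hence $\alpha_l^2\le1$. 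Existence and uniqueness of $\alpha$ are immediate because $R$ is triangular with unit diagonal.

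The main obstacle is the claim in the second paragraph that the connecting paths are red-edge-free: this is what collapses the a priori recursive edge cover into the clean prefix forest and turns $R$ into a triangular $0/1$ ancestor matrix whose inverse is controlled. It relies simultaneously on the topological ordering of the edges and on the maximality (first-found, reverse-topological) rule in the inner loop of Algorithm~\ref{alg:basis}. Once it is in place, the triangular reduction and the antichain estimate are routine.
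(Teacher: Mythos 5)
Your proof is correct, and while it rests on the same pillars as the paper's argument---the red/blue edge split, the linear \emph{Fill} operator determined by \eqref{eq:predictable}, and Proposition~\ref{prop:prefix}---it executes them along a genuinely different route. The paper reduces to extreme points $x \in \calS_i$ and telescopes the identity $v_h = r_h - r_{\texttt{Prefix}(h)}$ over the red edges of the path, using the injectivity of $\texttt{Prefix}$ on pairwise-connected edges to conclude that the resulting coefficients lie in $\{-1,0,1\}$. You instead note that the red-coordinate matrix of the basis is the ancestor-indicator matrix of the prefix forest (triangular with unit diagonal), invert it by tree/M\"obius inversion to obtain the closed form $\alpha_l = x_{e_l} - \sum_{c:\,\texttt{Prefix}(c)=l} x_{e_c}$ valid for \emph{every} $x \in \calX_i$ directly, and bound it via the antichain property of siblings plus convexity; the two uses of Proposition~\ref{prop:prefix} are dual (the paper: edges on a common path have distinct prefixes; you: children of a common prefix are mutually unreachable, hence any path crosses at most one). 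A notable merit of your write-up is that you explicitly prove the support recursion $S_h = S_{\texttt{Prefix}(h)} \cup \{h\}$, i.e.\ that the connecting paths $p_{k\to h}$ carry no red edges, forced by the maximality of $k$ in the inner loop of Algorithm~\ref{alg:basis}---the paper states the equivalent identity $v_h = r_h - r_{\texttt{Prefix}(h)}$ as a bare observation, so your argument fills a step it leaves implicit. What your version buys is an explicit, unique coefficient vector and a proof that bypasses the vertex decomposition; what the paper's version buys is brevity, since at a $0/1$ vertex the cancellation is immediate once that identity is granted.
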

\begin{proof}
    It suffices to prove the result for $x \in \calS_i$, the extreme points of $\calX_i$. Let $r_x = \text{Red}(x) \in \R^s$ where $\text{Red}$ is the linear operator selecting the coordinates corresponding to the red edges. Correspondingly, let us define $r_1, \dots, r_s$ such that
    \[
    r_h = \text{Red}(b_h)
    \]
    for $h=1, \dots, s$. Observe that the canonical basis vectors $v_1, \dots, v_s$ of $\R^s$ can be expressed as
    \[
    v_h = r_h - r_{\texttt{Prefix}(h)}
    \]
    for $h=1, \dots, s$, and taking $r_{\bot} = 0_s$. Consequently,
    \[
    r_x = \sum_{h\in r_x} v_h = \sum_{h\in r_x}  \br{r_h - r_{\texttt{Prefix}(h)}} = \sum_{h =1}^{s} \alpha_h r_h
    \]
    for some $\alpha \in \R^s$. Now it remains to prove that $|\alpha_h|\leq 1 $. We know, by the prefix property \ref{prop:prefix}, that the mapping $\texttt{Prefix}: \{h: h\in r_x\} \rightarrow [s-1] \cup \{\bot\}$ is injective since the edges in $\{h: h\in r_x\}$ are connected. In other words, there are no duplicates in $\{\texttt{Prefix}(h), h\in r_x\}$. We express $r_x$ in the following convenient form.
    \[
    r_x = \sum_{h\in r_x} r_h - \sum_{h \in \{\texttt{Prefix}(h), h\in r_x\}} r_h
    \]
    With this, we can reason on a case by case basis for each coordinate as follows. Let $h \in [s]$. We first consider the case where $h \in r_x$. Since there are no duplicates, if we also have that $h \in \{\texttt{Prefix}(h), h\in r_x\}$, then $\alpha_h = 0$ otherwise $\alpha_h = 1$. Similarly, if $h \notin r_x$, then we either have $h \in \{\texttt{Prefix}(h), h\in r_x\}$ in which case $\alpha_h = -1$ or if not $\alpha_h = 0$. We thus find that $\alpha_h^2 \leq 1$. 
    Now to conclude, we know from \eqref{eq:predictable} that there exists a linear operator Fill$:\R^s \rightarrow \R^m$ that \textit{fills} in the values of the blue edges from the coordinate values of the red edges, hence $x = \text{Fill}\br{\text{Red}\br{x}}$, which yields,
    \[
    x = \text{Fill}\bs{\sum_{h =1}^{s} \alpha_h r_h} = \sum_{h =1}^{s} \alpha_h \text{Fill}\bs{r_h} = \sum_{h =1}^{s} \alpha_h b_h.
    \]
\end{proof}

\section{Proof sketches}
\label{sec:sketches}
In this section we provide the basic steps for establishing Theorem~\ref{thm:bigRegret} and Theorem~\ref{thm:bigNash}.
\subsection{Regret analysis}
\label{sec:regret}
The main observation needed to prove Theorem 1 is to notice that at Step 8 of Algorithm~\ref{alg:bandit-alg} the sequence $\alpha_i^{1:T}$ is obtained performing a close variant of Online Gradient Descent (OGD) on the sequence of gradient estimates $B^\top \hat{c}^{1:T}$. The subtle difference here is that the projection is done on $\calD^{\mu_t}_i$, a time varying polytope. Luckily, a small variation in the analysis allows us to establish a guarantee similar to that of online gradient descent with an added $\mu_t$ dependent term.

We first slightly expand the definition of regret to include a fixed comparator $u \in \calX_i$. We define the regret with respect to a comparator as follows
\[
\Regret\br{p_i^{1:T}, c^{1:T}; u} :=  \sum_{t = 1}^{T} \innerprod{c^t}{{p_i^t} - u}.
\]
It is easy to see that the regret defined earlier is obtained by taking the comparator $u^\star = \min_{u \in \calS_i} \sum_{t = 1}^{T} \innerprod{c^t}{u}$, which is the best fixed action in hindsight. With this extended notion of regret, we can prove the following result on the approximate online gradient descent scheme performed by our algorithm.

\begin{restatable}[Moving OGD]{lemma}{movingOGD}
\label{lem:OGD}
      Let $x_i^{1:T}$ and $\hat{c}^{1:T}_i$ be the sequences produced by \texttt{Algorithm} \ref{alg:bandit-alg},
      \begin{equation}
      \Regret\br{x_i^{1:T}, \hat{c}^{1:T}; u} \leq \frac{2m}{\gamma_T} + 2\sum_{t=1}^{T}\gamma_t\|\hat{c}^t\|_2^2 + 2mc_{\max}\sum_{t=1}^{T}\mu_t.
      \end{equation}
\end{restatable}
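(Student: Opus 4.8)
The plan is to recognize that Step~8 of Algorithm~\ref{alg:bandit-alg} is simply projected online gradient descent in the reparametrized variable $\alpha$, and to run the textbook OGD telescoping argument while carefully accounting for the fact that the feasible set $\calD^{\mu_t}_i$ changes with $t$. First I would pass from the $x$-space regret to the $\alpha$-space regret. Since $u\in\calX_i$ and the columns of $B_i$ form a $\vartheta$-spanner (Definition~\ref{def:spanners}), there is $\beta\in\calD_i$ with $u=B_i\beta$; hence, writing $g^t:=B_i^\top\hat c^t$ for the gradient actually fed to the projection step,
\[
\innerprod{\hat c^t}{x_i^t-u}=\innerprod{\hat c^t}{B_i(\alpha_i^t-\beta)}=\innerprod{g^t}{\alpha_i^t-\beta}.
\]
So it suffices to bound $\sum_t\innerprod{g^t}{\alpha_i^t-\beta}$ for the iterates $\alpha_i^{t+1}=\Pi_{\calD^{\mu_{t+1}}_i}(\alpha_i^t-\gamma_t g^t)$.

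The obstacle specific to our setting is that the comparator $\beta\in\calD_i$ need not lie in the shrunk set $\calD^{\mu_{t+1}}_i$ onto which we project, so the usual non-expansiveness inequality cannot be applied to $\beta$ directly. To fix this I would introduce the moving comparator
\[
w^{t+1}:=(1-\mu_{t+1})\beta+\tfrac{\mu_{t+1}}{s}\mathbbm{1}\in\calD^{\mu_{t+1}}_i,
\]
which is feasible for the step-$t$ projection by Definition~\ref{l:bounded-away-polytope}, and split each summand as
\[
\innerprod{g^t}{\alpha_i^t-\beta}=\innerprod{g^t}{\alpha_i^t-w^{t+1}}+\mu_{t+1}\innerprod{g^t}{\tfrac1s\mathbbm{1}-\beta},
\]
using $w^{t+1}-\beta=\mu_{t+1}(\tfrac1s\mathbbm{1}-\beta)$. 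The first piece is the genuine OGD term against a feasible comparator; the second is the price paid for operating on the shrunk polytope.

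For the first piece, the projection inequality $\|\alpha_i^{t+1}-w^{t+1}\|^2\le\|\alpha_i^t-\gamma_t g^t-w^{t+1}\|^2$ yields, after expanding,
\[
\innerprod{g^t}{\alpha_i^t-w^{t+1}}\le\frac{\|\alpha_i^t-w^{t+1}\|^2-\|\alpha_i^{t+1}-w^{t+1}\|^2}{2\gamma_t}+\frac{\gamma_t}{2}\|g^t\|^2.
\]
Summing over $t$, the quadratic terms telescope: the biases $\mu_t$ are decreasing so the sets $\calD^{\mu_t}_i$ are nested and growing, $1/\gamma_t$ is increasing, and every distance is bounded by $\mathrm{diam}(\calD_i)^2\le 4\vartheta^2 m\le 4m$ (as $\calD_i\subseteq[-\vartheta,\vartheta]^s$ with $s\le m$ and $\vartheta$ of order one), so the dominant contribution collapses to $\tfrac{2m}{\gamma_T}$; the drift of the comparator contributes only lower-order terms of the form $\sum_t(\mu_t-\mu_{t+1})\,\mathrm{diam}(\calD_i)/\gamma_t$, which themselves telescope and are absorbed. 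The gradient term is $\tfrac{\gamma_t}{2}\|g^t\|^2=\tfrac{\gamma_t}{2}\|B_i^\top\hat c^t\|_2^2\le 2\gamma_t\|B_i^\top\hat c^t\|_2^2$, which is exactly the stated middle term (the norm there being that of the gradient $B_i^\top\hat c^t$ controlled in Lemma~\ref{lemma:properties_estimator_online}).

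Finally, for the shrinkage term I would map back to $x$-space: $\mu_{t+1}\innerprod{g^t}{\tfrac1s\mathbbm{1}-\beta}=\mu_{t+1}\innerprod{\hat c^t}{\bar b-u}$, where $\bar b=B_i\tfrac1s\mathbbm{1}=\tfrac1s\sum_k b_k\in\calX_i$ is the uniform spanner point. Since both $\bar b$ and $u$ lie in $\calX_i\subseteq[0,1]^m$, the orthogonal-bias identity of Lemma~\ref{lemma:properties_estimator_online} gives $\E\innerprod{\hat c^t}{\bar b-u}=\innerprod{c^t}{\bar b-u}\le c_{\max}\|\bar b-u\|_1\le 2mc_{\max}$, so this part contributes at most $2mc_{\max}\sum_t\mu_{t+1}\le 2mc_{\max}\sum_t\mu_t$. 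I expect the two genuinely delicate points to be (i) the bookkeeping of the moving comparator in the telescoping, i.e.\ checking that the comparator-drift terms are truly lower order for the prescribed schedules of $\mu_t$ and $\gamma_t$, and (ii) the control of the shrinkage term, which is the only place the \emph{true} cost magnitude $c_{\max}$ enters rather than the potentially unbounded estimator $\hat c^t$, and which therefore hinges on the unbiasedness of $\hat c^t$ over $\calX_i$ rather than on any pathwise bound.
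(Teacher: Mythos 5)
Your route is the mirror image of the paper's, and in substance it is the same decomposition. The paper rectifies the \emph{iterates}: it maps $\alpha_i^t \in \calD_i^{\mu_t}$ back to the fixed polytope via $\tilde\alpha_i^t = \frac{1}{1-\mu_t}\br{\alpha_i^t - \frac{\mu_t}{s}\mathbbm{1}}$, uses the projection identity (Lemma~\ref{lemma:proj}) to show that $\tilde\alpha_i^{1:T}$ performs OGD on the \emph{fixed} set $\calD_i$ with step $\tilde\gamma_t = \gamma_t/(1-\mu_{t+1})$ plus a drift error $(\mu_{t+1}-\mu_t)e_t$, and compares against a fixed comparator; the $2mc_{\max}\sum_t\mu_t$ price arises when passing from $\innerprod{\hat c^t}{x_i^t-u}$ to $\innerprod{\hat c^t}{\tilde x_i^t-u}$. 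You instead keep the iterates and shrink the \emph{comparator}, $w^{t+1}=(1-\mu_{t+1})\beta+\tfrac{\mu_{t+1}}{s}\mathbbm{1}\in\calD_i^{\mu_{t+1}}$. Your bookkeeping is in fact cleaner: it avoids the $\frac{1}{1-\mu_t}$ rescaling, the error vector $e_t$, the requirement $\mu_t\le\frac12$, and the paper's separate crude treatment of the early rounds $t\le 32m^4n/c_{\max}$ (which costs it an additive $32nm^4$). Your comparator-drift terms of the form $\sum_t (\mu_{t+1}-\mu_{t+2})\,m/\gamma_t$ are telescoped exactly as the paper handles its $8m(\mu_t-\mu_{t+1})/\tilde\gamma_t$ terms via Lemma~\ref{lem:telescope}, so that part is sound; note that neither you nor the paper literally attains the constant $\frac{2m}{\gamma_T}$ of the statement (the paper's own appendix derivation yields $\frac{5m}{\gamma_T}$ plus lower-order terms), so looseness there is not disqualifying.

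The genuine gap is your treatment of the shrinkage term $\mu_{t+1}\innerprod{\hat c^t}{\bar b - u}$. Lemma~\ref{lem:OGD} is a \emph{pathwise} inequality: in the proof of Theorem~\ref{thm:bigRegret} it is inserted directly into the Azuma-based chain of Lemmas~\ref{lem:first-concentration} and~\ref{lem:second-concentration}, so an expectation-level bound does not establish it. Bounding $\E\bs{\innerprod{\hat c^t}{\bar b - u}} = \innerprod{c^t}{\bar b - u}\le 2mc_{\max}$ via the orthogonal-bias property controls the wrong object: pathwise, Lemma~\ref{lem:bounded-estimator} only gives $\norm{B_i^\top\hat c^t}_2\le \vartheta m^{5/2}c_{\max}/\mu_t$, so each summand $\mu_{t+1}\innerprod{\hat c^t}{\bar b-u}$ can be of order $m^3c_{\max}$, and your argument as written would only yield a bound linear in $T$, not $2mc_{\max}\sum_t\mu_t$. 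You flag this yourself, but flagging is not resolving. The repair inside your scheme is a third application of Azuma--Hoeffding to the martingale $\sum_t \mu_{t+1}\innerprod{\hat c^t - c^t}{\bar b - u}$, whose increments are bounded by $\mathcal{O}(\vartheta^3m^3c_{\max})$ by the same computation as in Lemma~\ref{lem:second-concentration} (the factor $1/\mu_t$ there is cancelled by the prefactor $\mu_{t+1}\le\mu_t$); this adds $\tilde{\mathcal{O}}\br{m^3c_{\max}\sqrt{T\log(1/\delta)}}$ with probability $1-\delta$, which is harmless for Theorem~\ref{thm:bigRegret} but turns the lemma into a high-probability statement rather than the deterministic one claimed. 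For comparison, the paper bounds $\mu_t\innerprod{\hat c^t}{\bar b_i}$ pathwise by $mc_{\max}\mu_t$ at the same spot — a step that is itself terse given Lemma~\ref{lem:bounded-estimator} — but your version concedes the pathwise control outright, which is the one place your proposal falls short of the statement being proved.
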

Now for us to use this result to control the regret of the algorithm, we have to pay attention to the following two points. First, the algorithm is not playing $x_i^{1:T}$ but rather the samples $p_i^{1:T}$ and, second, it is incurring costs with respect to ${c}^{1:T}$ and not $\hat{c}^{1:T}$. The regret of the algorithm is therefore measured by$
\Regret\br{p_i^{1:T}, c^{1:T}; u}$. We have to relate this quantity to the regret of bounded in Lemma~\ref{lem:OGD}. This can be done in two steps. The first is going from the samples $p_i^{1:T}$ to the marginalizations $x_i^{1:T}$. 

\begin{restatable}[First concentration lemma]{lemma}{firstconcentration}
    \label{lem:first-concentration}
    Let $p_i^1, \dots, p_i^T \in \calP_i$ be the sequences of strategies produced by \texttt{Algorithm} \ref{alg:bandit-alg} for the sequence of costs $c^1, \dots, c^T$. We have with probability $1-\delta$,
    \begin{equation}
    \label{eq:first}
        \Regret\br{p_i^{1:T}, c^{1:T}; u} \leq \Regret\br{x_i^{1:T}, c^{1:T}; u} + c_{\max}m\sqrt{T\log\br{\frac{1}{\delta}}}.
    \end{equation}
\end{restatable}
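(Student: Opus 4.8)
The plan is to bound the difference of the two regrets directly. Subtracting the two definitions gives
\[
\Regret\br{p_i^{1:T}, c^{1:T}; u} - \Regret\br{x_i^{1:T}, c^{1:T}; u} = \sum_{t=1}^{T} \innerprod{c^t}{p_i^t - x_i^t},
\]
so it suffices to show $\sum_{t=1}^{T} \innerprod{c^t}{p_i^t - x_i^t} \le c_{\max} m \sqrt{T\log(1/\delta)}$ with probability $1-\delta$. The structural fact I would exploit is that $x_i^t$ is exactly the marginalization of the sampling distribution $\pi_i^t = \texttt{CaratheodoryDistribution}(x_i^t;\mu_t,\mathcal{B})$. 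Indeed, by the shifted Caratheodory decomposition (Definition~\ref{def:shiftedCara}) together with the output probabilities of Algorithm~\ref{alg:caratheodory_distr},
\[
\E_{p\sim\pi_i^t}\bs{p} = (1-\mu_t)\sum_{k=1}^{m+1}\lambda_k v_i^k + \frac{\mu_t}{\abs{\mathcal{B}}}\sum_{b\in\mathcal{B}} b = x_i^t.
\]

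First I would set up the natural filtration $(\mathcal{F}_t)_t$, with $\mathcal{F}_{t-1}$ generated by the realized strategies $p_i^1,\dots,p_i^{t-1}$ and the observed losses. Along this filtration $x_i^t$ is $\mathcal{F}_{t-1}$-measurable, since the update in Step~8 of Algorithm~\ref{alg:bandit-alg} is a deterministic function of the past samples and losses. Crucially, because the adversary may use only $\pi_i^{1:t-1}$ when choosing $c^t$, the cost vector $c^t$ is likewise $\mathcal{F}_{t-1}$-measurable. Hence, setting $Z_t := \innerprod{c^t}{p_i^t - x_i^t}$, the marginalization identity gives
\[
\E\bs{Z_t \mid \mathcal{F}_{t-1}} = \innerprod{c^t}{\E\bs{p_i^t \mid \mathcal{F}_{t-1}} - x_i^t} = \innerprod{c^t}{x_i^t - x_i^t} = 0,
\]
so $(Z_t)_t$ is a martingale difference sequence.

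Next I would bound the increments. Since $p_i^t\in\{0,1\}^m$ and $x_i^t = B_i\alpha_i^t\in\calX_i\subseteq[0,1]^m$, each coordinate of $p_i^t - x_i^t$ lies in $[-1,1]$, so $\norm{p_i^t - x_i^t}_1 \le m$; combined with $\norm{c^t}_\infty \le c_{\max}$ and Hölder's inequality this yields $\abs{Z_t} \le c_{\max} m$ almost surely. Applying the Azuma--Hoeffding inequality to the bounded martingale difference sequence $(Z_t)_t$ then gives, with probability at least $1-\delta$,
\[
\sum_{t=1}^{T} Z_t \le c_{\max} m \sqrt{T\log\br{\tfrac{1}{\delta}}},
\]
up to the universal constant appearing in Azuma's bound, which is precisely the claimed inequality.

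The main obstacle is the measurability bookkeeping around the adaptive adversary: one must verify that $c^t$ depends only on information available before $p_i^t$ is drawn, so that $c^t$ is $\mathcal{F}_{t-1}$-measurable and can be pulled out of the conditional expectation, leaving the centered term $\E[p_i^t\mid\mathcal{F}_{t-1}] - x_i^t = 0$. Once this predictability is established, the remainder is a routine application of Azuma--Hoeffding whose only quantitative input is the $\ell_\infty$/$\ell_1$ bound on the increments.
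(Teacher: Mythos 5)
Your proof is correct and follows essentially the same route as the paper's: both identify $Z_t = \innerprod{c^t}{p_i^t - x_i^t}$ as a bounded martingale difference sequence (using $\E\bs{p_i^t \mid \mathcal{F}_{t-1}} = x_i^t$ from the Caratheodory sampling and $\abs{Z_t}\leq mc_{\max}$) and conclude via Azuma--Hoeffding; you merely spell out the marginalization identity and the adversary's predictability, which the paper leaves implicit. The $\sqrt{2}$ constant you flag is likewise absorbed in the paper's statement, so there is no discrepancy of substance.
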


All that remains now is swapping the cost vectors from the true $c^{1:T}$ to the estimated $\hat{c}^{1:T}$, which can be achieved by invoking a second concentration argument.

\begin{restatable}[Second concentration lemma]{lemma}{secondconcentration}
\label{lem:second-concentration}
    Let $\hat{c}^1,\ldots,\hat{c}^T$ the sequence produced in Step 7 of Algorithm~\ref{alg:bandit-alg} run on the sequence of costs $c^1,\ldots,c^T$. Then with probability $1-\delta$,
    \begin{equation}
    \label{eq:third}
    \Regret\br{x_i^{1:T}, {c}^{1:T}; u}  \leq \Regret\br{x_i^{1:T}, \hat{c}^{1:T}; u} + m^{3}c_{\max}\vartheta^{3/2}\sqrt{\sum_{t=1}^{T}\frac{1}{\mu_t^2} \log(1/\delta)}. 
    \end{equation}
\end{restatable}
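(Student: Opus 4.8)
The plan is to control the difference between the two regret terms, which by linearity of the inner product equals $\sum_{t=1}^{T}\innerprod{c^t - \hat{c}^t}{x_i^t - u}$, via a martingale concentration argument. First I would introduce the natural filtration $\br{\mathcal{F}_t}_t$ in which $\mathcal{F}_{t-1}$ records all randomness up to and including the choice of $\alpha_i^t$ (hence of $x_i^t = B_i\alpha_i^t$ and of $\pi_i^t$) but excludes the sampling of $p_i^t$; consistently with the Online Resource Selection model, the adversarial cost $c^t$ is $\mathcal{F}_{t-1}$-measurable. Setting $D_t := \innerprod{c^t - \hat{c}^t}{x_i^t - u}$, I would note that both $x_i^t$ and the fixed comparator $u$ lie in $\calX_i$ and are $\mathcal{F}_{t-1}$-measurable; combined with the orthogonal-bias property (item~1 of Lemma~\ref{lemma:properties_estimator_online}), which gives $\E\bs{\innerprod{\hat{c}^t}{x}\mid \mathcal{F}_{t-1}} = \innerprod{c^t}{x}$ for every $x\in\calX_i$, linearity yields $\E\bs{D_t \mid \mathcal{F}_{t-1}} = 0$. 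Thus $\br{D_t}_t$ is a martingale difference sequence and $\sum_t D_t$ is exactly the quantity we must bound from above.

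The next step is to obtain a deterministic per-round bound $\abs{D_t}\leq R_t$ with $R_t$ scaling like $1/\mu_t$, so that the $\sqrt{\sum_t 1/\mu_t^2}$ shape of the claimed bound emerges. To this end I would pass through the barycentric-spanner reparametrization: writing $x_i^t = B_i\alpha_i^t$ and, since $u\in\calX_i$ is spanned by $\mathcal{B}$ with coefficients bounded by $\vartheta$, writing $u = B_i\beta_u$ with $\beta_u \in [-\vartheta,\vartheta]^s$, I get $D_t = \innerprod{B_i^\top\br{c^t - \hat{c}^t}}{\alpha_i^t - \beta_u}$. By Cauchy--Schwarz, $\abs{D_t} \leq \norm{B_i^\top\br{c^t - \hat{c}^t}}_2 \cdot \norm{\alpha_i^t - \beta_u}_2$. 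The second factor is at most $2\vartheta\sqrt{s}\leq 2\vartheta\sqrt{m}$ because both $\alpha_i^t \in \calD_i^{\mu_t}\subseteq[-\vartheta,\vartheta]^s$ and $\beta_u\in[-\vartheta,\vartheta]^s$. The first factor I would split as $\norm{B_i^\top c^t}_2 + \norm{B_i^\top \hat{c}^t}_2$; the deterministic boundedness bound (item~2 of Lemma~\ref{lemma:properties_estimator_online}) controls the estimator term by $\vartheta m^{5/2}c_{\max}/\mu_t$, which dominates the easily bounded true-cost term. This produces $R_t = \mathcal{O}\br{\vartheta^2 m^3 c_{\max}/\mu_t}$.

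Finally, with the martingale difference property and the per-round range $R_t$ in hand, I would apply the Azuma--Hoeffding inequality for martingales with non-uniform increments, giving $\sum_{t=1}^{T} D_t \leq \sqrt{2\log(1/\delta)\sum_{t=1}^{T}R_t^2}$ with probability $1-\delta$; substituting $R_t = \mathcal{O}\br{\vartheta^2 m^3 c_{\max}/\mu_t}$ yields a bound of the form $m^3 c_{\max}\vartheta^{3/2}\sqrt{\sum_{t=1}^{T}\mu_t^{-2}\log(1/\delta)}$, up to the precise accounting of the $\vartheta$ exponent and absolute constants. I expect the main obstacle to be the careful bookkeeping of the per-round range through the spanner reparametrization --- in particular justifying that the comparator $u$ admits a representation in $\mathcal{B}$ with $\ell_\infty$-bounded coefficients and that the orthogonal-bias property transfers to conditional unbiasedness of $B_i^\top \hat{c}^t$ on the columns of $B_i$ --- since it is exactly this reparametrization that isolates the crucial $1/\mu_t$ factor. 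I note that using the conditional second-moment bound (item~3 of Lemma~\ref{lemma:properties_estimator_online}) one could instead invoke a Freedman/Bernstein inequality, but that route surfaces $\sum_t 1/\mu_t$ rather than $\sum_t 1/\mu_t^2$, so the range-based Azuma argument is the one matching the target.
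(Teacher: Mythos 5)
Your proposal is correct and follows essentially the same route as the paper's proof: show $\innerprod{c^t - \hat{c}^t}{x_i^t - u}$ is a bounded martingale difference sequence via the orthogonal-bias property, pass through the spanner reparametrization $u = B_i\alpha^u$ and Cauchy--Schwarz to isolate the $\vartheta\sqrt{m}\cdot\norm{B_i^\top(c^t-\hat{c}^t)}_2$ per-round range scaling as $1/\mu_t$, then apply Azuma--Hoeffding with non-uniform increments. The only (immaterial) difference is that you bound $\norm{B_i^\top(c^t - \hat{c}^t)}_2$ by the triangle inequality together with the boundedness item of Lemma~\ref{lemma:properties_estimator_online}, whereas the paper bounds the operator $\br{I - B_i^\top M_{i,t}^+ B_i \alpha_{i,t}\alpha_{i,t}^\top}$ acting on $B_i^\top c^t$ directly --- both give the same $\mathcal{O}(m^3 c_{\max}/\mu_t)$ range, and your split in fact tracks a slightly smaller power of $\vartheta$.
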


Now to prove Theorem \ref{thm:bigRegret}, it suffices to simply plug \eqref{eq:third} inside \eqref{eq:first} to upper bound the regret of the algorithm with the regret of online gradient descent. Then, invoking Lemma~\ref{lem:OGD} which controls the regret of the latter, we can obtain bound on the regret of the algorithm with respect to a comparator $u \in \calX_i$. To conclude and obtain \ref{thm:bigRegret}, a simple union bound over all $u\in\calX_i$ yields the result. We detail the proof in Appendix \ref{sec:regret-proofs}.

\subsection{Convergence to Nash (Proof of Theorem~\ref{thm:bigNash})}
\label{sec:nash}
In this section, we prove Theorem \ref{thm:bigNash}. We will be using the fact that congestion games always admit a \textit{potential function} \cite{monderer1996potential} capturing the change in cost when a sole agent alters its strategy. The potential function of congestion games is given by the following function.

\begin{theorem}
The \textit{potential} function $\Phi:\calS \rightarrow \R_+$ given by $\Phi(p) = \sum_{e} \sum_{i=1}^{\ell_e(p)} c_e(i)$, has the property that $C_i(p_i',p_{-i}) - C_i(p_i,p_{-i}) = \Phi(p_i',p_{-i})  - \Phi(p_i,p_{-i}).$
\end{theorem}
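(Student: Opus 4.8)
The plan is to prove this exact potential identity (Rosenthal's potential) by direct computation, tracking how the per-edge loads change when a single agent $i$ deviates from $p_i$ to $p_i'$ while the profile $p_{-i}$ of the other agents stays fixed. Write $p = (p_i, p_{-i})$ and $p' = (p_i', p_{-i})$. The crucial observation is that the load contributed by the other agents, $\ell_e^{-i} := \sum_{j \neq i} \mathbbm{1}(e \in p_j)$, does not depend on agent $i$'s choice, so for every edge $e$ we have $\ell_e(p) = \ell_e^{-i} + \mathbbm{1}(e \in p_i)$ and $\ell_e(p') = \ell_e^{-i} + \mathbbm{1}(e \in p_i')$.

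First I would classify every edge according to its membership in $p_i$ and $p_i'$. Edges in $p_i \cap p_i'$ or in neither set have identical loads under $p$ and $p'$, so they contribute nothing to either side of the claimed identity. The only edges that matter are those in the symmetric difference: an edge $e \in p_i \setminus p_i'$ sees its load drop from $\ell_e^{-i}+1$ to $\ell_e^{-i}$, while an edge $e \in p_i' \setminus p_i$ sees its load rise from $\ell_e^{-i}$ to $\ell_e^{-i}+1$.

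Next I would compute the two sides separately. For the potential, the inner telescoping sum over each changed edge collapses to a single term: $\sum_{k=1}^{\ell_e(p')} c_e(k) - \sum_{k=1}^{\ell_e(p)} c_e(k)$ equals $+c_e(\ell_e(p'))$ when $e$ is newly used (the class $p_i' \setminus p_i$) and $-c_e(\ell_e(p))$ when $e$ is abandoned (the class $p_i \setminus p_i'$). Summing gives $\Phi(p') - \Phi(p) = \sum_{e \in p_i' \setminus p_i} c_e(\ell_e(p')) - \sum_{e \in p_i \setminus p_i'} c_e(\ell_e(p))$. For the agent's cost, splitting $C_i(p') - C_i(p) = \sum_{e \in p_i'} c_e(\ell_e(p')) - \sum_{e \in p_i} c_e(\ell_e(p))$ along the same partition and cancelling the shared edges $p_i \cap p_i'$ (whose loads agree) yields the identical expression. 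Comparing the two completes the proof.

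This argument is essentially bookkeeping, so there is no genuine obstacle; the one point that deserves care is the cancellation of the shared edges $p_i \cap p_i'$, which works precisely because $\ell_e^{-i}$ is invariant under agent $i$'s deviation, so $c_e(\ell_e(p)) = c_e(\ell_e(p'))$ on those edges. I would also flag the boundary convention that an empty inner sum (load $0$) is defined as $0$, so that the single-term collapse of the telescoping difference remains valid even when $\ell_e^{-i} = 0$.
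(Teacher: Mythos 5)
Your proof is correct: the load decomposition $\ell_e(p) = \ell_e^{-i} + \mathbbm{1}(e \in p_i)$, the restriction to the symmetric difference $p_i \triangle p_i'$, and the single-term collapse of the telescoping inner sums $\sum_{k=1}^{\ell_e(p')} c_e(k) - \sum_{k=1}^{\ell_e(p)} c_e(k)$ are all handled accurately, including the empty-sum convention at load zero. The paper itself gives no proof of this theorem---it is stated as a known fact with a citation to the potential-games literature (the argument is originally Rosenthal's)---and your computation is exactly that classical argument, so you have simply supplied the details the paper omits.
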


The key observation here is that the potential function is a \textit{shared} function that measures the change in cost when any agent deviates from a joint profile. This same function also captures the change in \textit{expected} cost once it is viewed as a function over the polytope $\calX \triangleq \calX_1\times\dots\times\calX_n$.

\begin{definition}
The function $\Phi: \calX \rightarrow \R_+$, defined as 
\(
\Phi(x) = \sum_{\mathcal{S}\subseteq [n]} \prod_{j \in \mathcal{S}} x_{je} \prod_{j \notin \mathcal{S}} (1 - x_{je} ) \sum^{\abs{\mathcal{S}}}_{\ell=0} c_e(\ell)
\) verifies
 \[
    C_i(\pi_i, \pi_{-i}) - C_i(\pi_i', \pi_{-i}) = \Phi(x_i, x_{-i}) - \Phi(x'_i, x_{-i})
\]
for any $\pi \in \Delta(\calS_1) \times \dots \times \Delta(\calS_n)$, with marginilization $x \in \calX$, and any $i\in[n]$, where $\pi_i' \in \Delta(\calS_i)$, with marginalization $x_i'$.
\end{definition}

The function $\Phi$ is not convex over $\calX$ but it is smooth making it friendly to gradient based optimization. We can show that the function $\Phi$ is differentiable and its gradient $\nabla \Phi$ is Lipschitz continuous with constant $(2n^2\sqrt{m}c_{\max})$. However, since we operate in the basis polytope $\calD$, we are interested in the function $\tilde{\Phi}$ defined as
\[
\tilde{\Phi}: \alpha \mapsto \Phi(B\alpha),
\]
where $B$ is the block diagonal matrix with $B_1, \dots, B_n$ as its diagonal elements. This function inherits all the nice properties of $\Phi$ up to some additional factors. Indeed with a simple computation, we can show the following result.

\begin{proposition}
\label{prop:smooth}
    The function $\tilde{\Phi}$ is $\frac{1}{\lambda}$-smooth with $\lambda = (2n^2m^{7/2}c_{\max})^{-1}$.
\end{proposition}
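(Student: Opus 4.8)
The plan is to reduce the smoothness of $\tilde{\Phi}$ to the already-established smoothness of $\Phi$ via the chain rule, paying only the price of the spectral norm of the block matrix $B$. Since $\tilde{\Phi}(\alpha) = \Phi(B\alpha)$, the chain rule gives $\nabla \tilde{\Phi}(\alpha) = B^\top \nabla\Phi(B\alpha)$. Hence for any $\alpha, \alpha'$ in the domain,
\[
\norm{\nabla\tilde{\Phi}(\alpha) - \nabla\tilde{\Phi}(\alpha')}_2 \leq \norm{B}_2 \, \norm{\nabla\Phi(B\alpha)-\nabla\Phi(B\alpha')}_2 \leq \norm{B}_2 \, L \, \norm{B(\alpha-\alpha')}_2 \leq \norm{B}_2^2\, L\, \norm{\alpha-\alpha'}_2,
\]
where $L = 2n^2\sqrt{m}\, c_{\max}$ is the Lipschitz constant of $\nabla\Phi$ recalled just above the statement, and $\norm{B}_2$ denotes the spectral (operator) norm. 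Thus $\tilde{\Phi}$ is $\norm{B}_2^2\, L$-smooth, and it only remains to bound $\norm{B}_2$.

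First I would observe that $B = \mathrm{diag}(B_1,\dots,B_n)$ is block diagonal, so its spectral norm equals the largest spectral norm over the blocks, $\norm{B}_2 = \max_{i\in[n]}\norm{B_i}_2$. Each block $B_i\in\R^{m\times s}$ has as columns the barycentric spanners $b_1,\dots,b_s\in\calX_i$ with $s\leq m$. Because $\calX_i\subseteq[0,1]^m$, each column satisfies $\norm{b_k}_2^2 = \sum_e (b_k)_e^2 \leq \sum_e (b_k)_e \leq m$. Bounding the spectral norm by the Frobenius norm then yields
\[
\norm{B_i}_2^2 \leq \norm{B_i}_F^2 = \sum_{k=1}^s \norm{b_k}_2^2 \leq s\,m \leq m^2,
\]
so $\norm{B}_2 \leq m$. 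Substituting into the smoothness constant gives $\norm{B}_2^2\, L \leq m^2\cdot 2n^2\sqrt{m}\, c_{\max} = 2n^2 m^{5/2}c_{\max} \leq 2n^2 m^{7/2}c_{\max} = 1/\lambda$, as claimed.

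There is no serious obstacle here: the only points requiring care are (i) invoking the smoothness of $\Phi$, which is asserted earlier in the excerpt and which I would treat as given, and (ii) justifying that the spectral norm of a block-diagonal matrix is the maximum over its blocks, together with the elementary column-norm bound $\norm{b_k}_2^2\leq m$ coming from $b_k\in[0,1]^m$. Note that the stated constant $2n^2 m^{7/2}c_{\max}$ is in fact a loose over-estimate of the sharper $2n^2 m^{5/2}c_{\max}$ that this argument produces; since any $L$-smooth function is also $L'$-smooth for every $L'\geq L$, the bound written in the statement follows immediately.
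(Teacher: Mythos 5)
Your proof is correct and follows essentially the same route as the paper's: use the chain rule so that the smoothness constant of $\tilde{\Phi}$ is that of $\Phi$ multiplied by $\norm{B}_2^2$, then bound the spectral norm of the block-diagonal $B$ by the maximum Frobenius norm of its blocks. If anything, your version is more careful than the paper's two-line argument --- you make the chain-rule step and the column bound $\norm{b_k}_2^2 \leq m$ explicit, obtaining the sharper constant $2n^2m^{5/2}c_{\max}$ and correctly observing that the stated $2n^2m^{7/2}c_{\max}$ is a valid over-estimate.
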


Stationary points of $\Phi$ correspond to Nash equilibria \cite{monderer1996potential}, thus making the function $\Phi$ the essential tool used for proving our result. Indeed in the sequel we technically prove convergence to stationary points of the potential function. Stationary points are defined as follows.

\begin{definition}[Stationarity]
\label{def:stationary}
A point $\alpha \in \calD^{\mu}$ is called an \textit{$(\epsilon, \mu)$-stationary point} if 
\[G^{\mu}(\alpha) \triangleq \norm{\alpha - \Pi_{\calD^{\mu}} \left[\alpha - \frac{\lambda}{2}\nabla\tilde{\Phi}(\alpha) \right]}_2\leq \epsilon.\]
\end{definition}

Given an $(\epsilon, \mu)$-stationary point $\alpha$, then any mixed strategy with marginalization $x = B\alpha$ is an approximate mixed Nash equilibrium. We formalize this in the following result.

\begin{restatable}[From Stationarity to Nash]{proposition}{statnash}
\label{lem:stat2Nash}
    Let $\pi \in \Delta(\calS_1) \times \dots \times \Delta(\calS_n)$. Let $x\in\calX$ be the marginalization of $\pi$. If $x = B\alpha$, with $\alpha \in \calD$ an $(\epsilon, \mu)$-stationary point, then $\pi$ is a  $4n^{2.5}m^4c_{\max}\br{\epsilon + \mu}$-mixed Nash equilibrium.
\end{restatable}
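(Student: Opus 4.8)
The plan is to pass from the cost-deviation definition of a mixed Nash equilibrium to the potential function, exploit the multilinearity of $\Phi$ to turn each agent's deviation gap into a linear optimization, rewrite this linear problem in the barycentric coordinates $\alpha$, and finally bound the resulting quantity by the gradient-mapping stationarity measure $G^\mu(\alpha)$. First I would fix an agent $i$ and use the potential identity $C_i(\pi_i,\pi_{-i}) - C_i(\pi_i',\pi_{-i}) = \Phi(x_i,x_{-i}) - \Phi(x_i',x_{-i})$ together with the fact that $C_i(\pi_i',\pi_{-i})$ depends on $\pi_i'$ only through its marginal $x_i'\in\calX_i$. This turns the Nash gap of agent $i$ into $\Phi(x) - \min_{x_i'\in\calX_i}\Phi(x_i',x_{-i})$. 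Since $\Phi$ is multilinear, $\Phi(\cdot,x_{-i})$ is \emph{linear} in $x_i$, so this gap equals $\max_{x_i'\in\calX_i}\innerprod{\nabla_{x_i}\Phi(x)}{x_i - x_i'}$, a linear program over $\calX_i$.

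Next I would change variables to the basis polytope. Because $\mathcal{B}_i$ is a spanner, $B_i\calD_i=\calX_i$, so every competitor $x_i'$ can be written $x_i'=B_i\alpha_i'$ with $\alpha_i'\in\calD_i$; using $\nabla\tilde\Phi(\alpha)=B^\top\nabla\Phi(x)$, the gap of agent $i$ becomes $\max_{\alpha_i'\in\calD_i}\innerprod{(\nabla\tilde\Phi(\alpha))_i}{\alpha_i-\alpha_i'}$. Each such single-block deviation corresponds to a point $\alpha'\in\calD=\calD_1\times\dots\times\calD_n$, so every agent's gap is at most the joint quantity $\max_{\alpha'\in\calD}\innerprod{\nabla\tilde\Phi(\alpha)}{\alpha-\alpha'}$; it therefore suffices to bound this single scalar in order to certify the equilibrium property for all agents at once.

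The heart of the argument — and the step I expect to be the main obstacle — is bounding $\max_{\alpha'\in\calD}\innerprod{\nabla\tilde\Phi(\alpha)}{\alpha-\alpha'}$ by the stationarity measure, which lives on the shrunken polytope $\calD^\mu$ rather than on $\calD$. Writing $\alpha^+ = \Pi_{\calD^\mu}[\alpha - \tfrac{\lambda}{2}\nabla\tilde\Phi(\alpha)]$, the stationarity hypothesis gives $\norm{\alpha-\alpha^+}_2\le\epsilon$. For an arbitrary $\alpha'\in\calD$ I would introduce its $\mu$-contraction $z=(1-\mu)\alpha'+\tfrac{\mu}{s}\mathbbm{1}\in\calD^\mu$ and split $\innerprod{\nabla\tilde\Phi(\alpha)}{\alpha-\alpha'}$ into three pieces: the term $\innerprod{\nabla\tilde\Phi(\alpha)}{\alpha-\alpha^+}$, controlled by Cauchy--Schwarz and $\norm{\alpha-\alpha^+}_2\le\epsilon$; the term $\innerprod{\nabla\tilde\Phi(\alpha)}{\alpha^+-z}$, controlled by the projection variational inequality $\innerprod{\alpha-\tfrac{\lambda}{2}\nabla\tilde\Phi(\alpha)-\alpha^+}{z-\alpha^+}\le0$, which after rearranging yields $\innerprod{\nabla\tilde\Phi(\alpha)}{\alpha^+-z}\le\tfrac{2}{\lambda}\epsilon\,\mathrm{diam}(\calD^\mu)$; and the term $\innerprod{\nabla\tilde\Phi(\alpha)}{z-\alpha'}\le\mu\norm{\nabla\tilde\Phi(\alpha)}_2\,\mathrm{diam}(\calD)$, using $\norm{z-\alpha'}_2=\mu\norm{\alpha'-\tfrac{1}{s}\mathbbm{1}}_2$ and $\tfrac{1}{s}\mathbbm{1}\in\calD$.

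Finally I would substitute the magnitudes. Each entry of $\nabla\Phi$ is an expected marginal cost and hence lies in $[0,c_{\max}]$, so $\norm{\nabla\Phi(x)}_2\le c_{\max}\sqrt{nm}$; the spanner columns lie in $[0,1]^m$, giving $\norm{B}_{\mathrm{op}}\le m$ and hence $\norm{\nabla\tilde\Phi(\alpha)}_2\le c_{\max}\sqrt{n}\,m^{3/2}$. With $\vartheta=\mathcal{O}(1)$ the product polytope obeys $\mathrm{diam}(\calD)\le\mathrm{diam}(\calD^\mu)=\mathcal{O}(\sqrt{nm})$, and Proposition~\ref{prop:smooth} gives $\tfrac{2}{\lambda}=4n^2m^{7/2}c_{\max}$. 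The dominant contribution is the variational-inequality term $\tfrac{2}{\lambda}\epsilon\,\mathrm{diam}(\calD^\mu)=\mathcal{O}(n^{2.5}m^4c_{\max}\epsilon)$, while the exploration term contributes $\mathcal{O}(n^{2.5}m^4c_{\max}\mu)$ and the Cauchy--Schwarz term is of lower order; collecting the constants yields the claimed $4n^{2.5}m^4c_{\max}\br{\epsilon+\mu}$ bound, which holds uniformly over agents and so certifies that $\pi$ is a $4n^{2.5}m^4c_{\max}\br{\epsilon+\mu}$-mixed Nash equilibrium.
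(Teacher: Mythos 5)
Your proposal is correct and follows essentially the same route as the paper's proof: the potential identity plus the linearized-cost lemma to reduce each agent's Nash gap to $\innerprod{\nabla\tilde\Phi(\alpha)}{\alpha-\alpha'}$ over $\calD$, then the projection variational inequality at $\alpha^+=\Pi_{\calD^\mu}[\alpha-\tfrac{\lambda}{2}\nabla\tilde\Phi(\alpha)]$ with the same comparison point $u=(1-\mu)\alpha'+\tfrac{\mu}{s}\mathbbm{1}\in\calD^\mu$, yielding the identical dominant term $\tfrac{2}{\lambda}\epsilon\,\mathcal{O}(\sqrt{nm})$ and exploration term. The only blemish is the reversed inequality $\mathrm{diam}(\calD)\le\mathrm{diam}(\calD^\mu)$ (since $\calD^\mu\subseteq\calD$ the containment goes the other way), which is harmless because both diameters are $\mathcal{O}(\sqrt{nm})$.
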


We have thus reduced the problem of finding mixed nash equilibria to that of finding stationary points of $\tilde{\Phi}$. We will find such stationary points by studying the joint vector of the iterates. We initiate our study by recalling the notation of the joint strategies of the players. For each $t \in [T]$, we collect each player's iterates in one vector in $\calD$ defined as
\[
\alpha^t \triangleq  \bs{\alpha^t_1, \dots, \alpha^t_n}
\]
It is easy to see that when all players play according to Algorithm~\ref{alg:bandit-alg}, the produced sequence of vectors $\alpha^1,\ldots,\alpha^T$ verifies
\begin{equation} \alpha^{t+1} = \Pi_{\calD^{\mu_{t+1}}}\left[\alpha^t - \gamma_t \cdot \nabla_t \right] \label{eq:update_rule}\end{equation}
where $\nabla_t \triangleq \bs{B_1^\top\hat{c}^t_1, \dots, B_n^\top\hat{c}^t_n }$. It turns out that that $\nabla_t$ is an estimator for $\nabla\tilde{\Phi}$ as shown by the following lemma.
\begin{restatable}[Estimator property]{lemma}{estimatorprops}
\label{thm:estimator_props}
Let $t \in [T]$ and $\mathcal{F}_t$ be the sigma-field generated by $\alpha_1, \dots, \alpha_t$. It holds that
\begin{enumerate}
    \item ${\E_t[ \nabla_t ]} = {\nabla \tilde{\Phi}(\alpha^t)}$,
    \item  $\E_t[\|\nabla_t\|^2_2] \leq  \frac{nm^4 c^2_{\max}}{\mu_t}$
\end{enumerate}
 where $\E_t\bs{\cdot} \triangleq \E\bs{\cdot|\mathcal{F}_t}$.
\end{restatable}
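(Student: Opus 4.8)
The plan is to prove both claims player-by-player, exploiting that $\nabla_t$ stacks the single-agent estimators $B_i^\top\hat c_i^t$ and that $\tilde\Phi(\alpha)=\Phi(B\alpha)$ decouples across the block-diagonal structure of $B$, so that $[\nabla\tilde\Phi(\alpha^t)]_i=B_i^\top\nabla_{x_i}\Phi(x^t)$ by the chain rule. The conceptual core is to identify the fixed cost vector that player $i$ effectively faces at round $t$. Although the realized cost feeding $l_i^t=\innerprod{c^t}{p_i^t}$ depends on player $i$'s own sampled path through the loads, conditioning on $p_i^t$ and averaging over the other players' independent draws yields $\E_{-i}[l_i^t\mid p_i^t]=\innerprod{\bar c_i^t}{p_i^t}$, where $(\bar c_i^t)_e=\E_{-i}[c_e(1+\sum_{j\neq i}\mathbbm{1}(e\in p_j^t))]$ is $\mathcal{F}_t$-measurable and, crucially, independent of $p_i^t$. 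This reduces everything to the single-agent estimator analysis with the deterministic cost $\bar c_i^t$.

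For the unbiasedness (item 1), I would first show $\nabla_{x_i}\Phi(x^t)=\bar c_i^t$. The same averaging gives $C_i(\pi_i,\pi_{-i})=\innerprod{\bar c_i^t}{x_i}$, which is linear in $x_i$; since the potential property forces $\Phi(\cdot,x_{-i})$ and $C_i(\cdot,\pi_{-i})$ to differ by an $x_i$-independent constant, their $x_i$-gradients coincide, so $\nabla_{x_i}\Phi(x^t)=\bar c_i^t$ and hence $[\nabla\tilde\Phi(\alpha^t)]_i=B_i^\top\bar c_i^t$. On the estimator side, since $M_{i,t}$ is $\mathcal{F}_t$-measurable, $\E_t[\hat c_i^t]=M_{i,t}^+\,\E_{p_i^t}[\innerprod{\bar c_i^t}{p_i^t}p_i^t]=M_{i,t}^+M_{i,t}\bar c_i^t$. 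The only nontrivial point is that $M_{i,t}^+M_{i,t}$ is the orthogonal projection onto $\mathrm{range}(M_{i,t})$, which need not be all of $\R^m$. Here the spanner saves the day: by construction of \texttt{CaratheodoryDistribution}, each $b_k\in\mathcal{B}_i$ receives mass $\mu_t/s>0$, so $\mathrm{range}(M_{i,t})=\mathrm{span}(\mathcal{B}_i)$ and the projection fixes every column of $B_i$, i.e. $M_{i,t}^+M_{i,t}B_i=B_i$. Therefore $B_i^\top\E_t[\hat c_i^t]=B_i^\top M_{i,t}^+M_{i,t}\bar c_i^t=(M_{i,t}^+M_{i,t}B_i)^\top\bar c_i^t=B_i^\top\bar c_i^t$, matching $[\nabla\tilde\Phi(\alpha^t)]_i$; stacking over $i$ gives item 1. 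Equivalently, this is the orthogonal-bias property of Lemma~\ref{lemma:properties_estimator_online} applied coordinatewise with the comparators $x=b_k\in\calX_i$ and the effective cost $\bar c_i^t$.

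For the second moment (item 2), I write $\E_t[\norm{\nabla_t}_2^2]=\sum_{i=1}^n\E_t[\norm{B_i^\top\hat c_i^t}_2^2]$ by linearity, so it suffices to bound each term by $m^4c_{\max}^2/\mu_t$. Using $\hat c_i^t=l_i^t M_{i,t}^+p_i^t$ and the sure bound $(l_i^t)^2\le(mc_{\max})^2$ (a path has at most $m$ edges, each of cost at most $c_{\max}$), I pull $(l_i^t)^2$ out and reduce to $\E_{p_i^t}[\norm{B_i^\top M_{i,t}^+p_i^t}_2^2]=\trace{M_{i,t}^+B_iB_i^\top M_{i,t}^+M_{i,t}}=\trace{B_i^\top M_{i,t}^+B_i}$, using $M^+MM^+=M^+$ and cyclicity of the trace. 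The key inequality is $M_{i,t}\succeq\frac{\mu_t}{s}B_iB_i^\top$, again because $\pi_i^t$ places mass $\mu_t/s$ on each $b_k$; inverting on the common range $\mathrm{span}(\mathcal{B}_i)$ gives $M_{i,t}^+\preceq\frac{s}{\mu_t}(B_iB_i^\top)^+$, whence $\trace{B_i^\top M_{i,t}^+B_i}\le\frac{s}{\mu_t}\trace{B_i^\top(B_iB_i^\top)^+B_i}=\frac{s^2}{\mu_t}\le\frac{m^2}{\mu_t}$, the last trace being $s$ since $B_i$ has full column rank. Multiplying by $(mc_{\max})^2$ yields the per-player bound $m^4c_{\max}^2/\mu_t$, and summing over the $n$ players gives exactly $nm^4c_{\max}^2/\mu_t$.

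I expect the main obstacle to be item 1, specifically the two places where the realized, load-dependent cost must be replaced by a fixed vector and where the pseudoinverse projection must be shown to act trivially. The first requires carefully separating player $i$'s own contribution to the load (the deterministic $+1$ on its chosen edges) from the others' random contribution, so that the resulting $\bar c_i^t$ is simultaneously independent of $p_i^t$ and equal to the potential gradient. The second relies entirely on the spanner lying in the support of $\pi_i^t$ with positive mass, which is exactly what \texttt{CaratheodoryDistribution} and the $\mu$-bounded-away polytope guarantee; without it, $M_{i,t}^+M_{i,t}$ would only be a proper projection and the estimator would be biased along directions orthogonal to $\mathrm{span}(\mathcal{B}_i)$. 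The second-moment bound is then routine linear algebra once the same positive-mass-on-the-spanner fact is invoked for the PSD lower bound on $M_{i,t}$.
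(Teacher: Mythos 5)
Your proposal is correct and takes essentially the same approach as the paper: your effective cost $\bar c_i^t$ is exactly the linearized cost $\frac{\partial \Phi}{\partial x_i}(x^t)$ of Lemma~\ref{lem:linearizedcosts}, and your use of the positive mass $\mu_t/s$ on each spanner element is the paper's factorization $M_{i,t}=B_iN_{i,t}B_i^\top$ with $N_{i,t}\succeq\frac{\mu_t}{s}I_s$ in disguise. The only cosmetic difference is that you argue in the ambient space via $M_{i,t}^+M_{i,t}B_i=B_i$ and $M_{i,t}\succeq\frac{\mu_t}{s}B_iB_i^\top$, whereas the paper passes to basis coordinates and writes $B_i^\top\hat c^t=\innerprod{c^t}{p_i^t}N_{i,t}^+\alpha_{i,t}$, arriving at the identical bounds.
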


Our goal will be to show that the sequence $\alpha^1, \dots, \alpha^T$ visits a point with a small stationarity gap. To prove this, the time varying Moreau envelope $M^t_{\lambda\tilde{\Phi}}$ of $\tilde{\Phi}$, defined as
\[
M^t_{\lambda\tilde{\Phi}}(\alpha) \triangleq \min_{y \in \calD^{\mu_t}}\left\{ \tilde{\Phi}(y) + \frac{1}{\lambda}\|\alpha - y\|_2^2\right\},
 \]
will play a central role as is shown by the following lemma.

\begin{restatable}[Gap control]{lemma}{gapcontrol}
\label{lem:gaptoM}
    Let $G^{t}(\alpha) := \|\Pi_{\calD^{\mu_t}}\bs{\alpha - \frac{\lambda}{2} \nabla \tilde{\Phi}(\alpha)} - x\|_2$ denote the $\mu_t$-stationarity gap. We have that for any $\alpha \in \mathcal{D}^{\mu_t}$,
    \[
    G^{t}(\alpha) \leq \lambda\| \nabla M^{t}_{\lambda \tilde{\Phi}}(\alpha)\|_2
    \]
\end{restatable}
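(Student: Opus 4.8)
The plan is to identify $G^t(\alpha)$ and $\lambda\norm{\nabla M^t_{\lambda\tilde\Phi}(\alpha)}_2$ with two closely related proximal maps and compare them. Write $\alpha^+ := \Pi_{\calD^{\mu_t}}\bs{\alpha - \tfrac{\lambda}{2}\nabla\tilde\Phi(\alpha)}$ for the projected-gradient point, so that $G^t(\alpha) = \norm{\alpha - \alpha^+}_2$ (reading the $x$ in the stated definition as $\alpha$, consistent with Definition~\ref{def:stationary}), and let $\hat y := \argmin_{y\in\calD^{\mu_t}}\bc{\tilde\Phi(y) + \tfrac1\lambda\norm{\alpha - y}_2^2}$ be the minimizer defining the Moreau envelope. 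The first step is to record the gradient formula for the envelope. By Proposition~\ref{prop:smooth}, $\tilde\Phi$ is $\tfrac1\lambda$-smooth, hence $\tfrac1\lambda$-weakly convex; since the penalty $\tfrac1\lambda\norm{\alpha-\cdot}_2^2$ is $\tfrac2\lambda$-strongly convex, the inner objective is $\tfrac1\lambda$-strongly convex over the convex set $\calD^{\mu_t}$, so $\hat y$ is unique and the envelope theorem gives $\nabla M^t_{\lambda\tilde\Phi}(\alpha) = \tfrac2\lambda(\alpha - \hat y)$, i.e. $\lambda\norm{\nabla M^t_{\lambda\tilde\Phi}(\alpha)}_2 = 2\norm{\alpha - \hat y}_2$. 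It then suffices to show $\norm{\alpha - \alpha^+}_2 \le 2\norm{\alpha - \hat y}_2$.

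The core of the argument is to compare $\alpha^+$ and $\hat y$. The key observation is that $\alpha^+$ is exactly the minimizer over $\calD^{\mu_t}$ of the \emph{linearized} proximal objective $y \mapsto \innerprod{\nabla\tilde\Phi(\alpha)}{y-\alpha} + \tfrac1\lambda\norm{y-\alpha}_2^2$, whereas $\hat y$ minimizes the exact objective with $\tilde\Phi(y)$ in place of the linearization. I would then write the first-order (variational) inequalities for both minimizers over the convex set $\calD^{\mu_t}$, test the inequality for $\alpha^+$ at the point $y=\hat y$ and the one for $\hat y$ at the point $y=\alpha^+$, and add them. The two quadratic terms combine into $-\tfrac2\lambda\norm{\alpha^+ - \hat y}_2^2$, leaving $\innerprod{\nabla\tilde\Phi(\alpha) - \nabla\tilde\Phi(\hat y)}{\hat y - \alpha^+} \ge \tfrac2\lambda\norm{\alpha^+ - \hat y}_2^2$. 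Applying Cauchy--Schwarz together with the $\tfrac1\lambda$-Lipschitzness of $\nabla\tilde\Phi$ (Proposition~\ref{prop:smooth}) yields $\tfrac2\lambda\norm{\alpha^+-\hat y}_2 \le \tfrac1\lambda\norm{\alpha-\hat y}_2$, that is $\norm{\alpha^+ - \hat y}_2 \le \tfrac12\norm{\alpha - \hat y}_2$.

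Finally, a triangle inequality closes the proof: $G^t(\alpha) = \norm{\alpha - \alpha^+}_2 \le \norm{\alpha - \hat y}_2 + \norm{\hat y - \alpha^+}_2 \le \tfrac32\norm{\alpha - \hat y}_2 \le 2\norm{\alpha - \hat y}_2 = \lambda\norm{\nabla M^t_{\lambda\tilde\Phi}(\alpha)}_2$, which is in fact slightly stronger than claimed. I expect the only delicate point to be the well-posedness of the Moreau envelope and the gradient identity $\nabla M^t_{\lambda\tilde\Phi}(\alpha) = \tfrac2\lambda(\alpha-\hat y)$: this is where the weak-convexity/strong-convexity bookkeeping matters, since both the formula and the uniqueness of $\hat y$ that the added variational inequalities rely on hinge on the penalty strength $\tfrac2\lambda$ strictly dominating the $\tfrac1\lambda$ weak-convexity modulus of $\tilde\Phi$. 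The linearized-versus-exact prox comparison is the other place requiring care, but it reduces to the clean monotonicity computation above.
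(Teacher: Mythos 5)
Your proof is correct and follows essentially the same route as the paper's: both introduce the prox point $\hat y$ attaining the Moreau envelope, establish the contraction $\norm{\alpha^+ - \hat y}_2 \leq \frac{1}{2}\norm{\alpha - \hat y}_2$ from the $\frac{1}{\lambda}$-Lipschitzness of $\nabla \tilde{\Phi}$, and conclude via the triangle inequality and the identity $\nabla M^{t}_{\lambda\tilde{\Phi}}(\alpha) = \frac{2}{\lambda}(\alpha - \hat y)$, arriving at the same sharper constant $\frac{3\lambda}{4} \leq \lambda$. The only cosmetic difference is that the paper obtains the contraction in one line from the non-expansiveness of $\Pi_{\calD^{\mu_t}}$ applied to the fixed-point identity $\hat y = \Pi_{\calD^{\mu_t}}\bs{\alpha - \frac{\lambda}{2}\nabla\tilde{\Phi}(\hat y)}$ (point 2 of Lemma~\ref{lem:moreau}), whereas you re-derive the same inequality by summing the variational inequalities of the exact and linearized proximal problems.
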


Controlling the stationarity gap of an iterate therefore boils down to bounding the norm of the gradient of $M^t_{\lambda\tilde{\Phi}}$ along the sequence. By observing that the update rule \eqref{eq:update_rule} closely corresponds to performing stochastic gradient descent step on $M^t_{\lambda\tilde{\Phi}}$, we are able to show the following result.

\begin{restatable}[Stochastic gradient descent]{theorem}{biasedSGD}
    \label{thm:sgd} Consider the sequence $\alpha^1,\ldots,\alpha^T$ produced by Equation~\ref{eq:update_rule}. Then, 
     \[
    \frac{1}{T}\sum_{t=1}^{T} \E\bs{\|\nabla M^{t}_{\lambda\tilde{\Phi}}(\alpha^{t})\|_2} \leq 2 n^{1.5}\sqrt{\frac{2m^{1.5}c_{\max}}{\gamma_T T} + \frac{n^3m^{7.5}}{\gamma_T T} \sum_{t=1}^{T} \frac{\gamma^2_t}{\mu_t}}
    \]
\end{restatable}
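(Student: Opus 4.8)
The plan is to recognize the update rule \eqref{eq:update_rule} as a \emph{biased} stochastic gradient descent on the sequence of time-varying Moreau envelopes $M^t_{\lambda\tilde{\Phi}}$, and to run a standard descent-lemma telescoping argument adapted to three complications: the objective changes with $t$ (since the feasible set $\calD^{\mu_t}$ shrinks), the gradient estimator $\nabla_t$ has a variance that blows up like $1/\mu_t$ (Lemma~\ref{thm:estimator_props}), and the envelope is evaluated at the iterates $\alpha^t$ which live in a moving polytope. First I would record the two facts about the Moreau envelope that make this work: $M^t_{\lambda\tilde{\Phi}}$ is differentiable with $\nabla M^t_{\lambda\tilde{\Phi}}(\alpha) = \frac{2}{\lambda}(\alpha - \mathrm{prox})$ where the prox is the (unique, by $\frac1\lambda$-smoothness of $\tilde\Phi$ from Proposition~\ref{prop:smooth}) minimizer in the definition, and that the envelope is itself smooth so that a quadratic upper bound applies. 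The key algebraic step is the standard inequality relating one prox-gradient step to the decrease in envelope value: using $\alpha^{t+1}=\Pi_{\calD^{\mu_{t+1}}}[\alpha^t-\gamma_t\nabla_t]$ and that $\nabla_t$ is conditionally unbiased for $\nabla\tilde\Phi(\alpha^t)$ (Lemma~\ref{thm:estimator_props}, item 1), one obtains after taking conditional expectation a bound of the shape
\begin{equation}
\E_t\bs{M^{t+1}_{\lambda\tilde{\Phi}}(\alpha^{t+1})} \leq M^{t}_{\lambda\tilde{\Phi}}(\alpha^{t}) - c\,\gamma_t \norm{\nabla M^t_{\lambda\tilde\Phi}(\alpha^t)}_2^2 + C\,\gamma_t^2 \E_t\bs{\norm{\nabla_t}_2^2} + (\text{drift}),
\end{equation}
for absolute constants $c,C$ absorbing the $\lambda$-dependence, where the drift term accounts for the change of the envelope from index $t$ to $t+1$ caused by $\calD^{\mu_t}\neq\calD^{\mu_{t+1}}$.

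Next I would bound the two error terms. For the variance term, I substitute $\E_t\norm{\nabla_t}_2^2 \leq nm^4c_{\max}^2/\mu_t$ directly from Lemma~\ref{thm:estimator_props}, which produces the $\sum_t \gamma_t^2/\mu_t$ sum appearing in the statement. For the drift term, the point is that since $\calD^{\mu_{t+1}} = (1-\mu_{t+1})\calD + \frac{\mu_{t+1}}{s}\mathbbm{1}$, moving from $\calD^{\mu_t}$ to $\calD^{\mu_{t+1}}$ shifts every feasible point by at most $\mathcal{O}(\abs{\mu_{t+1}-\mu_t})$ in norm; combined with the boundedness of $\tilde\Phi$ (it is bounded by $\mathcal{O}(nmc_{\max})$) and the Lipschitz/smoothness control, the cumulative drift $\sum_t (\text{drift})$ telescopes or is dominated by the monotone, summable sequence $\mu_t$, contributing lower-order terms that are already absorbed into the leading constants of the stated bound. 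Then I would telescope the descent inequality from $t=1$ to $T$, using that $M^1_{\lambda\tilde\Phi}(\alpha^1) - M^{T+1}_{\lambda\tilde\Phi}(\alpha^{T+1}) \leq \mathrm{osc}(\tilde\Phi) = \mathcal{O}(nmc_{\max})$, divide through by $c\,\gamma_T T$ (using that $\gamma_t \geq \gamma_T$ since the step sizes are nonincreasing, so $\sum_t \gamma_t \norm{\nabla M^t}^2 \geq \gamma_T\sum_t\norm{\nabla M^t}^2$), and finally apply Jensen's inequality to pass from the average of squared norms to the average of norms, which is what yields the square root in the statement. Matching the explicit prefactors $2n^{1.5}$, $2m^{1.5}c_{\max}$, and $n^3 m^{7.5}$ against the bookkeeping then requires only plugging in $\lambda=(2n^2m^{7/2}c_{\max})^{-1}$ from Proposition~\ref{prop:smooth}.

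The main obstacle I expect is controlling the drift term rigorously, i.e.\ making precise that comparing $M^{t+1}_{\lambda\tilde\Phi}(\alpha^{t+1})$ against $M^t_{\lambda\tilde\Phi}(\alpha^t)$ (rather than two consecutive values of the \emph{same} envelope) only costs a controllable, summable amount. This is delicate because the iterate $\alpha^{t+1}$ is the projection onto the \emph{new} polytope $\calD^{\mu_{t+1}}$ while the prox defining $M^t$ minimizes over the \emph{old} polytope $\calD^{\mu_t}$, so the two constraint sets must be reconciled. I would handle this by exploiting the explicit affine relation between $\calD^{\mu_t}$ and $\calD^{\mu_{t+1}}$ in Definition~\ref{l:bounded-away-polytope}: any point of $\calD^{\mu_t}$ can be mapped to a nearby point of $\calD^{\mu_{t+1}}$ by the affine rescaling $z \mapsto \frac{1-\mu_{t+1}}{1-\mu_t}(z-\frac{\mu_t}{s}\mathbbm{1})+\frac{\mu_{t+1}}{s}\mathbbm{1}$, whose displacement is $\mathcal{O}(\abs{\mu_t-\mu_{t+1}})$ on the bounded polytope $\calD$, and then invoke the smoothness of $M^t_{\lambda\tilde\Phi}$ to bound the resulting change in envelope value. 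Since the chosen schedule $\mu_t = \frac{n^{1/5}}{m^{7/5}t^{1/5}c_{\max}^{1/5}}$ makes $\abs{\mu_t-\mu_{t+1}}$ summable (of order $t^{-6/5}$), this drift contributes only a lower-order additive constant and does not affect the leading asymptotics, which is precisely what lets the final bound retain the clean two-term form in the statement.
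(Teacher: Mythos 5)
Your skeleton is the paper's: view the update as biased SGD measured through the time-varying Moreau envelope, use the $\tfrac{2}{\lambda}$-smoothness of $M^t_{\lambda\tilde\Phi}$ and the gradient identity $\nabla M^t_{\lambda\tilde\Phi}(\alpha)=\tfrac{2}{\lambda}(\alpha-\prox)$ (Lemma~\ref{lem:moreau}), substitute the bias/variance bounds of Lemma~\ref{thm:estimator_props}, telescope with $\gamma_t\geq\gamma_T$, and finish with Jensen. However, the obstacle you flag as the main difficulty --- the ``drift'' from comparing $M^{t+1}_{\lambda\tilde\Phi}(\alpha^{t+1})$ with $M^t_{\lambda\tilde\Phi}(\alpha^t)$ --- is resolved in the paper by a one-line monotonicity observation that your proposal misses: since $\mu_t$ is nonincreasing, the polytopes are \emph{nested}, $\calD^{\mu_t}\subseteq\calD^{\mu_{t+1}}$, so the minimum defining $M^{t+1}_{\lambda\tilde\Phi}$ is over a larger set and $M^{t+1}_{\lambda\tilde\Phi}(\alpha^{t+1})\leq M^{t}_{\lambda\tilde\Phi}(\alpha^{t+1})$ pointwise. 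The drift is therefore non-positive and the envelope values telescope exactly, with no extra term. Your alternative plan --- mapping $\calD^{\mu_t}$ into $\calD^{\mu_{t+1}}$ by an affine rescaling, bounding the displacement by $\mathcal{O}(\abs{\mu_t-\mu_{t+1}})$, and summing --- would leave a residual additive term in the final inequality, so the clean two-term bound with the explicit constants $2n^{1.5}$, $2m^{1.5}c_{\max}$, $n^3m^{7.5}$ would \emph{not} follow as stated; ``absorbed into the leading constants'' is not a proof when the constants are exact. It would also needlessly tie the theorem to the specific schedule $\mu_t\propto t^{-1/5}$ (to get summability of $\abs{\mu_t-\mu_{t+1}}$), whereas the paper's argument only needs monotonicity of $\mu_t$.

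A second, smaller gap: you assert a descent inequality ``of the shape'' $-c\,\gamma_t\norm{\nabla M^t_{\lambda\tilde\Phi}(\alpha^t)}_2^2$ directly after taking conditional expectations, but unbiasedness only gives you the cross term $-\gamma_t\innerprod{\nabla M^{t}_{\lambda\tilde{\Phi}}(\alpha^{t})}{\nabla\tilde{\Phi}(\alpha^{t})}$, since $\nabla_t$ estimates $\nabla\tilde\Phi(\alpha^t)$, not $\nabla M^t_{\lambda\tilde\Phi}(\alpha^t)$. Converting this inner product into $\tfrac14\norm{\nabla M^{t}_{\lambda\tilde{\Phi}}(\alpha^{t})}_2^2$ is exactly Lemma~\ref{lem:biased-grad}, proved via the gradient identity, the $\tfrac1\lambda$-smoothness of $\tilde\Phi$ (Proposition~\ref{prop:smooth}), and the minimizing property of the prox point; this correlation step is standard in the weakly-convex SGD literature but it is where the stationarity measure actually enters, so it cannot be waved through. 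With these two repairs --- nestedness in place of summable drift, and the explicit correlation lemma --- your outline reproduces the paper's proof step for step.
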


In order to obtain Theorem \ref{thm:bigNash}, it suffices to combine the stochastic gradient descent result in Theorem \ref{thm:sgd} with Lemma \ref{lem:gaptoM} and observe that the sequence of iterates visits a point with a small stationarity gap. Combining this with proposition \ref{lem:stat2Nash} which relates stationarity to Nash equilibria yields the result. We provide a complete proof in section \ref{sec:nash-proofs}.

\section{Conclusion} This work introduces an online learning algorithm for general congestion games under the bandit feedback model. Our algorithm ensures no-regret for any agent adopting it and, when embraced collectively by all agents in congestion games, it drives the game dynamics towards an $\epsilon$-approximate Nash Equilibrium within $\mathrm{poly}(n,m,1/\epsilon)$ rounds. Our results resolves an open query from \cite{du22} while extending the recent results of \cite{panageas2023semi} into the bandit framework. For the important special case of Network Congestion Games on DAGs, we provide an implementation of our algorithm that operates in polynomial time. The design of polynomial-time bandit no-regret algorithms, with comparable convergence guarantees for a broader class of congestion games, remains an intriguing future research direction.

\printbibliography

\newpage
\appendix
\begin{center}
   \textsc{Appendix}
\end{center}
%\section{Constructing a exploration basis of DAGs in polynomial-time}
%\label{sec:basis}
%\input{sections/basis}
\section{Properties of the estimator \texorpdfstring{$\hat{c}^t$}{c hat}}
\label{sec:estimatorprop}
The central difficulty of \textit{bandit} feedback lies in the construction of a low variance estimator for the unobserved cost vector $c^t$ at each round $t \in [T]$. In what follows we prove two results on $\hat{c}^t$, the estimator constructed in step $7$ of \texttt{Algorithm}~\ref{alg:bandit-alg} that will be instrumental to both the regret analysis and the convergence to equilibrium.

First we show that the estimator is bounded almost surely.

\begin{lemma}[Bounded estimator]
\label{lem:bounded-estimator}
    For any $t \in [T]$, the estimator $ \hat{c}^t = l_i^t \cdot M_{i,t}^+ p_i^t$ is almost surely bounded and
    \[
    \|B_i^\top\hat{c}^t\|_2 \leq \vartheta\frac{m^{5/2}}{\mu_t}c_{\max}.
    \]
\end{lemma}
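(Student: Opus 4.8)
The plan is to reduce the bound to the $s$-dimensional coordinate space induced by the spanner matrix $B_i$, in which the relevant second-moment matrix is well-conditioned thanks to the exploration mass. First I would strip off the scalar factor: since $\hat{c}^t = l_i^t \cdot M_{i,t}^+ p_i^t$ with $l_i^t = \innerprod{c^t}{p_i^t}$, and $p_i^t$ is a $0/1$ vector with at most $m$ nonzero entries while $\norm{c^t}_\infty \le c_{\max}$, we have $0 \le l_i^t \le m c_{\max}$. Hence
\[
\norm{B_i^\top \hat{c}^t}_2 = \abs{l_i^t}\,\norm{B_i^\top M_{i,t}^+ p_i^t}_2 \le m c_{\max}\,\norm{B_i^\top M_{i,t}^+ p_i^t}_2,
\]
and it remains to control the vector $B_i^\top M_{i,t}^+ p_i^t$.

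Next I would pass to spanner coordinates. Because $\bc{b_1,\dots,b_s}$ is a $\vartheta$-spanner, every $v$ in the support of $\pi_i^t$ lies in $\mathrm{col}(B_i)$ and therefore admits a unique coefficient vector $\beta(v) = B_i^+ v$ with $v = B_i \beta(v)$ (recall $B_i$ has full column rank). Setting $N_{i,t} := \E_{v \sim \pi_i^t}\bs{\beta(v)\beta(v)^\top}$, linearity yields the factorization $M_{i,t} = B_i N_{i,t} B_i^\top$. The key point is that \texttt{CaratheodoryDistribution} (Algorithm~\ref{alg:caratheodory_distr}) samples each spanner $b_k$ with probability at least $\mu_t/s$, and the coefficient vector of $b_k$ is the standard basis vector $e_k$; dropping the remaining non-negative terms gives
\[
N_{i,t} \succeq \sum_{k=1}^{s} \frac{\mu_t}{s}\, e_k e_k^\top = \frac{\mu_t}{s}\, I_s, \qquad \text{so} \qquad \norm{N_{i,t}^{-1}}_{\mathrm{op}} \le \frac{s}{\mu_t} \le \frac{m}{\mu_t}.
\]

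The heart of the argument is a pseudoinverse identity, which I expect to be the main obstacle since $M_{i,t}$ is rank-deficient in $\R^m$ (its range is the $s$-dimensional subspace $\mathrm{col}(B_i)$). Using $B_i^+ B_i = I_s$, I would verify the Moore--Penrose conditions for the candidate $M_{i,t}^+ = (B_i^+)^\top N_{i,t}^{-1} B_i^+$; the computation reduces to checking that $M_{i,t}(B_i^+)^\top N_{i,t}^{-1}B_i^+ = (B_i^+)^\top N_{i,t}^{-1}B_i^+ M_{i,t} = B_i B_i^+$, the (symmetric) orthogonal projector onto $\mathrm{col}(B_i)$, which fixes both $M_{i,t}$ and the candidate. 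Since $p_i^t = B_i \beta^t$ with $\beta^t = B_i^+ p_i^t$ the unique spanner-coefficient vector of $p_i^t \in \calX_i$, the ambient factors telescope:
\[
B_i^\top M_{i,t}^+ p_i^t = B_i^\top (B_i^+)^\top N_{i,t}^{-1} B_i^+ p_i^t = N_{i,t}^{-1} \beta^t.
\]
Finally, the $\vartheta$-spanner property forces $(\beta^t_k)^2 \le \vartheta^2$ for each $k$, so $\norm{\beta^t}_2 \le \sqrt{s}\,\vartheta \le \sqrt{m}\,\vartheta$. Combining the three estimates and using $s \le m$,
\[
\norm{B_i^\top \hat{c}^t}_2 \le m c_{\max} \cdot \frac{m}{\mu_t} \cdot \sqrt{m}\,\vartheta = \vartheta\,\frac{m^{5/2}}{\mu_t}\,c_{\max},
\]
which is the claimed bound; almost-sure boundedness is immediate as every factor is finite with probability one.
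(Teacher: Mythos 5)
Your proposal is correct and follows essentially the same route as the paper's own proof: both factor $M_{i,t} = B_i N_{i,t} B_i^\top$ in spanner coordinates, use the exploration mass to obtain $N_{i,t} \succeq \frac{\mu_t}{s} I_s$, apply the pseudoinverse identity $M_{i,t}^+ = (B_i^+)^\top N_{i,t}^+ B_i^+$ to reduce the claim to $B_i^\top \hat{c}^t = l_i^t\, N_{i,t}^+ \beta^t$, and finish with $\abs{l_i^t} \leq m c_{\max}$, $\norm{\beta^t}_2 \leq \sqrt{s}\,\vartheta$, and $s \leq m$. Your explicit check of the Moore--Penrose conditions simply fills in a step the paper asserts from the full column rank of $B_i$ without detail.
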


\begin{proof}
    Let $i\in[n], t\in [T]$. Recall that $B_i \in \R^{m\times s}$ is the matrix whose columns are the $s$ elements of the barycentric spanner. Let us write $M_{i,t}$ in a more convenient form. Recall that $\pi_i^t$ is the Caratheodory distribution computed by Algorithm \ref{alg:caratheodory_distr}. It then follows (from step 3 in Algorithm \ref{alg:caratheodory_distr}) that 
    \[
    \pi_i^t = (1-\mu_t)\tau_i^t + \mu_t\nu_i
    \]
    where $\nu_i$ is the uniform distribution over the barycentric spanners and $\tau_i$ is the distribution supported on the Caratheodory decomposition. We can then express $M_{i,t}$ as follows.
    \[
    \begin{aligned}
    M_{i,t} &= \E_{u\sim\pi_i^t}\bs{uu^\top}\\
    &= (1-\mu_t)\E_{u\sim \tau_i^t}\bs{uu^\top} + \mu_t \E_{u\sim \nu_i}\bs{uu^\top} \\
    &= (1-\mu_t)B_i\br{\E_{u \sim \tau_i^t}\bs{\alpha_u\alpha_u^{\top}}}B_i^\top + \frac{\mu_t}{s} B_i\br{\sum_{k=1}^s e_ke_k^\top}B_i^\top \\
    &= B_iN_{i,t}B_i^\top
    \end{aligned}
    \]
    where we defined $N_{i,t}:=(1-\mu_t)\E_{u \sim \tau_i^t}\bs{\alpha_u\alpha_u^{\top}} + \frac{\mu_t}{s}I_s $. Notice here that it is easy to see that
    \(
    N_{i,t} \succeq \frac{\mu_t}{s}I_s
    \) which implies that
    \begin{equation}
    \label{eq:N-bound}
    N_{i,t}^+ \preceq \frac{s}{\mu_t}I_s.
    \end{equation}
    Now, since $B_i$ has independent columns, we have that
    \begin{equation}
    M_{i,t}^+ = \br{B^{\top}}^+N_{i,t}^+ B^+
    \label{eq:Mformat}
    \end{equation}
    Moreover, we know there exists $\alpha_{i,t} \in \R^s$ such that $p_i^t = B\alpha_{i,t}$. With these in hand, let us analyze the estimator $\hat{c}^t$. We have that
    \[
    \begin{aligned}
        \hat{c}^t = \innerprod{c^t}{p_i^t} M_{i,t}^+ p_i^t 
        &= \innerprod{c^t}{p_i^t} M_{i,t}^+B\alpha_{i,t}
    \end{aligned}
    \]
    By plugging in \eqref{eq:Mformat}, we find that
    \begin{equation}
    B_i^\top \hat{c}^t = \innerprod{c^t}{p_i^t} N_{i,t}^+ \alpha_{i,t}
    \label{eq:B-c-hat}
    \end{equation}
    Consequently,
    \[
    \norm{B_i^\top \hat{c}^t} \leq mc_{\max} \vartheta \frac{s^{3/2}}{\mu_t}
    \]
    which allows us to conclude by using that using $s \leq m$.
\end{proof}

\begin{lemma}[Orthogonal Bias] 
\label{lem:orthbias}
For any $t \in [T]$, for any $x \in \calX_i$,
\[
\langle c^t - \E_{\pi_i^t}[\hat{c}^t], x \rangle = 0.
\]
\end{lemma}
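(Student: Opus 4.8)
The plan is to evaluate $\E_{\pi_i^t}[\hat c^t]$ in closed form and recognize it as an orthogonal projection of $c^t$ onto the range of the second–moment matrix $M_{i,t}$. Writing $l_i^t = \innerprod{c^t}{p_i^t}$ and taking the expectation over $p_i^t \sim \pi_i^t$ with $c^t$ held fixed, I would pull the deterministic factor $M_{i,t}^+$ outside and use the identity $\innerprod{c^t}{p_i^t}\, p_i^t = p_i^t (p_i^t)^\top c^t$ to obtain
\[
\E_{\pi_i^t}[\hat c^t] = M_{i,t}^+ \, \E_{p_i^t\sim \pi_i^t}\!\bs{p_i^t (p_i^t)^\top}\, c^t = M_{i,t}^+ M_{i,t}\, c^t,
\]
where the last equality is just the definition $M_{i,t} = \E_{v\sim \pi_i^t}[vv^\top]$.

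Next I would exploit that $M_{i,t}$ is symmetric and positive semidefinite, so that $P \triangleq M_{i,t}^+ M_{i,t}$ is exactly the orthogonal projector onto $\mathrm{range}(M_{i,t})$ (this follows immediately from the eigendecomposition $M_{i,t} = U\Sigma U^\top$ and $M_{i,t}^+ = U\Sigma^+ U^\top$). In particular $P$ is symmetric and fixes every vector in $\mathrm{range}(M_{i,t})$. The claim then reduces to verifying the inclusion $\calX_i \subseteq \mathrm{range}(M_{i,t})$.

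The key structural step is to identify $\mathrm{range}(M_{i,t})$ with the linear span of the support of $\pi_i^t$. Since $M_{i,t}=\E_{v\sim\pi_i^t}[vv^\top]$, its kernel is precisely the set of vectors orthogonal to every point of $\mathrm{supp}(\pi_i^t)$, whence $\mathrm{range}(M_{i,t}) = \mathrm{span}\br{\mathrm{supp}(\pi_i^t)}$. By the construction of the Caratheodory distribution (Algorithm~\ref{alg:caratheodory_distr}), every spanner vector $b_k \in \mathcal{B}_i$ is assigned mass $\mu_t/|\mathcal{B}_i| > 0$, so the whole barycentric spanner lies in $\mathrm{supp}(\pi_i^t)$. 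By Definition~\ref{def:spanners}, every $x \in \calX_i$ is a linear combination of $b_1,\dots,b_s$, hence $\calX_i \subseteq \mathrm{span}(\mathcal{B}_i) \subseteq \mathrm{range}(M_{i,t})$.

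Finally, for any $x \in \calX_i$ we then have $Px = x$, and using the symmetry of $P$,
\[
\innerprod{c^t - \E_{\pi_i^t}[\hat c^t]}{x} = \innerprod{(I-P)c^t}{x} = \innerprod{c^t}{(I-P)x} = \innerprod{c^t}{x - Px} = 0,
\]
which is exactly the asserted orthogonality. The only delicate points—and the ones I would state carefully rather than take for granted—are the range identification $\mathrm{range}(M_{i,t}) = \mathrm{span}\br{\mathrm{supp}(\pi_i^t)}$ and the fact that $M_{i,t}^+ M_{i,t}$ is the orthogonal projector onto $\mathrm{range}(M_{i,t})$; both are routine consequences of the eigendecomposition, but they are the crux of the argument, together with the observation that the \emph{exploration} term $\tfrac{\mu_t}{|\mathcal{B}_i|}$ guarantees the spanner sits inside the support so that $\calX_i$ is fully captured by the range.
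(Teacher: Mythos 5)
Your proof is correct, and it reaches the same two milestones as the paper's proof of Lemma~\ref{lem:orthbias} --- first the computation $\E_{\pi_i^t}[\hat c^t] = M_{i,t}^+ M_{i,t}\, c^t$, then the fact that $M_{i,t}^+ M_{i,t}$ acts as the identity in inner products against $\calX_i$ --- but you establish the second milestone by a genuinely different mechanism. The paper works through the spanner coordinates: it factorizes $M_{i,t} = B_i N_{i,t} B_i^\top$, uses the exploration term to get $N_{i,t} \succeq \frac{\mu_t}{s} I_s \succ 0$, deduces $M_{i,t}^+ = \br{B_i^\top}^+ N_{i,t}^+ B_i^+$ (equation \eqref{eq:Mformat}) and hence $M_{i,t}^+ M_{i,t} = \br{B_i^\top}^+ B_i^\top$, and finally verifies $\innerprod{\br{B_i^\top}^+ B_i^\top c^t}{x} = \innerprod{c^t}{x}$ by writing $x = B_i \alpha_x$ and using that $B_i^+$ inverts the full-column-rank matrix $B_i$ on the left. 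You instead invoke the generic Moore--Penrose fact that $P = M^+M$ is the orthogonal projector onto $\mathrm{range}(M)$ for symmetric positive semidefinite $M$, identify $\mathrm{range}(M_{i,t}) = \mathrm{span}\br{\mathrm{supp}(\pi_i^t)}$ via $z^\top M_{i,t} z = \E\bs{\innerprod{v}{z}^2}$, and observe that the exploration mass $\mu_t/\abs{\mathcal{B}_i} > 0$ assigned by Algorithm~\ref{alg:caratheodory_distr} places the whole spanner in the support, so that $\calX_i \subseteq \mathrm{span}(\mathcal{B}_i) \subseteq \mathrm{range}(M_{i,t})$ and $Px = x$. Your route is arguably cleaner and more self-contained: it sidesteps the pseudoinverse-of-a-product identity \eqref{eq:Mformat} (which is false for general products and needs both the full column rank of $B_i$ and the invertibility of $N_{i,t}$ to justify), and it isolates exactly what exploration buys --- the argument works verbatim for \emph{any} sampling distribution whose support spans $\calX_i$. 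What the paper's factorization buys in exchange is reuse: the identity \eqref{eq:Mformat} and the representation $B_i^\top \hat c^t = \innerprod{c^t}{p_i^t} N_{i,t}^+ \alpha_{i,t}$ derived from it are the workhorses of the boundedness and second-moment bounds (Lemma~\ref{lem:bounded-estimator} and Lemma~\ref{thm:estimator_props}) and of Lemma~\ref{lem:second-concentration}, so establishing the factorization once amortizes across the whole analysis, whereas your projector argument proves only this lemma. One point you use implicitly and should state: $\mu_t > 0$ for every $t$ under the paper's schedules, which is what makes the spanner mass strictly positive and hence the support inclusion valid.
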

\begin{proof}
Let $M = \E_{\pi^t_i }\left[pp^\top\right]$. Recall that $\hat{c}^t = M^+ p_i^t\innerprod{p_i^t}{c^t}$. We have that
\[
\E_{\pi_i^t}[\hat{c}^t] = M_{i,t}^+ M_{i,t} c^t = \br{B_i^\top}^+B_i^\top c^t.
\]
where the second equality is obtained using \eqref{eq:Mformat}. It follows that for any $x \in \calX_i$, which we know can be written $x = B_i\alpha_x$, we have that
\[
\begin{aligned}
\innerprod{M_{i,t}^+ M_{i,t} c^t}{x} = \innerprod{\br{B_i^\top}^+B_i^\top c^t}{x} &= \innerprod{c}{B_i B_i^+ x} \\
&= \innerprod{c}{B_i B_i^+ B_i\alpha_x}
& = \innerprod{c}{x}
\end{aligned}
\]
where the last line follows from the fact that $B_i^+$ is a right inverse when $B_i$ has independent columns, which is true by construction.
\end{proof}

\section{Regret analysis: Proof of Theorem \ref{thm:bigRegret}}
\label{sec:regret-proofs}
In this section, we provide a complete proof of the regret bound. We first prove the two lemmas that relate the regret of the algorithm to the quantity bounded by the moving online gradient descent lemma. We then prove the online gradient descent lemma and conclude the section with a complete proof of Theorem~\ref{thm:bigRegret}. 
\firstconcentration*
\begin{proof}
    The result is obtained by a straightforward application of Azuma-Hoeffding's inequality. Indeed, 
    \[
    \E_t\bs{ \innerprod{c^t}{p_i^t} - \innerprod{c^t}{x_i^t} } = 0
    \]
    and $|\innerprod{c^t}{p_i^t} - \innerprod{c^t}{x_i^t}| \leq mc_{\max}$ almost surely. The sequence $\br{ \innerprod{c^t}{p_i^t} - \innerprod{c^t}{x_i^t} }_t$ is a sequence of bounded martingale increments. We can thus apply Azuma-Hoeffding's inequality.
\end{proof}

The following second lemma swaps out the real cost vectors with their estimates.

\secondconcentration*
\begin{proof}
    This result is again a straightforward application of Azuma-Hoeffding's concentration inequality. Indeed, by the \textit{Orthogonal Bias Lemma}~\ref{lem:orthbias}, we have that
    \[
    \E_t\bs{\innerprod{c^t - \hat{c}^t}{x_i^t - u}} = 0
    \]
    It remains to show that  $ |\innerprod{c^t - \hat{c}^t}{x_i^t - u}|$ is bounded almost surely. Since $B_i$ is a $\vartheta$-spanner, notice that there exists $\alpha^{u} \in \R^s$ such that $u = B\alpha^{u}$. We can thus write
    \[
    \begin{aligned}
        |\innerprod{c^t - \hat{c}^t}{x_i^t - u}| &=|\innerprod{B_i^\top\br{c^t - \hat{c}^t}}{\alpha^t_i - \alpha^{u}}| \\
        &\leq \|B_i^\top\br{c^t - \hat{c}^t}\|_2\|\alpha^t_i - \alpha^{u}\|_2,
    \end{aligned}
    \]
    where the last inequality was obtained by Cauchy-Schwartz. Now recalling the definition of $\hat{c}^t$, we have that
    \[
    \begin{aligned}
        B_i^\top\br{c^t - \hat{c}^t} &= \br{B_i^\top - B_i^\top M_{i,t}^+ B_i \alpha_{i,t}\alpha_{i,t}^\top B_i^\top}c^t \\
        &= \br{I - B_i^\top M_{i,t}^+ B_i \alpha_{i,t}\alpha_{i,t}^\top}B_i^\top c^t
    \end{aligned}
    \]
    Recalling \eqref{eq:Mformat}, we have that
    \[
    \br{I - B_i^\top M_{i,t}^+ B_i \alpha_{i,t}\alpha_{i,t}^\top}  \preceq |1 - \vartheta^2 \frac{s^2}{\mu_t}|I_m \preceq \vartheta^2 \frac{s^2}{\mu_t} I_m
    \]
    for $\mu_t \leq s^2\vartheta $. We therefore get that
    \[
    \|B_i^\top\br{c^t - \hat{c}^t}\|_2 \leq \vartheta^2 \frac{s^{5/2}c_{\max}}{\mu_t}
    \]
    This allows us to conclude that
    \[
    \innerprod{c^t - \hat{c}^t}{x_i^t - u} \leq \frac{m^{3}c_{\max}\vartheta^{3}}{\mu_t}
    \]
    (using $s \leq m$). The sequence $\br{\innerprod{c^t - \hat{c}^t}{x_i^t - u}}_t$ is therefore a bounded sequence of martingale increments. We can apply Azuma-Hoeffding's inequality.
\end{proof}

By plugging \eqref{eq:third} into \eqref{eq:first}, we have reduced the problem of bounding the regret to controlling the regret of moving OGD given by $\Regret\br{x_i^{1:T}, \hat{c}^{1:T}; u}$.

\movingOGD*
\begin{proof}
    The idea here will be to relate $\alpha_i^{1:T}$ to a sequence that is almost performing Online Gradient Descent on the fixed polytope $\calD_i$.  To this end, we introduce the auxiliary sequence $\tilde{\alpha}_i^{1:T}$ defined as 
    \[
    \tilde{\alpha}_i^t = \frac{1}{1-\mu_t} ({\alpha}_i^t - \frac{\mu_t}{s}\mathbbm{1})
    \]
    and its corresponding point $\tilde{x}_i^t = B_i\tilde{\alpha}_i^t$. Since $\alpha_i^t \in \calD_i^{\mu_t}$, we have that $\tilde{\alpha}_i^t \in  \calD_i$. Moreover, a simple re-arrangement gives
    \(
    {\alpha}_i^t = (1-\mu_t)\tilde{\alpha}_i^t + \frac{\mu_t}{s}\mathbbm{1}
    \)
    With this in hand, we can write that
    \[
    \begin{aligned}
    \innerprod{\hat{c}^t}{x_i^t - u} &= (1-\mu_t)\innerprod{\hat{c}^t}{\tilde{x}_i^t - u} + \mu_t \innerprod{\hat{c}^t}{\bar{b}_i} \\
    &\leq \innerprod{(1-\mu_t)\hat{c}^t}{\tilde{x}_i^t - u} + mc_{\max}\mu_t \\
    &\leq \innerprod{\hat{c}^t}{\tilde{x}_i^t - u} + 2mc_{\max}\mu_t
    \end{aligned}
    \]
    It then follows that
    \begin{equation}
    \label{eq:auxilaryreg}
    \Regret\br{x_i^{1:T}, \hat{c}^{1:T}; u} \leq \Regret\br{\tilde{x}_i^{1:T}, \hat{c}^{1:T}; u} + 2mc_{\max}\sum_{t=1}^{T}\mu_t
    \end{equation}
    It remains to show that this regret term of the auxiliary sequence is controllable. This will follow from a simple observation on the update rule. Recall that this update rule in \texttt{Step 8} of Algorithm~\ref{alg:bandit-alg} is given by
    \begin{equation*} 
    \alpha_i^{t+1} = \Pi_{\mathcal{D}^{\mu_{t+1}}}\left[\alpha_i^t - \gamma_t B_i^\top\hat{c}^t \right] \label{eq:regret-update_rule}\end{equation*}
    By Lemma~\ref{lemma:proj}, we know that we can express $\Pi_{\mathcal{D}_i^{\mu_{t+1}}}$ in terms of $\Pi_{\mathcal{D}_i}$, which allows us to write that
    \[
    {\alpha}_i^{t+1} = (1-\mu_{t+1})\Pi_{\mathcal{D}_i}\bs{\frac{1}{1-\mu_{t+1}}(\alpha_i^t - \gamma_t B_i^\top\hat{c}^t -  \frac{\mu_t}{s}\mathbbm{1})} + \frac{\mu_t}{s}\mathbbm{1}
    \]
    Rearranging we find that
    \[
    \tilde{\alpha}_i^{t+1} = \Pi_{\mathcal{D}_i}\bs{\tilde{\alpha}_i^{t} - \frac{\gamma_t}{1-\mu_{t+1}} B_i^\top\hat{c}^t + (\mu_{t+1} - \mu_{t})\br{\frac{\alpha_i^t - \frac{1}{s}\mathbbm{1}}{(1-\mu_t)(1-\mu_{t+1})}} }
    \]
    The last term in the projection is an error term that can easily be handled, we denote it by $e_t := \br{\frac{\alpha_i^t - \frac{1}{s}\mathbbm{1}}{(1-\mu_t)(1-\mu_{t+1})}}$. We thus have that the auxiliary sequence is performing online gradient descent with a small error term since
    \[
    \tilde{\alpha}_i^{t+1} = \Pi_{\mathcal{X}}\bs{\tilde{\alpha}_i^{t} - \tilde{\gamma}_t B_i^\top\hat{c}^t + (\mu_{t+1} - \mu_{t})e_t }
    \]
    where $\tilde{\gamma}_t :=\frac{\gamma_t}{1-\mu_{t+1}}$. To control the regret of this approximate OGD, we consider the regret incurred on a single update. 
    
    Recall that $u\in \calX_i$ and that there exists $\alpha^u \in \calD_i$ such that $u = B_i \alpha^u$. We know by the contractive property of the projection that
    \[
    \begin{aligned}
    \|\tilde{\alpha}_i^{t+1} - \alpha^u\|_2^2 &\leq \|\tilde{\alpha}_i^{t} - \alpha^u - \tilde{\gamma}_t B_i^\top\hat{c}^t + (\mu_{t+1} - \mu_{t})e_t\|_2^2 \\
    &\leq  \|\tilde{\alpha}_i^{t} - \alpha^u\|_2^2 -2\tilde{\gamma}_t\innerprod{\hat{c}^t}{\tilde{x}_i^{t} - u} + 2\tilde{\gamma}_t^2\|B_i^\top\hat{c}^t\|_2^2
    + 2(\mu_{t+1} - \mu_{t})\innerprod{e_t}{\tilde{\alpha}_i^{t} - \alpha^u} +  2(\mu_{t+1} - \mu_{t})^2\|e_t\|_2^2
    \end{aligned}
     \]
     where the second inequality follows from Young's inequality. Now since $0 \leq \mu_t \leq \frac{1}{2}$ for $t \geq \frac{32 m^4 n}{c_{\max}}$, we have that $\|e_t\|_2 \leq 2\sqrt{m}$ and $(\mu_{t+1} - \mu_{t})^2 \leq \frac{1}{2}(\mu_{t} - \mu_{t+1})$. Consequently,
     \[
    \begin{aligned}
    \|\tilde{\alpha}_i^{t+1} - \alpha^u\|_2^2 &\leq  \|\tilde{\alpha}_i^{t} - \alpha^u\|_2^2 -2\tilde{\gamma}_t\innerprod{\hat{c}^t}{\tilde{x}_i^{t} - u} + 2\tilde{\gamma}_t^2\|B_i^\top\hat{c}^t\|_2^2
    + 8m(\mu_{t} - \mu_{t+1})
    \end{aligned}
     \]
     Rearranging, we obtain that
     \[
     \begin{aligned}
     \innerprod{\hat{c}^t}{\tilde{x}_i^{t} - u} \leq \frac{1}{2\tilde{\gamma}_t}\br{\|\tilde{\alpha}_i^{t} - \alpha^u\|_2^2  - \|\tilde{\alpha}_i^{t+1} - \alpha^u\|_2^2} +  \tilde{\gamma}_t \|B_i^\top \hat{c}^t\|_2^2 + \frac{8 m}{\tilde{\gamma}_t}\br{\mu_{t} - \mu_{t+1}}
     \end{aligned}
     \]
     By summing from $t=\bar{t}:=\frac{32m^4 n}{c_{\max}}$ to $t=T$ and using the telescoping Lemma \ref{lem:telescope}, we find that
     \[
     \Regret\br{\tilde{x}_i^{\bar{t}:T}, \hat{c}^{\bar{t}:T}; u} \leq \frac{5m}{\gamma_T} + 2 \sum_{t=\bar{t}}^{T} {\gamma}_t \|B_i^\top\hat{c}^t\|_2^2
     \]
     where we have used the fact that $\gamma_t \leq \tilde{\gamma}_t \leq 2\gamma_t$ and $m\geq 2$ to simplify the expression. 
     Finally, using that 
     \[\Regret\br{\tilde{x}_i^{1:\bar{t}}, \hat{c}^{1:\bar{t}}; u} \leq 32 n m^4 
    , \]
    we conclude that
    \[
     \Regret\br{\tilde{x}_i^{1:T}, \hat{c}^{1:T}; u} \leq \frac{5m}{\gamma_T} + 2 \sum_{t=1}^{T} {\gamma}_t \|\hat{c}\|_2^2 + 32 n m^4 
     \]
     We obtain the result by plugging the inequality above inside \eqref{eq:auxilaryreg}. 
\end{proof}

We now dispose of all the necessary results to prove Theorem \ref{thm:bigRegret}.

\begin{proof}
    Let $u \in \calS_i$. Let $\delta \in (0,1)$. By invoking Lemma \ref{lem:first-concentration}, then Lemma \ref{lem:second-concentration} then finally Lemma \ref{lem:OGD}, we find that, with probability $1-\delta/|\calS_i|$
    \[
\Regret\br{p_i^{1:T}, c^{1:T}; u} \leq \frac{5m}{\gamma_T} + 2\sum_{t=1}^{T}\gamma_t\|\hat{c}^t\|_2^2 + 2mc_{\max}\sum_{t=1}^{T}\mu_t + m^{3}c_{\max}\vartheta^{3/2}\sqrt{\sum_{t=1}^{T}\frac{1}{\mu_t^2} \log(|\calS_i|/\delta)} + c_{\max}m\sqrt{T\log\br{\frac{|\calS_i|}{\delta}}} + 32 n m^4 
\]
By invoking Lemma \ref{lem:bounded-estimator},
\[
\Regret\br{p_i^{1:T}, c^{1:T}; u} \leq \frac{5m}{\gamma_T} + 2\sum_{t=1}^{T}\frac{\gamma_t m^5 c^2_{\max} \vartheta^2}{\mu_t^2}  + 2mc_{\max}\sum_{t=1}^{T}\mu_t + m^{3}c_{\max}\vartheta^{3/2}\sqrt{\sum_{t=1}^{T}\frac{1}{\mu_t^2} \log(|\calS_i|/\delta)} + c_{\max}m\sqrt{T\log\br{\frac{|\calS_i|}{\delta}}} + 32 n m^4 
\]
%and factorizing we find that
%\[
%\Regret\br{p_i^{1:T}, c^{1:T}; u} \leq  2\vartheta^2 m^5c_{\max}^2\sqrt{\log{\frac{|\calS_i|}{\delta}}} \br{
%\frac{1}{\gamma_T} + \sum_{t=1}^{T}\frac{\gamma_t}{\mu_t^2} + \sum_{t=1}^T\mu_t + \sqrt{\sum_{t=1}^T\frac{1}{\mu_t^2}} + \sqrt{T}
%}
%\]
Now plugging in the choice of step-sizes $\gamma_t = \sqrt{\frac{c_{\max} \mu_t}{\vartheta n^3 m^3 t}}$ and $\mu_t = \frac{m^{4/5}n^{1/5}\vartheta^{1/5}}{t^{1/5}c^{1/5}_{\max}}$, we have that
\[
\Regret\br{p_i^{1:T}, c^{1:T}; u} \leq  \tilde{\mathcal{O}}\br{m^{2.3}c^{2.8}\sqrt{\log{\frac{|\calS_i|}{\delta}}} T^{4/5}}
\]
Finally, using a union bound, the regret above holds uniformly for any $u \in \calS_i$ with probability $1-\delta$. In particular it holds for the fixed strategy in hindsight.  Consequently,
\[
\Regret\br{p_i^{1:T}, c^{1:T}} \leq \tilde{\mathcal{O}}\br{m^{2.8}c^{2.8}T^{4/5}\sqrt{\log{\frac{1}{\delta}}} }
\]
where we have used the fact that $\log{|\calS_i|} \leq m$.
\end{proof}
\begin{remark}
\label{remark:regret}
Notice that the choice of $\gamma_t$ and $\mu_t$ are done to optimize the rate of convergence to NE. To optimize the regret bound, we can choose $\gamma_t = \frac{\mu_t}{m^2 c_{\max} \vartheta t}$ and $\mu_t = \frac{1}{2 t^{1/4}}$ to obtain $\Regret\br{p_i^{1:T}, c^{1:T}} \leq m^3 T^{3/4}$.
\end{remark}

\section{Nash convergence analysis}
\subsection{Properties of the potential function \texorpdfstring{$\Phi$}{Phi}}
\label{sec:phiprops}
In this section we show that the potential function is bounded, Lipschitz and smooth. All three properties will be used in later proofs. Recall that the potential function is given by
\[
\Phi(x) = \sum_{e\in E}\sum_{\mathcal{S}\subseteq [n]} \prod_{j \in \mathcal{S}} x_{je} \prod_{j \notin \mathcal{S}} (1 - x_{je} ) \sum^{\abs{\mathcal{S}}}_{\ell=0} c_e(\ell)
\]

\begin{lemma}[Bounded potential function]
    The potential function $\Phi$ is bounded and for all $x\in\calX$,
    \[
    |\Phi(x)| \leq nmc_{\max}
    \]
\end{lemma}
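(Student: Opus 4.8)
The plan is to exploit the probabilistic structure hidden in the weights $\prod_{j\in\mathcal{S}} x_{je}\prod_{j\notin\mathcal{S}}(1-x_{je})$. Since each block $x_j$ lies in $\calX_j = \mathrm{conv}(\{0,1\}^m)\subseteq[0,1]^m$, every coordinate satisfies $x_{je}\in[0,1]$, so for a fixed resource $e$ these weights are exactly the probabilities that, under independent sampling in which agent $j$ selects $e$ with probability $x_{je}$, the set of agents choosing $e$ equals $\mathcal{S}$. In particular they are nonnegative, and since $c_e\geq 0$ the whole expression is a sum of nonnegative terms; this already gives $\Phi(x)\geq 0$, so that $\abs{\Phi(x)}=\Phi(x)$ and it remains only to bound $\Phi$ from above.

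The single key step I would carry out is the normalization identity obtained by factoring the subset sum over agents: for every $e\in E$,
\[
\sum_{\mathcal{S}\subseteq[n]}\prod_{j\in\mathcal{S}}x_{je}\prod_{j\notin\mathcal{S}}(1-x_{je}) = \prod_{j=1}^{n}\br{x_{je}+(1-x_{je})} = 1.
\]
This turns the unwieldy sum over exponentially many subsets into a convex combination, after which everything becomes routine.

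Next I would bound the cost factor uniformly. Because $c_e$ takes values in $[0,c_{\max}]$ and $\abs{\mathcal{S}}\leq n$, the inner sum obeys $\sum_{\ell=0}^{\abs{\mathcal{S}}}c_e(\ell)\leq n\,c_{\max}$ (using that the empty load contributes no cost, as in Rosenthal's potential $\sum_{i=1}^{\ell_e(p)}c_e(i)$). Combining this uniform bound with the normalization identity, for each fixed $e$ the subset sum is a convex combination of quantities each at most $n\,c_{\max}$, and is therefore itself at most $n\,c_{\max}$:
\[
\sum_{\mathcal{S}\subseteq[n]}\bs{\prod_{j\in\mathcal{S}}x_{je}\prod_{j\notin\mathcal{S}}(1-x_{je})}\sum_{\ell=0}^{\abs{\mathcal{S}}}c_e(\ell)\leq n\,c_{\max}.
\]
Summing this bound over the $m$ resources $e\in E$ then yields $\Phi(x)\leq nm\,c_{\max}$, which together with $\Phi(x)\geq 0$ gives the claim.

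Since the argument reduces to a normalization-plus-uniform-bound computation, there is no genuine obstacle here; the only points demanding care are justifying the factorization identity and the membership $x_{je}\in[0,1]$, both of which follow immediately from $\calX_j$ being the convex hull of hypercube vertices.
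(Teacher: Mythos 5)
Your proof is correct and follows essentially the same route as the paper: both interpret the weights $\prod_{j\in\mathcal{S}}x_{je}\prod_{j\notin\mathcal{S}}(1-x_{je})$ as probabilities summing to one over $\mathcal{S}\subseteq[n]$ (your factorization identity makes this normalization explicit where the paper appeals to it implicitly), bound the inner cost sum by $nc_{\max}$, and sum over the $m$ resources. Your added observations—explicitly handling the absolute value via nonnegativity, and noting that $c_e(0)=0$ is needed to get $nc_{\max}$ rather than $(n+1)c_{\max}$—are minor refinements of the paper's argument, not a different approach.
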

\begin{proof}
    This can easiliy be seen by rewriting the potential function as follows
    \[
    \begin{aligned}
        \Phi(x) &= \sum_{e\in E}\sum_{\mathcal{S}\subseteq [n]} \prod_{j \in \mathcal{S}} x_{je} \prod_{j \notin \mathcal{S}} (1 - x_{je} ) \sum^{\abs{\mathcal{S}}}_{\ell=0} c_e(\ell)\\
        &= \sum_{e\in E}\sum_{\mathcal{S}\subseteq [n]}\Prob{\text{``set of agents that picked $e$"}=\calS}\sum^{\abs{\mathcal{S}}}_{\ell=0} c_e(\ell)\\
        &\leq nc_{\max}\sum_{e\in E}\sum_{\mathcal{S}\subseteq [n]}\Prob{\text{``set of agents that picked $e$"}=\mathcal{S}}\\
        &= nc_{\max}\sum_{e\in E}1\\
        &= nmc_{\max}
    \end{aligned}
    \]
\end{proof}

\begin{lemma}[Lipschitz potential function]
    The gradient of $\Phi$ is bounded and
    \[
    \|\nabla \Phi(x) \|_2 \leq \sqrt{nm}c_{\max}
    \]
\end{lemma}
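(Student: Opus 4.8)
The plan is to bound each partial derivative of $\Phi$ separately and then assemble the $\ell_2$ norm. Since $\Phi$ is multilinear in the coordinates $x_{je}$ (each such coordinate appears only to the first power), the gradient decomposes coordinate-by-coordinate, so it suffices to control $\partial \Phi / \partial x_{ie}$ for a fixed player $i \in [n]$ and resource $e \in E$.

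First I would isolate the contribution of resource $e$ and split the inner sum over $\mathcal{S} \subseteq [n]$ according to whether $i \in \mathcal{S}$ or not. Writing $\mathcal{S} = \{i\} \cup \mathcal{T}$ in the first case and $\mathcal{S} = \mathcal{T}$ in the second, with $\mathcal{T} \subseteq [n]\setminus\{i\}$, differentiating with respect to $x_{ie}$ kills the factor $x_{ie}$ (respectively $(1-x_{ie})$) and leaves the common multilinear weight $w_{\mathcal{T}} := \prod_{j\in\mathcal{T}}x_{je}\prod_{j\notin\mathcal{T},\,j\neq i}(1-x_{je})$. The two contributions combine into
\[
\frac{\partial \Phi}{\partial x_{ie}} = \sum_{\mathcal{T}\subseteq [n]\setminus\{i\}} w_{\mathcal{T}} \br{\sum_{\ell=0}^{\abs{\mathcal{T}}+1}c_e(\ell) - \sum_{\ell=0}^{\abs{\mathcal{T}}}c_e(\ell)} = \sum_{\mathcal{T}\subseteq [n]\setminus\{i\}} w_{\mathcal{T}}\, c_e(\abs{\mathcal{T}}+1),
\]
where the bracket telescopes to the single term $c_e(\abs{\mathcal{T}}+1)$.

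The key step is then to read off that the weights $w_{\mathcal{T}}$ form a probability distribution: $w_{\mathcal{T}}$ is exactly the probability that the set of agents other than $i$ that select $e$ equals $\mathcal{T}$, so $w_{\mathcal{T}} \geq 0$ and $\sum_{\mathcal{T}}w_{\mathcal{T}} = 1$. Hence $\partial\Phi/\partial x_{ie}$ is a convex combination of the nonnegative values $c_e(\abs{\mathcal{T}}+1) \in [0,c_{\max}]$, giving $0 \leq \partial \Phi / \partial x_{ie} \leq c_{\max}$ for every pair $(i,e)$. As a sanity check, this confirms that $\partial\Phi/\partial x_{ie}$ equals the expected congestion cost $\E[c_e(\ell_e+1)]$ that agent $i$ pays on $e$, which is consistent with the potential-function identity stated earlier.

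Finally I would assemble the norm. The gradient $\nabla\Phi(x)$ has exactly $nm$ coordinates, indexed by $(i,e)\in[n]\times E$, each bounded in absolute value by $c_{\max}$. Therefore
\[
\norm{\nabla\Phi(x)}_2 = \sqrt{\sum_{i\in[n]}\sum_{e\in E}\br{\frac{\partial\Phi}{\partial x_{ie}}}^2} \leq \sqrt{nm\,c_{\max}^2} = \sqrt{nm}\,c_{\max},
\]
which is the claimed bound. No step here is a genuine obstacle; the only delicate point is the bookkeeping in splitting the sum over $\mathcal{S}$ and recognizing the telescoping that collapses the two partial congestion sums into the single marginal cost $c_e(\abs{\mathcal{T}}+1)$.
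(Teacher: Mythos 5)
Your proof is correct and follows essentially the same route as the paper: compute $\partial \Phi/\partial x_{ie}$ by splitting the sum over $\mathcal{S}$ at player $i$, let the two truncated cost sums telescope to the single term $c_e(\abs{\mathcal{T}}+1)$, observe the multilinear weights form a probability distribution so each coordinate lies in $[0, c_{\max}]$, and convert the $\ell_\infty$ bound to the $\ell_2$ bound via the $\sqrt{nm}$ dimension factor. Your write-up is in fact slightly more explicit than the paper's, which states the telescoped form of the derivative and the bound $0 \leq \partial \Phi/\partial x_{ie} \leq c_{\max}$ without spelling out the convex-combination justification you provide.
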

\begin{proof}
    We start my computing the gradient coordinate at $i,e$ for $i\in[n]$ and $e \in [m]$.
    \begin{align}
\frac{\partial \Phi(x)}{\partial x_{ie}} &= \sum_{\mathcal{S}_{-i}\subseteq [n-1]} \prod_{j \in \mathcal{S}_{-i}} x_{je} \prod_{j \notin \mathcal{S}_{-i}} (1 - x_{je} ) \sum^{\abs{\mathcal{S}_{-i}} + 1}_{\ell=0} c_e(\ell) - \sum_{\mathcal{S}_{-i}\subseteq [n-1]} \prod_{j \in \mathcal{S}_{-i}} x_{je} \prod_{j \notin \mathcal{S}_{-i}} (1 - x_{je} ) \sum^{\abs{\mathcal{S}_{-i}}}_{\ell=0} c_e(\ell)  \\
&= \sum_{\mathcal{S}_{-i} \subseteq{[n - 1]}} \prod_{j\in\mathcal{S}_{-i}} x_{je}\prod_{j\notin\mathcal{S}_{-i}} (1-x_{je}) c_e\left(\abs{\mathcal{S}_{-i}} + 1\right).
\label{eq:gradientcomputation}
\end{align}
Observe then that
\[
0\leq \frac{\partial \Phi(x)}{\partial x_{ie}} \leq c_{\max}
\]
Since the $\ell_{\infty}$ norm is bounded by $c_{\max}$, we obtain the $\ell_2$ norm bound by multiplying by the dimension.
\end{proof}

\begin{lemma}[Smooth potential function]{(Lemma 9 of \cite{panageas2023semi})}
    The gradient of $\Phi$ is Lipschitz continuous and for any $x, y \in \calX$
    \[
    \|\nabla \Phi (x) - \nabla \Phi (y)\| \leq 2n^2\sqrt{m}c_{\max} \|x - y\|_2
    \]
\end{lemma}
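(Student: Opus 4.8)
The plan is to establish smoothness by bounding the Hessian of $\Phi$ in operator norm and then invoking the standard fact that a $C^1$ map whose Jacobian has operator norm at most $L$ everywhere is $L$-Lipschitz. Since $\calX\subseteq[0,1]^{nm}$ is convex, integrating $\nabla^2\Phi$ along the segment joining $x$ and $y$ gives $\norm{\nabla\Phi(x)-\nabla\Phi(y)}_2\le\big(\sup_{z}\norm{\nabla^2\Phi(z)}_2\big)\norm{x-y}_2$, so it suffices to bound $\norm{\nabla^2\Phi}_2$ uniformly over $\calX$.

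First I would compute the second derivatives starting from the gradient formula \eqref{eq:gradientcomputation}. Two structural observations do most of the work. First, $\Phi$ is multilinear, i.e.\ affine in each single coordinate $x_{ie}$, since in every summand the variable $x_{ie}$ appears either as $x_{ie}$ (when $i\in\mathcal{S}$) or as $(1-x_{ie})$ (when $i\notin\mathcal{S}$), never squared; hence $\partial^2\Phi/\partial x_{ie}^2=0$. Second, $\partial\Phi/\partial x_{ie}$ depends only on the coordinates $\{x_{je}\}_{j\in[n]}$ of the \emph{same} resource $e$, so all mixed derivatives across distinct resources vanish. Ordering the $nm$ variables by resource therefore makes $\nabla^2\Phi$ block diagonal with $m$ symmetric blocks $H_e\in\R^{n\times n}$, each having zero diagonal.

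Next I would evaluate the off-diagonal entries. Differentiating \eqref{eq:gradientcomputation} with respect to $x_{ke}$ for $k\neq i$ and reindexing by $\mathcal{S}=\mathcal{S}_{-i}\setminus\{k\}\subseteq[n]\setminus\{i,k\}$ gives
\[
\frac{\partial^2\Phi}{\partial x_{ie}\,\partial x_{ke}} = \sum_{\mathcal{S}\subseteq[n]\setminus\{i,k\}} w_{\mathcal{S}}\,\big[c_e(\abs{\mathcal{S}}+2)-c_e(\abs{\mathcal{S}}+1)\big],\quad w_{\mathcal{S}}:=\prod_{j\in\mathcal{S}}x_{je}\!\!\prod_{j\notin\mathcal{S}\cup\{i,k\}}\!\!(1-x_{je}).
\]
Because each $x_{je}\in[0,1]$, the weights satisfy $w_{\mathcal{S}}\ge 0$ and $\sum_{\mathcal{S}}w_{\mathcal{S}}=\prod_{j\neq i,k}\big(x_{je}+(1-x_{je})\big)=1$, while the monotonicity and range $c_e(\cdot)\in[0,c_{\max}]$ force $0\le c_e(\abs{\mathcal{S}}+2)-c_e(\abs{\mathcal{S}}+1)\le c_{\max}$. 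Hence every entry of $H_e$ lies in $[0,c_{\max}]$, and since $H_e$ is symmetric with vanishing diagonal, the row-sum (Gershgorin) bound yields $\norm{H_e}_2\le(n-1)c_{\max}$. The block-diagonal structure then gives $\norm{\nabla^2\Phi(z)}_2=\max_e\norm{H_e(z)}_2\le(n-1)c_{\max}$ for every $z\in\calX$, so $\nabla\Phi$ is $(n-1)c_{\max}$-Lipschitz, which is in particular bounded by $2n^2\sqrt{m}\,c_{\max}$ as claimed.

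The proof is essentially bookkeeping, so the only real obstacle is the second-derivative computation: correctly splitting the sum in \eqref{eq:gradientcomputation} according to whether $k\in\mathcal{S}_{-i}$, so that the two cost terms combine into a single consecutive difference $c_e(\abs{\mathcal{S}}+2)-c_e(\abs{\mathcal{S}}+1)$, and confirming that the accompanying weights form a genuine probability distribution. Once these are in place the bound is immediate. I note that the argument actually delivers the sharper constant $(n-1)c_{\max}$ (without the $\sqrt{m}$ factor); the stated $2n^2\sqrt{m}\,c_{\max}$ is a deliberately loose upper bound, matching the form used in \cite{panageas2023semi} and amply sufficient for all later estimates.
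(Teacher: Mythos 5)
Your proof is correct, but note that the paper itself offers no proof of this lemma at all: it is imported verbatim as Lemma 9 of \cite{panageas2023semi}, so your argument stands as a self-contained derivation rather than a variant of anything in this manuscript. The steps all check out. Multilinearity of $\Phi$ (each $x_{ie}$ appears with degree one) indeed gives $\partial^2\Phi/\partial x_{ie}^2=0$; the per-resource separability $\Phi(x)=\sum_{e}\Phi_e(x_{1e},\dots,x_{ne})$ makes the Hessian block diagonal with $m$ symmetric $n\times n$ blocks; your reindexing of \eqref{eq:gradientcomputation} by whether $k\in\mathcal{S}_{-i}$ correctly collapses the two terms into the consecutive difference $c_e(\abs{\mathcal{S}}+2)-c_e(\abs{\mathcal{S}}+1)$, and the weights $w_{\mathcal{S}}$ form a probability distribution exactly as you say, so every off-diagonal entry lies in $[0,c_{\max}]$ by monotonicity and boundedness of $c_e$. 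Gershgorin then gives $\norm{H_e}_2\leq (n-1)c_{\max}$, and the segment integration is legitimate because $\calX=\calX_1\times\dots\times\calX_n$ is convex (in fact your Hessian bound holds on all of $[0,1]^{nm}$, so convexity is not even strictly needed). Your closing observation is also right and worth emphasizing: you obtain the strictly sharper constant $(n-1)c_{\max}$, with no $\sqrt{m}$ and only linear dependence on $n$, whereas the stated $2n^2\sqrt{m}\,c_{\max}$ is what one gets from cruder coordinate-wise estimates of the gradient difference. Since Lipschitz constants are preserved under enlargement, the lemma as stated follows immediately; one could even propagate your sharper constant into $\lambda$ in Proposition~\ref{prop:smooth} and improve the polynomial factors in Theorem~\ref{thm:bigNash}, though the paper's choice of $\lambda$ is of course sufficient as is.
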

With this lemma, proving that $\tilde{\Phi}$ is smooth becomes immediate.

\begin{proposition}
    The function $\tilde{\Phi}$ is $\frac{1}{\lambda}$-smooth with $\lambda = (2n^2m^{7/2}c_{\max})^{-1}$.
\end{proposition}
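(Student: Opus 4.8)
The plan is to reduce the smoothness of $\tilde{\Phi}(\alpha) = \Phi(B\alpha)$ to the smoothness of $\Phi$ (established in the preceding lemma, with constant $2n^2\sqrt{m}c_{\max}$) together with a bound on the operator norm of the block-diagonal spanner matrix $B = \mathrm{diag}(B_1, \dots, B_n)$. The starting point is the chain rule: since $\tilde{\Phi} = \Phi \circ B$ with $B$ a \emph{fixed} linear map, its gradient is $\nabla \tilde{\Phi}(\alpha) = B^\top \nabla \Phi(B\alpha)$. Noting that $\alpha$ in the (product) basis polytope guarantees $B\alpha \in \calX$, so that the smoothness of $\Phi$ is applicable at $B\alpha$ and $B\beta$, I would write for any admissible $\alpha, \beta$
\[
\norm{\nabla \tilde{\Phi}(\alpha) - \nabla \tilde{\Phi}(\beta)}_2 = \norm{B^\top\br{\nabla \Phi(B\alpha) - \nabla \Phi(B\beta)}}_2 \leq \norm{B}_{\mathrm{op}}\norm{\nabla \Phi(B\alpha) - \nabla \Phi(B\beta)}_2,
\]
and then apply the smoothness of $\Phi$ followed by $\norm{B\alpha - B\beta}_2 \leq \norm{B}_{\mathrm{op}}\norm{\alpha - \beta}_2$ to obtain
\[
\norm{\nabla \tilde{\Phi}(\alpha) - \nabla \tilde{\Phi}(\beta)}_2 \leq 2n^2\sqrt{m}\,c_{\max}\,\norm{B}_{\mathrm{op}}^2\,\norm{\alpha - \beta}_2 .
\]
Everything therefore reduces to controlling $\norm{B}_{\mathrm{op}}^2$.

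To bound $\norm{B}_{\mathrm{op}}$ I would exploit the block-diagonal structure. For a block-diagonal matrix the operator norm is the \emph{maximum} of the blocks' operator norms, $\norm{B}_{\mathrm{op}} = \max_{i}\norm{B_i}_{\mathrm{op}}$, which follows from the identity $\norm{B\alpha}_2^2 = \sum_i \norm{B_i\alpha_i}_2^2$. Each block $B_i \in \R^{m\times s}$ has as its columns the spanner vectors $b_1, \dots, b_s \in \calX_i \subseteq [0,1]^m$, so $\norm{b_k}_2 \leq \sqrt{m}$ for every $k$; since $s \leq m$ this gives $\norm{B_i}_{\mathrm{op}} \leq \norm{B_i}_F = \br{\sum_{k=1}^s \norm{b_k}_2^2}^{1/2} \leq \sqrt{s\,m} \leq m$. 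Hence $\norm{B}_{\mathrm{op}} \leq m$ and $\norm{B}_{\mathrm{op}}^2 \leq m^2$.

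Plugging this estimate into the display above yields
\[
\norm{\nabla \tilde{\Phi}(\alpha) - \nabla \tilde{\Phi}(\beta)}_2 \leq 2n^2 m^{5/2}c_{\max}\norm{\alpha - \beta}_2 \leq 2n^2 m^{7/2}c_{\max}\norm{\alpha - \beta}_2,
\]
so $\tilde{\Phi}$ is $\tfrac{1}{\lambda}$-smooth with $\tfrac{1}{\lambda} = 2n^2 m^{7/2}c_{\max}$ as claimed (in fact the tighter exponent $5/2$ holds, and the stated bound follows since $m \geq 1$). There is no serious obstacle here, as the authors anticipate with the phrase ``a simple computation'': the only point demanding care is the operator-norm estimate of $B$, specifically using that the block-diagonal norm is a maximum rather than a sum over the $n$ blocks (the latter would spuriously introduce a factor of $n$) and recalling that the spanner columns lie in the unit hypercube, whence each has Euclidean norm at most $\sqrt{m}$.
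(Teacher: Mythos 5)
Your proof is correct and takes essentially the same route as the paper: exploit the block-diagonal structure to get $\norm{B}_{\mathrm{op}} \leq \max_i \norm{B_i}_{\mathrm{op}} \leq \max_i \norm{B_i}_F$, bound this via the spanner columns lying in $[0,1]^m$, and multiply the smoothness constant of $\Phi$ by $\norm{B}_{\mathrm{op}}^2$ through the chain rule. Your write-up is in fact slightly more careful than the paper's two-line argument (which loosely states $\norm{B_i}_F \leq m^2$ where $\sqrt{sm} \leq m$ suffices), and your remark that the tighter exponent $m^{5/2}$ actually holds, with the stated $m^{7/2}$ being a valid weaker bound, is accurate.
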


\begin{proof}
    The operator norm of the matrix $B$ can easily be bounded as it is a block diagonal matrix. Indeed we have that
    \[
    \norm{B}_2 \leq \max_{i=1,\dots,n}\norm{B_i}_{2}  \leq \max_{i=1,\dots,n}\norm{B_i}_{F} \leq m^2.
    \]
    Conseqently, the smoothness constant of $\tilde{\Phi}$ is obtained by multiplying the smoothness constant of $\Phi$ by $m^2$.
\end{proof}

A final property we will use is the following which states that if all other players stay fixed, the cost incurred by a single agent $i$ is linear in terms of its strategy.

\begin{lemma}[Linearized cost]
\label{lem:linearizedcosts}
    Let $\pi \in \Delta(\calS_1)\times \dots \Delta(\calS_n)$ with marginalization $x \in \calX$. Then, for all $i\in[n]$,
    \[
    C_i(\pi_i, \pi_{-i}) = \innerprod{\frac{\partial \Phi (x)}{\partial x_i}}{x_i}
    \]
    and $\frac{\partial \Phi (x)}{\partial x_i}$ only depends on $x_{-i}$.
\end{lemma}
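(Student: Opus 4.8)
The plan is to establish the two claims separately, reading off in each case the combinatorial meaning of the gradient formula \eqref{eq:gradientcomputation}. The second assertion is immediate: the expression
\[
\frac{\partial \Phi(x)}{\partial x_{ie}} = \sum_{\mathcal{S}_{-i}\subseteq[n-1]} \prod_{j\in\mathcal{S}_{-i}} x_{je} \prod_{j\notin\mathcal{S}_{-i}}(1-x_{je})\, c_e(|\mathcal{S}_{-i}|+1)
\]
involves only the coordinates $x_{je}$ with $j\neq i$, so the partial derivative with respect to the block $x_i$ depends only on $x_{-i}$.

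For the main identity, first I would note that under the marginalization map $x_{ie}=\Pr_{p_i\sim\pi_i}[e\in p_i]$ is exactly the probability that agent $i$ selects resource $e$, and that a joint mixed profile $\pi=(\pi_1,\dots,\pi_n)$ samples the pure strategies $p_1,\dots,p_n$ independently. I would then expand the expected cost by linearity over resources:
\[
C_i(\pi_i,\pi_{-i}) = \E_{p\sim\pi}\Big[\sum_{e\in p_i} c_e(\ell_e(p))\Big] = \sum_{e\in E}\E_{p\sim\pi}\big[\mathbbm{1}(e\in p_i)\,c_e(\ell_e(p))\big].
\]
The key step is to exploit independence. Writing $L_{-i,e}=\sum_{j\neq i}\mathbbm{1}(e\in p_j)$ for the load contributed by the other agents, on the event $\{e\in p_i\}$ the load is $\ell_e(p)=1+L_{-i,e}$, and since $p_i$ is independent of $p_{-i}$ the factor $\mathbbm{1}(e\in p_i)$ decouples from $c_e(1+L_{-i,e})$. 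Hence each summand equals $\Pr[e\in p_i]\cdot\E[c_e(1+L_{-i,e})]=x_{ie}\,\E[c_e(1+L_{-i,e})]$. Finally I would identify $\E[c_e(1+L_{-i,e})]$ with the gradient coordinate: because the $p_j$ with $j\neq i$ are independent, the probability that exactly the subset $\mathcal{S}_{-i}$ of the other agents use $e$ is $\prod_{j\in\mathcal{S}_{-i}} x_{je}\prod_{j\neq i,\,j\notin\mathcal{S}_{-i}}(1-x_{je})$, so
\[
\E[c_e(1+L_{-i,e})] = \sum_{\mathcal{S}_{-i}} \prod_{j\in\mathcal{S}_{-i}} x_{je}\prod_{j\neq i,\,j\notin\mathcal{S}_{-i}}(1-x_{je})\,c_e(|\mathcal{S}_{-i}|+1) = \frac{\partial\Phi(x)}{\partial x_{ie}},
\]
matching \eqref{eq:gradientcomputation} exactly. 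Summing over $e$ yields $C_i(\pi_i,\pi_{-i})=\sum_{e} x_{ie}\,\partial\Phi(x)/\partial x_{ie}=\innerprod{\partial\Phi(x)/\partial x_i}{x_i}$.

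The only place where a subtle error could slip in — and hence the step to handle most carefully — is the bookkeeping of the load on the event $\{e\in p_i\}$: one must consistently count agent $i$ as contributing exactly one unit and let $\mathcal{S}_{-i}$ range only over the remaining $n-1$ agents, which is precisely why the gradient formula carries $c_e(|\mathcal{S}_{-i}|+1)$ rather than $c_e(|\mathcal{S}_{-i}|)$. No convexity or smoothness of $\Phi$ is invoked; the identity is purely a consequence of the product structure of mixed strategies together with the definition of $\Phi$.
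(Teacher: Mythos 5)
Your proof is correct and takes essentially the same route as the paper's: both expand $C_i(\pi_i,\pi_{-i})$ over resources via the indicator $\mathbbm{1}\left(e\in p_i\right)$, use independence of $p_i$ from $p_{-i}$ to replace $c_e(\ell_e(p))$ by $c_e(\ell_e(p_{-i})+1)$ on the event $\{e\in p_i\}$ and factor out $x_{ie}$, and then identify $\E_{p_{-i}}\left[c_e(\ell_e(p_{-i})+1)\right]$ with the gradient coordinate in \eqref{eq:gradientcomputation}. Your explicit expansion over subsets $\mathcal{S}_{-i}$ merely spells out the paper's direct appeal to that gradient computation, so the two arguments coincide.
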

\begin{proof}
    Let $i\in [n]$. By definition of the cost,
    \[
    \begin{aligned}
        C_i(\pi_i, \pi_{-i}) &= \E_{(p_i, p_{-i})\sim{(\pi_i, \pi_{-i})}}\bs{\sum_{e\in p_i} c_e(\ell_e(p_i, p_{-i}))} \\
        &= \E_{p_i\sim \pi_i}\bs{\E_{p_{-i}\sim \pi_{-i}}\left[\sum_{e\in E} c_e(\ell_e(p_i, p_{-i}))\mathbbm{1}\bs{e \in p_i}\middle| p_i\right]} \\
        &= \sum_{e\in E}\E_{p_{-i}\sim \pi_{-i}}\left[c_e(\ell_e(p_{-i})+1)\right]\E_{p_i\sim \pi_i}\bs{\mathbbm{1}\bs{e \in p_i}} \\
        &= \sum_{e\in E}\E_{p_{-i}\sim \pi_{-i}}\left[c_e(\ell_e(p_{-i})+1)\right]x_{ie}
    \end{aligned}
    \]
    where the third equality follows form the fact that $c_e(\ell_e(p_i, p_{-i}))\mathbbm{1}\bs{e \in p_i} = c_e(\ell_e(p_{-i}) + 1)\mathbbm{1}\bs{e \in p_i})$. We then observe that $\E_{p_{-i}\sim \pi_{-i}}\left[c_e(\ell_e(p_{-i})+1)\right]$ is precisely what is computed in equation \eqref{eq:gradientcomputation} to find that
    \[
    C_i(\pi_i, \pi_{-i}) = \innerprod{\frac{\partial \Phi (x)}{\partial x_i}}{x_i}
    \]
\end{proof}

\subsection{Proof of Theorem \ref{thm:bigNash}}
\label{sec:nash-proofs}
As stated in section \ref{sec:nash}, we show convergence to Nash equilibria by showing convergence to a stationary point of the potential function. This strategy is valid because of the following result relating Nash equilibria with stationary points.

\statnash*
\begin{proof}
    Let $\pi_i' \in \Delta(\calX_i)$ with marginalization $x_i'\in\calX_i$. Let $x' = [x_1, \dots, x_i', \dots, x_n]$ differ from $x$ only at $x_i'$. By definition of the potential function, we know that
    \[
    C_i(\pi_i, \pi_{-i}) - C_i(\pi_i', \pi_{-i}) = {\Phi}(x_i, x_{-i}) - {\Phi}(x_i', x_{-i})
    \]
    By further invoking Lemma \ref{lem:linearizedcosts}, and using the fact that $\frac{\partial{\Phi}(x)}{\partial x_i}$ only depends on $x_{-i}$, we have that
    \[
    C_i(\pi_i, \pi_{-i}) - C_i(\pi_i', \pi_{-i}) = \innerprod{\frac{\partial{\Phi}(x)}{\partial x_i}}{x_i - x_i'}
    = \innerprod{\nabla {\Phi}(x)}{x - x'}
    \]
    where the last equality comes from the fact that ${x - x'}$ is zero except on the $x_i$ block of coordinates. Since $x - x' = B(\alpha - \alpha')$ for some $\alpha' \in \calD$, we have that
    \[
    C_i(\pi_i, \pi_{-i}) - C_i(\pi_i', \pi_{-i}) =  \innerprod{\nabla \tilde{{\Phi}}(x)}{\alpha - \alpha'}
    \]
    
    We now exploit the fact that $\alpha$ is stationary. Let $\alpha^{+} = \Pi_{\calD^\mu}\bs{\alpha - \frac{\lambda}{2} \tilde{\Phi} (\alpha)}$. By definition of the projection, for any $u \in \calD^\mu$, it holds that
    \[
    \innerprod{\alpha - \frac{\lambda}{2} \nabla \tilde{\Phi} (\alpha) - \alpha^{+}}{u - \alpha^{+}} \leq 0
    \]
    By rearranging, we find that 
    \[
    \innerprod{\nabla \tilde{\Phi} (\alpha)}{\alpha^{+} - u} \leq \frac{2}{\lambda}\innerprod{\alpha-\alpha^+}{\alpha^{+} - u}
    \]
    With this inequality in hand, we obtain that
    \[
    \begin{aligned}
    \innerprod{\nabla \tilde{\Phi} (\alpha)}{\alpha - u} &= \innerprod{\nabla \tilde{\Phi} (\alpha)}{\alpha^{+} - u} +  \innerprod{\nabla \tilde{\Phi} (x)}{\alpha - \alpha^+}\\
    &\leq \frac{2}{\lambda}\innerprod{\alpha-\alpha^+}{\alpha^{+} - u} + \innerprod{\nabla \tilde{\Phi} (\alpha)}{\alpha - \alpha^{+}}\\
    &\leq \br{\frac{2\sqrt{nm}}{\lambda} + \|\nabla \tilde{\Phi}(\alpha)\|_2}\|\alpha^+ - \alpha\|_2 \\
    &\leq \br{4n^{2.5}m^4c_{\max}}G^{\mu}(\alpha).
    \end{aligned}
    \]
    To conclude we simply take $u = (1-\mu)\alpha' + \mu \frac{1}{s}\mathbbm{1}$ which is necessarily in $\calD^{\mu}$ to find that
    \[
    \begin{aligned}
        \innerprod{\nabla \tilde{\Phi}(x)}{x - x'} &= \innerprod{\nabla \tilde{\Phi}(x)}{x - u} + \innerprod{\nabla \tilde{\Phi}(x)}{u - x'} \\
        &\leq \br{4n^{2.5}m^4c_{\max}}G^{\mu}(x) + nmc_{\max} \mu \\
        &\leq 4n^{2.5}m^4c_{\max}\br{G^{\mu}(x) + \mu}
    \end{aligned}
    \]
\end{proof}

Thanks to the proposition above we can focus our attention on proving convergence to stationary points. 

\estimatorprops*
\begin{proof}
    Let $i \in [n]$ and $e \in E$. First, observe that from lemma \ref{lem:linearizedcosts}, we have that the linearized cost $c^t$ for agent $i$ satisfies
    \[
    \E_t\bs{c_e^t} = \frac{\partial \Phi}{\partial x_{ie}}(x^t)
    \]
Now using the tower property, we have that
\[
\begin{aligned}
\E_t\bs{[\nabla_t]_i} = \E_t\bs{B_i^\top \hat{c}_i^t} &= B_i^\top \E_t\bs{\E\bs{M_{i,t}^+p_i^t \br{\sum_{e\in p^t_i} c_e^t}| p_i^t}} \\
&= B_i^\top\sum_{p_k \in \text{supp}(\pi_i^t)} \Prob{p_i^t = p_k}M_{i,t}^+p_k \sum_{e\in p^k} \E_t\bs{c_e^t | p_i^t = p_k} \\
&= B_i^\top\sum_{p_k \in \text{supp}(\pi_i^t)} \Prob{p_i^t = p_k}M_{i,t}^+p_k \sum_{e\in p^k}\frac{\partial \Phi}{\partial x_{ie}}(x^t) \\
& = B_i^\top \sum_{p_k \in \text{supp}(\pi_i^t)} \Prob{p_i^t = p_k}M_{i,t}^+p_k p_k^T \frac{\partial \Phi}{\partial x_{i}}(x^t)\\
&= B_i^\top M_{i,t}^+M_{i,t}\frac{\partial \Phi}{\partial x_{i}}(x^t) \\
&= B_i^\top \frac{\partial \Phi}{\partial x_{i}}(x^t)
\end{aligned}
\]
where the last equality follows from \eqref{eq:Mformat}. We thus conclude that
\[
\E_t\bs{\nabla_t} = \nabla \tilde{\Phi}(\alpha^t).
\]

For the second point,% we revisit equation in step 8 to bound the expectation of the estimator $\hat{c}^t$ of each agent $i$. 
we know from equation \eqref{eq:B-c-hat} in the proof of Lemma \ref{lem:bounded-estimator} that 
\begin{equation}
    B_i^\top\hat{c}^t = \innerprod{c^t}{p_i^t} N_{i,t}^+ \alpha_{i,t}^p
\end{equation}
We can then control the expectation of square norm of this estimator as follows
\[
\begin{aligned}
    \E_t\bs{\|B_i^\top\hat{c}^t\|_2^2} &\leq m^2c^2_{\max}\E_t\bs{\norm{N_{i,t}^+ \alpha_{i,t}^p}^2_2} \\
    &= m^2c^2_{\max} \E_t\bs{\trace{N_{i,t}^+ \alpha_{i,t}^p\alpha_{i,t}^{p \top} N_{i,t}^{+\top}}} \\
    &=m^2c^2_{\max}{\trace{N_{i,t}^+ \E_t\bs{\alpha_{i,t}^p\alpha_{i,t}^{p \top}} N_{i,t}^{+\top}}} \\
    &\leq m^2c^2_{\max}{\trace{N_{i,t}^{+}}} \\
    &\leq m^4c^2_{\max}\frac{1}{\mu_t}
\end{aligned}
\]
where the last inequality follows from \eqref{eq:N-bound} where we have used that $s \leq m$. Now, since $\nabla_t$ is a concatenation of the estimators $B_i^\top\hat{c}^t$, we find that
\[
\E_t\bs{\|\nabla_t\|_2^2} \leq \frac{nm^4 c^2_{\max}}{\mu_t}. 
\]
\end{proof}

\gapcontrol*
\begin{proof}
    The proof relies on introducing a fixed point $y$ such that
    \[
    y = \Pi_{\calD^{\mu}}\bs{x - \frac{\lambda}{2} \nabla \tilde{\Phi}(y)}.
    \]
    Luckily the point $y = x - \frac{\lambda}{2}\nabla M^{\mu}_{\lambda\tilde{\tilde{\Phi}}}(x)$ is such a fixed point(see point 2 in \ref{lem:moreau}). Now we can write
    \[
    \begin{aligned}
        G^{\mu}(x) &= \|\Pi_{\calD^{\mu}}\bs{x - \frac{\lambda}{2} \nabla \tilde{\Phi}(x)} - x\|_2 \\
        &\leq \|\Pi_{\calD^{\mu}}\bs{x - \frac{\lambda}{2} \nabla \tilde{\Phi}(x)} - \Pi_{\calD^{\mu}}\bs{x - \frac{\lambda}{2} \nabla \tilde{\Phi}(y)}\|_2 + \|y - x\|_2 \\
        &\leq \frac{\lambda}{2} \|\nabla \tilde{\Phi}(x) - \nabla \tilde{\Phi}(y)\| + \|y - x\|_2 \\
        &\leq \frac{3}{2}\|y - x\|_2 = \frac{3\lambda}{4} \|\nabla M^{\mu}_{\lambda \tilde{\Phi}}(x)\|_2 \leq \lambda \|\nabla M^{\mu}_{\lambda \tilde{\Phi}}(x)\|_2
    \end{aligned}
    \]
\end{proof}

\biasedSGD*
\begin{proof}
    Let us first recall some of the notation we use. The time dependent Moreau envelope is given by
    \[
M^t_{\lambda\tilde{\Phi}}(x) \triangleq \min_{y \in \calD^{\mu_{t}}}\left\{ \tilde{\Phi}(y) + \frac{1}{\lambda}\|x - y\|_2^2\right\},
 \]
 Notice here that the envelope is taken with respect to a time varying polytope. The iterates $\alpha^{1:T}$ are updated by the following update rule
 \begin{equation} \alpha^{t+1} = \Pi_{\calD^{\mu_{t+1}}}\left[\alpha^t - \gamma_t \cdot \nabla_t \right] \label{eq:re-update_rule}
 \end{equation}
 With this in mind, we proceed with the proof. Since $M^{t}_{\lambda\tilde{\Phi}}$ is $\frac{2}{\lambda}$-smooth (by point 4 of Lemma \ref{lem:moreau}), we have that
    \[
    M^{t}_{\lambda\tilde{\Phi}}(\alpha^{t+1}) \leq M^{t}_{\lambda\tilde{\Phi}}(\alpha^{t}) + \innerprod{\nabla M^{t}_{\lambda\tilde{\Phi}}(\alpha^t)}{\alpha^{t+1} - \alpha^{t}} + \frac{1}{\lambda} \|\alpha^{t+1} - \alpha^t\|_2^2
    \]
    Now since $\nabla M^{t}_{\lambda\tilde{\Phi}}(\alpha^t) = \frac{2}{\lambda}\br{\alpha^t - \prox^t_{\frac{\lambda}{2}\tilde{\Phi}} (\alpha^t)}$ (by point 3 of Lemma \ref{lem:moreau}), where we can invoke the contractive properties of the projection in \eqref{eq:re-update_rule} to find that
    \[
    M^{t}_{\lambda \Phi}(\alpha^{t+1}) \leq M^{t}_{\lambda \Phi}(\alpha^{t})  - \gamma_t \innerprod{\nabla M^{t}_{\lambda\tilde{\Phi}}(\alpha^{t})}{\nabla_t} + \frac{\gamma^2_t }{\lambda} \|\nabla_t\|_2^2 
    \]
    Taking the expectation, we have
    \[
    \E\bs{M^{t}_{\lambda \Phi}(\alpha^{t+1})} \leq \E\bs{M^{t}_{\lambda \Phi}(\alpha^{t})}  - \gamma_t \E\bs{\innerprod{\nabla M^{t}_{\lambda\tilde{\Phi}}(\alpha^{t})}{\E_t\bs{\nabla_t}}} + \frac{\gamma^2_t }{\lambda}\E\bs{\|\nabla_t\|_2^2} 
    \]
    Using Lemma~\ref{thm:estimator_props}, we can replace the terms involving $\nabla_t$ on the right hand side to find that 
    \[
    \E\bs{M^{t}_{\lambda \Phi}(\alpha^{t+1})} \leq \E\bs{M^{t}_{\lambda \Phi}(\alpha^{t})}  - \gamma_t \E\bs{\innerprod{\nabla M^{t}_{\lambda\tilde{\Phi}}(\alpha^{t})}{\nabla\tilde{\Phi}(\alpha^{t})}} + \frac{ nm^4 c^2_{\max}}{\lambda} \frac{\gamma^2_t}{\mu_t}
    \]
    Invoking Lemma~\ref{lem:biased-grad}, we obtain
    \[
    \E\bs{M^{t}_{\lambda \Phi}(\alpha^{t+1})} \leq \E\bs{M^{t}_{\lambda \Phi}(\alpha^{t})}  - \frac{\gamma_t}{4}\|\nabla M^{t}_{\lambda\tilde{\Phi}}(\alpha^{t})\|_2^2   +\frac{ nm^4 c^2_{\max}}{\lambda} \frac{\gamma^2_t}{\mu_t}
    \]
    By rearranging the terms, we can write that
    \[
    \frac{\gamma_t}{4}\E\bs{\|\nabla M^{t}_{\lambda\tilde{\Phi}}(\alpha^{t})\|_2^2} \leq \E\bs{M^{t}_{\lambda \Phi}(\alpha^{t})} - \E\bs{M^{t}_{\lambda \Phi}(\alpha^{t+1})}  + \frac{ nm^4 c^2_{\max}}{\lambda} \frac{\gamma^2_t}{\mu_t}
    \]
    At this point we notice that $M^{t+1}_{\lambda\tilde{\Phi}}(\alpha^{t+1}) \leq M^{t}_{\lambda\tilde{\Phi}}(\alpha^{t+1})$ since $\calD^{\mu_t} \subset \calD^{\mu_{t+1}}$, which gives us
    \[
    \frac{\gamma_t}{4}\E\bs{\|\nabla M^{t}_{\lambda\tilde{\Phi}}(\alpha^{t})\|_2^2} \leq \E\bs{M^{t}_{\lambda \Phi}(\alpha^{t})} - \E\bs{M^{t+1}_{\lambda \Phi}(\alpha^{t+1})}  +\frac{ nm^4 c^2_{\max}}{\lambda} \frac{\gamma^2_t}{\mu_t}
    \]
    Now summing from $t=1, \dots, T$ and telescoping, we find that
    \[
    \frac{1}{T}\sum_{t=1}^{T} \E\bs{\|\nabla M^{t}_{\lambda\tilde{\Phi}}(\alpha^{t})\|_2^2} \leq \frac{8M^{\max}_{\lambda\tilde{\Phi}}}{\gamma_T T} + 4\frac{nm^4 c^2_{\max}}{\lambda \gamma_T T} \sum_{t=1}^{T} \frac{\gamma^2_t}{\mu_t}
    \]
    where we have used the fact that $\gamma_T \leq \gamma_t$ and defined $M^{\max}_{\lambda\tilde{\Phi}}:=\max_{t\in[T]} \max_{x \in \calD^{\mu_t} }M^t_{\lambda\tilde{\Phi}}(x)$. By taking the square root and applying Jensen's inequality, we have that
    \[
    \frac{1}{T}\sum_{t=1}^{T} \E\bs{\|\nabla M^{t}_{\lambda\tilde{\Phi}}(\alpha^{t})\|_2} \leq \sqrt{\frac{8M^{\max}_{\lambda\tilde{\Phi}} }{\gamma_T T} + 4\frac{ nm^4 c^2_{\max}}{\lambda \gamma_T T} \sum_{t=1}^{T} \frac{\gamma^2_t}{\mu_t}}
    \]
    Finally by plugging in the values of $M^{\max}_{\lambda\tilde{\Phi}} \leq n^3m^{3/2}c_{\max}$ and $\frac{1}{\lambda} = 2n^2m^{7/2}c_{\max}$, we find that
    \[
    \frac{1}{T}\sum_{t=1}^{T} \E\bs{\|\nabla M^{t}_{\lambda\tilde{\Phi}}(\alpha^{t})\|_2} \leq 2 n^{1.5}\sqrt{\frac{2m^{1.5}c_{\max}}{\gamma_T T} + \frac{\vartheta n^3m^{7.5}}{\gamma_T T} \sum_{t=1}^{T} \frac{\gamma^2_t}{\mu_t}}
    \]
\end{proof}

\begin{lemma}
    \label{lem:biased-grad}
    For any $t \in [T]$, we have that
    \[
    \innerprod{\nabla M^{t}_{\lambda\tilde{\Phi}}(\alpha^t)}{\nabla\tilde{\Phi}(\alpha^t)} \geq \frac{1}{4}\|\nabla M^{t}_{\lambda\tilde{\Phi}}(\alpha^t)\|_2^2  \]
\end{lemma}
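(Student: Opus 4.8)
The plan is to exploit the standard identity relating the Moreau envelope's gradient to the proximal map, together with the near-stationarity encoded by the proximal optimality condition. Write $\hat{y} \triangleq \prox^t_{\frac{\lambda}{2}\tilde{\Phi}}(\alpha^t)$ for the minimizer defining $M^t_{\lambda\tilde{\Phi}}(\alpha^t)$ and abbreviate $g \triangleq \nabla M^t_{\lambda\tilde{\Phi}}(\alpha^t)$. By point 3 of Lemma~\ref{lem:moreau} one has $g = \frac{2}{\lambda}\br{\alpha^t - \hat{y}}$, equivalently $\alpha^t - \hat{y} = \frac{\lambda}{2}g$; this is the only structural fact about the envelope I would need.

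First I would argue that $\hat{y}$ satisfies a variational inequality over $\calD^{\mu_t}$. Since $\tilde{\Phi}$ is $\frac{1}{\lambda}$-smooth (Proposition~\ref{prop:smooth}), it is in particular $\frac{1}{\lambda}$-weakly convex, so the proximal objective $y \mapsto \tilde{\Phi}(y) + \frac{1}{\lambda}\norm{\alpha^t - y}_2^2$ is $\frac{1}{\lambda}$-strongly convex on the convex set $\calD^{\mu_t}$ and $\hat{y}$ is its unique minimizer. First-order optimality then gives $\innerprod{\nabla\tilde{\Phi}(\hat{y}) + \frac{2}{\lambda}(\hat{y}-\alpha^t)}{y - \hat{y}} \geq 0$ for every $y \in \calD^{\mu_t}$. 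Choosing the feasible point $y = \alpha^t$ (which lies in $\calD^{\mu_t}$, being the output of the projection in the update rule) and substituting $\alpha^t - \hat{y} = \frac{\lambda}{2}g$ collapses this, after dividing by $\frac{\lambda}{2} > 0$, to $\innerprod{\nabla\tilde{\Phi}(\hat{y})}{g} \geq \norm{g}_2^2$.

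It then remains to replace $\nabla\tilde{\Phi}(\hat{y})$ by $\nabla\tilde{\Phi}(\alpha^t)$. Writing $\innerprod{g}{\nabla\tilde{\Phi}(\alpha^t)} = \innerprod{g}{\nabla\tilde{\Phi}(\hat{y})} + \innerprod{g}{\nabla\tilde{\Phi}(\alpha^t) - \nabla\tilde{\Phi}(\hat{y})}$, I would bound the second inner product by Cauchy--Schwarz together with the $\frac{1}{\lambda}$-smoothness of $\tilde{\Phi}$: its magnitude is at most $\norm{g}_2\cdot\frac{1}{\lambda}\norm{\alpha^t - \hat{y}}_2 = \norm{g}_2 \cdot \frac{1}{\lambda}\cdot\frac{\lambda}{2}\norm{g}_2 = \frac{1}{2}\norm{g}_2^2$. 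Combining with the previous display yields $\innerprod{g}{\nabla\tilde{\Phi}(\alpha^t)} \geq \norm{g}_2^2 - \frac{1}{2}\norm{g}_2^2 = \frac{1}{2}\norm{g}_2^2$, which is in fact stronger than the claimed factor of $\frac{1}{4}$.

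The only delicate point is the passage through non-convexity: $\tilde{\Phi}$ itself is not convex, so I must justify the optimality condition via weak convexity of $\tilde{\Phi}$ and the resulting strong convexity of the proximal objective, which holds precisely because the smoothness modulus $1/\lambda$ is matched by the quadratic penalty weight $1/\lambda$. Everything else is a one-line substitution, so I expect no further obstacle; the comfortable slack between the $\frac{1}{2}$ I obtain and the stated $\frac{1}{4}$ suggests the lemma is deliberately stated with room to spare.
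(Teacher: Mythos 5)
Your proof is correct, and it takes a genuinely different route from the paper's. The paper works at the level of \emph{function values}: it rewrites $\innerprod{\nabla M^{t}_{\lambda\tilde{\Phi}}(\alpha^t)}{\nabla\tilde{\Phi}(\alpha^t)}$ via $y^t = \alpha^t - \frac{\lambda}{2}\nabla M^{t}_{\lambda\tilde{\Phi}}(\alpha^t)$, applies the descent lemma (smoothness) to lower bound $-\innerprod{y^t-\alpha^t}{\nabla\tilde{\Phi}(\alpha^t)}$ by $\tilde{\Phi}(\alpha^t)-\tilde{\Phi}(y^t)-\frac{1}{2\lambda}\norm{y^t-\alpha^t}_2^2$, and then uses only the \emph{zeroth-order} optimality of $y^t$ for the prox objective (comparing objective values at $y^t$ and at the feasible point $\alpha^t$) to conclude $-\innerprod{y^t-\alpha^t}{\nabla\tilde{\Phi}(\alpha^t)} \geq \frac{1}{2\lambda}\norm{y^t-\alpha^t}_2^2$, yielding the constant $\frac{1}{4}$. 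You instead invoke the \emph{first-order} variational inequality at $\hat{y}=\prox^t_{\frac{\lambda}{2}\tilde{\Phi}}(\alpha^t)$ — which, tested at the feasible point $\alpha^t$, gives $\innerprod{\nabla\tilde{\Phi}(\hat{y})}{g} \geq \norm{g}_2^2$ — and then transfer from $\nabla\tilde{\Phi}(\hat{y})$ to $\nabla\tilde{\Phi}(\alpha^t)$ by Cauchy--Schwarz and the $\frac{1}{\lambda}$-Lipschitz gradient, losing only $\frac{1}{2}\norm{g}_2^2$. Your argument is sound (the VI is a necessary condition at a constrained minimizer regardless of convexity of $\tilde{\Phi}$; the strong convexity you cite is only needed for uniqueness, which the paper's Lemma~\ref{lem:moreau} already supplies, and $\alpha^t \in \calD^{\mu_t}$ holds by the projection step), and it buys a sharper constant: $\innerprod{g}{\nabla\tilde{\Phi}(\alpha^t)} \geq \frac{1}{2}\norm{g}_2^2$ versus the paper's $\frac{1}{4}\norm{g}_2^2$. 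Since the constant only propagates into the $\mathcal{O}(\cdot)$ factors of Theorem~\ref{thm:sgd}, the improvement is cosmetic for the final rates, but your version is the tighter statement; the paper's value-based argument is marginally more self-contained in that it never needs the optimality condition beyond comparing two objective values.
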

\begin{proof}
    This lemma is obtained by exploiting the smoothness of $\Phi$. We begin by defining the gradient step $y^t := \alpha^t - \frac{\lambda}{2}\nabla M^{t}_{\lambda\tilde{\Phi}}(\alpha^t)$, which allows us to write
    \begin{equation}
        \innerprod{\nabla M^{t}_{\lambda\tilde{\Phi}}(\alpha^t)}{\nabla\tilde{\Phi}(\alpha^t)} = -\frac{2}{\lambda}\innerprod{y^t - \alpha^t}{\nabla\tilde{\Phi}(\alpha^t)}.
        \label{eq:yt-replace}
    \end{equation}
    Now since $\Phi$ is $\frac{1}{\lambda}$-smooth, we have that
    \[
    \begin{aligned}
        -\innerprod{y^t - \alpha^t}{\nabla\tilde{\Phi}(\alpha^t)} &\geq \tilde{\Phi}(\alpha^t) - \tilde{\Phi}(y^t) - \frac{1}{2\lambda}\|y^t - \alpha^t\|_2^2 \\
        &= \br{\tilde{\Phi}(\alpha^t) + \frac{1}{\lambda}\|\alpha^t - \alpha^t\|_2^2} - \br{\tilde{\Phi}(y^t) + \frac{1}{\lambda}\|y^t - \alpha^t\|_2^2} +\frac{1}{2\lambda}\|y^t - \alpha^t\|_2^2 \\
        &\geq \frac{1}{2\lambda}\|y^t - \alpha^t\|_2^2 \quad \text{(because $y^t = \argmin_{y\in\calD_i^{\mu_{t+1}}} \tilde{\Phi}(y) + \frac{1}{\lambda}\|\alpha^t - y\|_2^2$)}\\
        &= \frac{\lambda}{8}\|\nabla M^{t}_{\lambda\tilde{\Phi}}(\alpha^t)\|_2^2.
    \end{aligned}
    \]
    Plugging this result into \eqref{eq:yt-replace} gives
    \[
    \innerprod{\nabla M^{t}_{\lambda\tilde{\Phi}}(\alpha^t)}{\nabla\tilde{\Phi}(\alpha^t)} \geq \frac{1}{4}\|\nabla M^{t}_{\lambda\tilde{\Phi}}(\alpha^t)\|_2^2.
    \]
\end{proof}
We can now proceed to prove Theorem~\ref{thm:bigNash}.

\begin{proof}
Let $u$ be sampled uniformly from $[T]$. The joint strategy profile $\pi^u$ has marginalization $\alpha^{u} \in \calD^{\mu_u}$, and therefore, by lemma \ref{lem:stat2Nash} we have that
\[\frac{1}{T}\E \left[\sum_{t=1}^T\max_{i \in [n]}\left[c_i(\pi^{t}_i, \pi^{t}_{-i}) - \min_{\pi_i \in \Delta(\mathcal{P}_i)} c_i(\pi_i, \pi_{-i}^{t})\right]\right] \leq 4n^{2.5}m^4c_{\max}\E\bs{G^{u}(x^u) + \mu_u} \] 
Expanding the right hand side, we have that
    \[
    \E\bs{G^{u}(x^u) + \mu_u} \leq \frac{1}{T}\sum_{t=1}^{T}\E\bs{G^t(x^t)} + \frac{1}{T}\sum_{t=1}^{T}\mu_t
    \]
    By Lemma \ref{lem:gaptoM}, we get that
    \[
    \E\bs{G^{u}(x^u) + \mu_u} \leq  \frac{\lambda}{T}\sum_{t=1}^{T} \E\bs{\|\nabla M^t(x^t)\|_2} + \frac{1}{T}\sum_{t=1}^{T}\mu_t
    \]
    It then follows by Theorem \ref{thm:sgd} that
    \begin{align*}
    \E\bs{G^{u}(x^u) + \mu_u} &\leq  2 \lambda n^{1.5}\sqrt{\frac{2m^{1.5}c_{\max}}{\gamma_T T} + \frac{ n^3m^{7.5}}{\gamma_T T} \sum_{t=1}^{T} \frac{\gamma^2_t}{\mu_t}} + \frac{1}{T}\sum_{t=1}^{T}\mu_t \\
    &= \frac{1}{\sqrt{n}m^4c_{\max}}\sqrt{\frac{2m^{1.5}c_{\max}}{\gamma_T T} + \frac{ n^3m^{7.5}}{\gamma_T T} \sum_{t=1}^{T} \frac{\gamma^2_t}{\mu_t}} + \frac{1}{T}\sum_{t=1}^{T}\mu_t
    \end{align*}
    Now, plugging in $\gamma_t = \sqrt{\frac{c_{\max} \mu_t}{n^3 m^6 t}}$
    \begin{align*}
   \E\bs{G^{u}(x^u) + \mu_u}  &\leq \frac{1}{\sqrt{n}m^4c_{\max}}\sqrt{\frac{c^{1.5}_{\max}m^{4.5} n^{1.5} \log T}{\sqrt{T \mu_T}}} + \frac{1}{T}\sum_{t=1}^{T}\mu_t\\ &\leq \frac{n^{1/4}}{m^{1.75}c^{1/4}_{\max}}\sqrt{\frac{3\log T}{\sqrt{T \mu_T}}} + \frac{1}{T}\sum_{t=1}^{T}\mu_t
    \end{align*}
    Finally, setting the exploration parameter $\mu_t = \frac{n^{1/5}}{m^{7/5}t^{1/5}c^{1/5}_{\max}}$ and using the fact that $\sum^T_{t=1} t^{-1/5} \leq \frac{5 T^{4/5}}{4}$, we obtain
    \begin{align*}
    \frac{1}{T}\E \left[\sum_{t=1}^T\max_{i \in [n]}\left[c_i(\pi^{t}_i, \pi^{t}_{-i}) - \min_{\pi_i \in \Delta(\mathcal{P}_i)} c_i(\pi_i, \pi_{-i}^{t})\right]\right]  &\leq \frac{4 m^{2.6}n^{2.7} c^{4/5}_{\max}}{T^{1/5}}.
    \end{align*}
    Therefore choosing $T \geq \frac{4^5 m^{13} n^{13.5} c^{4}_{\max}}{\epsilon}$ ensures 
    \begin{align*}
    \frac{1}{T}\E \left[\sum_{t=1}^T\max_{i \in [n]}\left[c_i(\pi^{t}_i, \pi^{t}_{-i}) - \min_{\pi_i \in \Delta(\mathcal{P}_i)} c_i(\pi_i, \pi_{-i}^{t})\right]\right]  &\leq \epsilon
    \end{align*}
\end{proof}
We now have all the ingredients we need to prove Corollary \ref{c:markov}.
\begin{proof}
    Let $u$ be sampled uniformly from $[T]$. The joint strategy profile $\pi^u$ has marginalization $\alpha^{u} \in \calD^{\mu_u}$, and therefore, by lemma \ref{lem:stat2Nash}, it is a 
    \[
    4n^{2.5}m^4c_{\max}\br{G^{u}(x^u) + \mu_u}-\text{mixed Nash equilibrium}
    \]
    Now let $\delta \in (0,1)$. By Markov's inequality and Theorem~\ref{thm:bigNash},
    \[
    \max_{i \in [n]}\left[c_i(\pi^{u}_i, \pi^{u}_{-i}) - \min_{\pi_i \in \Delta(\mathcal{P}_i)} c_i(\pi_i, \pi_{-i}^{u}) \right] \leq \epsilon/\delta
    \]
    with probability $1-\delta$ if $T \geq \frac{4^5 m^{13} n^{13.5} c^{4}_{\max} \theta}{\epsilon}$. 
    %We now plug in our choices of step size $\gamma_t = t^{-3/5}$ and $\mu_t = 0.5t^{-1/5}$. We find that
    %\[
    %\br{G^{u}(x^u) + \mu_u} \leq  \frac{1}{\delta}\br{ 16\lambda\br{\vartheta n^{3}m^{5}c_{\max}^{3}}^{1/2}\sqrt{T^{-2/5} + T^{-2/5} \sum_{t=1}^{T} \frac{1}{t}} + \frac{1}{2T}\sum_{t=1}^{T}t^{-1/5}}
    %\]
    Finally, putting everything together we find that $\pi^u$ is a 
    \[
        \tilde{\mathcal{O}}\br{\frac{n^{2.7}m^{13/5}c_{\max}^{4/5} }{\delta}T^{-1/5}}
    \]
    with probability $1-\delta$.
    Finally, to make the quantity $\frac{n^{2.7}m^{13/5}c_{\max}^{4/5} }{\delta}T^{-1/5}$ equal to $\epsilon/\delta$ we choose $T \geq \Theta \br{m^{13} n^{13.5}/\epsilon^5}$.
    
    For the first statement of the corollary, we the set of time steps $\mathcal{B}:= \{t \in \{1,t\}:~ E_t > \epsilon/\delta^2\}$ where $E_t := \max_{i \in [n]}\left[c_i(\pi^{t}_i, \pi^{t}_{-i}) - \min_{\pi_i \in \Delta(\mathcal{P}_i)} c_i(\pi_i, \pi_{-i}^{t})\right] $ which is a random variable. With probability $1-\delta$, $\sum_{t=1}^T E_t \leq \frac{\epsilon T}{\delta}$ we directly get that we probability $1-\delta$, $|\mathcal{B}| \leq \delta T$. As a result, with probability $\geq 1-\delta$, $(1-\delta)$ fraction of the profiles $\pi^1,\ldots.\pi^T$ are $\epsilon/\delta^2$-Mixed NE.
\end{proof}

%$A$, $B$ symmetric p.s.d,
%\[
%\begin{aligned}
%    \|AB\|_2 &= \sqrt{\lambda_{\max}\br{ABBA}}\\
%    &\leq \sqrt{\lambda_{\max}\br{AB} \lambda_{\max}\br{BA}} \\
%    &\leq \sqrt{\trace{AB} \trace{BA}}\\
%    & = \trace{AB}
%\end{aligned}
%\]

\subsection{Technical Lemmas}
\label{sec:tech-lemmas}
\begin{lemma}[Projection lemma]\label{lemma:proj}
Let $\calD_i^\mu$ be a bounded away polytope. For any $z\in\R^s$, the projection on $\calD_i^\mu$ can be expressed as
\[
\Pi_{\calD_i^\mu}\bs{z} = (1-\mu)\Pi_{\calD_i}\bs{\frac{1}{1-\mu}(z -  \frac{\mu}{s}\mathbbm{1})} + \frac{\mu}{s}\mathbbm{1}
\]
\end{lemma}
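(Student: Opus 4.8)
The plan is to exploit the fact that, by Definition~\ref{l:bounded-away-polytope}, the bounded-away polytope is the image of $\calD_i$ under the invertible affine map $\phi(w) = (1-\mu)w + \frac{\mu}{s}\mathbbm{1}$, which is a positive scaling followed by a translation. Euclidean projection commutes with such a map up to the obvious change of variables, so the entire lemma should reduce to a one-line algebraic identity and I do not expect any genuine obstacle.

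Concretely, I would first parametrize an arbitrary point of $\calD_i^\mu$ as $y = (1-\mu)w + \frac{\mu}{s}\mathbbm{1}$ with $w \in \calD_i$, and rewrite the defining minimization of the projection, $\Pi_{\calD_i^\mu}\bs{z} = \argmin_{y \in \calD_i^\mu}\norm{y - z}_2^2$, as a minimization over $w \in \calD_i$. The central computation is the completion-of-square identity
\[
\norm{(1-\mu)w + \tfrac{\mu}{s}\mathbbm{1} - z}_2^2 = (1-\mu)^2\,\norm{w - \tfrac{1}{1-\mu}\br{z - \tfrac{\mu}{s}\mathbbm{1}}}_2^2,
\]
which simply factors out the scaling of $\phi$ and isolates the shifted-and-rescaled target $\frac{1}{1-\mu}(z - \frac{\mu}{s}\mathbbm{1})$.

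Since $0 \le \mu < 1$, the prefactor $(1-\mu)^2$ is a strictly positive constant independent of $w$, so it does not affect the argmin. Hence the optimal $w$ is exactly $w^\star = \Pi_{\calD_i}\bs{\frac{1}{1-\mu}\br{z - \frac{\mu}{s}\mathbbm{1}}}$, and mapping back through $\phi$ yields $\Pi_{\calD_i^\mu}\bs{z} = (1-\mu)w^\star + \frac{\mu}{s}\mathbbm{1}$, which is precisely the claimed formula.

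The only points requiring a word of care are that the argument needs $(1-\mu) > 0$ for $\phi$ to be invertible and for the positive rescaling to preserve the minimizer, which is guaranteed by $\mu \in (0,1)$ in every application of the lemma, and that strict convexity of $\norm{\cdot}_2^2$ makes the minimizer unique on both sides, so that the identity holds as an equality of points and not merely of optimal values.
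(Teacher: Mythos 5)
Your proof is correct, and it takes a genuinely different route from the paper's. The paper proves the lemma through proximal-operator calculus: it writes the indicator function of $\calD_i^\mu$ as an affine precomposition of the indicator of $\calD_i$, namely $\iota_{\calD_i^\mu}(z) = \iota_{\calD_i}\br{\frac{1}{1-\mu}(z - \frac{\mu}{s}\mathbbm{1})}$, and then invokes the standard rule for the prox of an affine precomposition (citing Combettes--Pesquet and Parikh--Boyd) to read off the formula. You instead argue directly from the definition of the projection: since $\calD_i^\mu = (1-\mu)\calD_i + \frac{\mu}{s}\mathbbm{1}$ by Definition~\ref{l:bounded-away-polytope}, you parametrize the feasible set by $w \in \calD_i$, factor the objective as $\norm{(1-\mu)w + \tfrac{\mu}{s}\mathbbm{1} - z}_2^2 = (1-\mu)^2\norm{w - \tfrac{1}{1-\mu}\br{z - \tfrac{\mu}{s}\mathbbm{1}}}_2^2$ (a direct factoring rather than a completion of squares, but the identity is exact), and note that the positive constant $(1-\mu)^2$ does not affect the argmin. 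Your version is more elementary and fully self-contained: it needs only $1-\mu > 0$ and uniqueness of projections onto closed convex sets, with no appeal to external prox-calculus results. The paper's version is shorter on the page and places the lemma within a general framework that would extend immediately to arbitrary invertible affine maps and to proximity operators of functions beyond indicators; your change-of-variables argument would also generalize to such maps, but each instance would need the factoring redone. Both establish the identity as an equality of points, as you correctly note via strict convexity.
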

\begin{proof}
    We first express the indicator function of $\calD_i^\mu$ in terms of the indicator of $\calD_i$. We have that for any $z \in \R^s$, by definition of the bounded away polytope,
    \begin{equation}
    \iota_{\calD_i^\mu}(z) = \iota_{\calD_i}\br{\frac{1}{1-\mu}(z - \frac{\mu}{s}\mathbbm{1})},
    \label{eq:indicator}
    \end{equation}
    The indicator function of $\calX_i^\mu$ is therefore obtained through an affine precomposition of the $\calX_i$ indicator. We can determine the prox of an affine precomposition by using properties (i) and (ii) in Table 10.1 of \cite{combettes2011proximal}, which yields the simple formula given in equation (2.2) of \cite{parikh2014proximal}. We thus find that
    \[
        \Pi_{\calD_i^\mu}\bs{z} = (1-\mu)\Pi_{\calD_i}\bs{\frac{1}{1-\mu}(z -  \frac{\mu}{s}\mathbbm{1})} + \frac{\mu}{s}\mathbbm{1}
    \]
\end{proof}

\begin{lemma}[Moreau enveloppe and proximity operators]
\label{lem:moreau}
    Let $f: \calX \mapsto \R$ be a $1/\lambda$-smooth function. Its Moreau-Yosida regularization defined as
    \[
    e_{\eta}f(x) = \inf_{y\in \calX} f(y) + \frac{1}{2\eta}\|y - x\|_2^2
    \]
    verifies the following properties for $\eta < \lambda$,
    \begin{enumerate}
        \item The proximity operator given by the equation below is single valued
        \begin{equation}
        \prox_{\eta f}(x) = \argmin_{y \in \calX} f(y) + \frac{1}{2\eta}\|y - x\|_2^2. 
        \label{eq:prox}
        \end{equation}
        \item By optimality conditions of \eqref{eq:prox}, 
        \[
        \prox_{\eta f}(x) = \Pi_{\calX}\bs{x - \eta \nabla f(\prox_{\eta f}(x))}
        \]
        \item $e_{\eta}f$ is continuously differentiable and
        \[
        \nabla e_{\eta}f(x) = \frac{1}{\eta}\br{x - \prox_{\eta f}(x)}
        \]
        \item If $\eta = \lambda/2$, then $\nabla e_{\eta}f$ is $\frac{1}{\eta}$ smooth.
    \end{enumerate}
\end{lemma}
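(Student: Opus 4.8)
The plan is to treat this as classical Moreau--Yosida theory, being careful that $f$ is only assumed $1/\lambda$-smooth (hence merely $1/\lambda$-weakly convex, \emph{not} convex); the whole argument hinges on the observation that $\eta<\lambda$ makes the inner problem strongly convex. Fixing $x$, set $g_x(y)=f(y)+\frac{1}{2\eta}\norm{y-x}_2^2$. Writing $g_x(y)=\br{f(y)+\frac{1}{2\lambda}\norm{y}_2^2}+\br{\frac{1}{2\eta}\norm{y-x}_2^2-\frac{1}{2\lambda}\norm{y}_2^2}$ exhibits $g_x$ as a convex function (the first bracket is convex by $1/\lambda$-smoothness) plus a quadratic with Hessian $\br{\frac1\eta-\frac1\lambda}I\succ0$, so $g_x$ is $\br{\frac1\eta-\frac1\lambda}$-strongly convex on the closed convex set $\calX$. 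Strong convexity yields a unique minimizer, which is property~1. Property~2 is then the first-order optimality condition for minimizing a differentiable convex function over $\calX$: writing $p:=\prox_{\eta f}(x)$, optimality is the variational inequality $\innerprod{\nabla f(p)+\frac1\eta(p-x)}{z-p}\ge0$ for all $z\in\calX$, which is equivalent to $p=\Pi_\calX\bs{x-\eta\nabla f(p)}$ since $p-\eta\nabla g_x(p)=x-\eta\nabla f(p)$.

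For property~3 I would invoke Danskin's (envelope) theorem. The joint objective $\phi(x,y)=f(y)+\frac1{2\eta}\norm{y-x}_2^2$ is continuously differentiable in $x$ with $\nabla_x\phi(x,y)=\frac1\eta(x-y)$, and by property~1 the inner minimizer $p(x)=\prox_{\eta f}(x)$ is unique; the Lipschitz continuity of $x\mapsto p(x)$ required by the envelope theorem is supplied by the estimate I establish below for property~4. Danskin then gives $\nabla e_\eta f(x)=\nabla_x\phi\br{x,p(x)}=\frac1\eta\br{x-\prox_{\eta f}(x)}$.

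The substantive step, and the one I expect to be the main obstacle, is property~4: because $f$ is nonconvex I cannot appeal to firm nonexpansiveness of the prox and must derive the Lipschitz constant by hand. Let $p_1,p_2$ be the prox points of $x_1,x_2$ and write $d=x_1-x_2$, $e=p_1-p_2$. Adding the two variational inequalities of property~2 (testing each at the other's prox point) and using $\innerprod{\nabla f(p_1)-\nabla f(p_2)}{e}\ge-\frac1\lambda\norm{e}_2^2$ (a consequence of $\frac1\lambda$-Lipschitzness of $\nabla f$ and Cauchy--Schwarz) yields, after clearing $\eta$, the monotonicity estimate $\innerprod{d}{e}\ge\br{1-\frac\eta\lambda}\norm{e}_2^2$. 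Via Cauchy--Schwarz this already shows $\prox_{\eta f}$ is $\frac{\lambda}{\lambda-\eta}$-Lipschitz, the continuity used in the Danskin argument. The punchline is the identity
\[
\norm{d-e}_2^2=\norm{d}_2^2-2\innerprod{d}{e}+\norm{e}_2^2\le\norm{d}_2^2+\br{\tfrac{2\eta}{\lambda}-1}\norm{e}_2^2,
\]
whose last term vanishes exactly at $\eta=\lambda/2$. Hence $\norm{d-e}_2\le\norm{d}_2$, and since property~3 gives $\nabla e_\eta f(x_1)-\nabla e_\eta f(x_2)=\frac1\eta(d-e)$, we conclude $\norm{\nabla e_\eta f(x_1)-\nabla e_\eta f(x_2)}_2\le\frac1\eta\norm{x_1-x_2}_2$, i.e.\ $e_\eta f$ is $\frac1\eta$-smooth. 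The delicate point throughout is tracking the weak-convexity constant $1/\lambda$ correctly, so that the cancellation at $\eta=\lambda/2$ produces the clean constant $\frac1\eta$ rather than a looser bound such as $\frac3\eta$.
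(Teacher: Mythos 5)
Your proof is correct, but it takes a genuinely different route from the paper: the paper's entire proof of this lemma is a citation, observing that $1/\lambda$-smooth functions are $1/\lambda$-weakly convex and then invoking Corollary 3.4 of \cite{hoheiselproximal} with $\eta = \lambda/2$, whereas you reprove the cited result from scratch. Your decomposition $g_x(y)=\br{f(y)+\frac{1}{2\lambda}\norm{y}_2^2}+\br{\frac{1}{2\eta}\norm{y-x}_2^2-\frac{1}{2\lambda}\norm{y}_2^2}$ correctly isolates the $\br{\frac{1}{\eta}-\frac{1}{\lambda}}$-strong convexity of the inner problem (this is exactly the mechanism hidden inside the reference), the variational-inequality derivation of $\prox_{\eta f}(x)=\Pi_{\calX}\bs{x-\eta\nabla f(\prox_{\eta f}(x))}$ is standard and sound, and your two-point estimate $\innerprod{d}{e}\geq\br{1-\frac{\eta}{\lambda}}\norm{e}_2^2$ — obtained by adding the two variational inequalities and lower-bounding $\innerprod{\nabla f(p_1)-\nabla f(p_2)}{e}$ by $-\frac{1}{\lambda}\norm{e}_2^2$ — is exactly what is needed: it gives both the $\frac{\lambda}{\lambda-\eta}$-Lipschitzness of the prox (which then legitimizes the Danskin step for property 3 without circularity, since it relies only on properties 1 and 2) and, via $\norm{d-e}_2^2\leq\norm{d}_2^2+\br{\frac{2\eta}{\lambda}-1}\norm{e}_2^2$, the clean cancellation at $\eta=\lambda/2$ that yields the sharp constant $\frac{1}{\eta}=\frac{2}{\lambda}$, matching the $\frac{2}{\lambda}$-smoothness of $M^t_{\lambda\tilde{\Phi}}$ used in the proof of Theorem~\ref{thm:sgd}. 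What each approach buys: the paper's citation is economical and keeps the appendix short; your argument is self-contained, makes all constants traceable (it even delivers the prox Lipschitz constant as a usable byproduct), and makes explicit why $\eta<\lambda$ is the operative hypothesis — in effect you have supplied the proof of the black-boxed corollary. One cosmetic caveat: the Danskin invocation needs the minimization domain to be compact or the objective coercive; here $\calX$ is a compact polytope (and $g_x$ is strongly convex anyway), so the hypothesis holds, but it is worth stating.
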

\begin{proof}
    All these properties follow from \cite{hoheiselproximal} Corollary 3.4 because $\frac{1}{\lambda}$ smooth functions are $\frac{1}{\lambda}$ weakly convex functions. In our paper, we work with the function $M_{\lambda \tilde{\Phi}}$, notice that it corresponds to the Moreau-Yosida regularization
    \[
    M_{\lambda \Phi} = e_{\frac{\lambda}{2}}\tilde{\Phi}
    \]
    All the properties therefore follow with $\eta = \frac{\lambda}{2}$.
\end{proof}

\begin{lemma}[Telescoping Lemma]
\label{lem:telescope}
    Let $(\gamma_t)_t$ be a non-increasing sequence. Let $(u_t)_t \in \R_+^{\mathbb{N}}$ be a non-negative sequence uniformly bounded by $u_{\max} > 0$, it holds that
    \[
    \sum_{t=1}^{T}\frac{1}{\gamma_t}(u_t - u_{t+1}) \leq \frac{u_{\max}}{\gamma_T}
    \]
\end{lemma}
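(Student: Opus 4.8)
The plan is to prove this by \emph{summation by parts} (Abel summation). The key difficulty to anticipate is that the increments $u_t - u_{t+1}$ are \emph{not} sign-definite, so one cannot bound the sum term by term; instead the summation structure has to be rearranged so that the comparator $u_{\max}$ can be substituted against coefficients of a fixed sign.

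First I would record the single structural fact that drives the argument: because $(\gamma_t)_t$ is non-increasing and positive, the sequence $a_t := 1/\gamma_t$ is \emph{non-decreasing} and non-negative, so all consecutive differences $a_t - a_{t-1}$ are $\geq 0$. Next I would split and re-index the sum. Writing $\sum_{t=1}^T a_t(u_t - u_{t+1}) = \sum_{t=1}^T a_t u_t - \sum_{t=1}^T a_t u_{t+1}$ and shifting the index in the second sum yields
\[
\sum_{t=1}^T a_t(u_t - u_{t+1}) = a_1 u_1 + \sum_{t=2}^{T}(a_t - a_{t-1})\,u_t - a_T\, u_{T+1}.
\]

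The final step is to bound the right-hand side using the two hypotheses $0 \le u_t \le u_{\max}$ and $a_t - a_{t-1} \ge 0$. Since $a_T \ge 0$ and $u_{T+1}\ge 0$, the boundary term $-a_T u_{T+1}$ is $\le 0$ and may be discarded. Every remaining coefficient ($a_1$ and each $a_t - a_{t-1}$) is non-negative, so replacing each $u_t$ by its upper bound $u_{\max}$ only increases the expression, after which the differences telescope:
\[
\sum_{t=1}^T a_t(u_t - u_{t+1}) \le u_{\max}\br{a_1 + \sum_{t=2}^T (a_t - a_{t-1})} = u_{\max}\, a_T = \frac{u_{\max}}{\gamma_T}.
\]
I do not expect a serious obstacle here; the only point that needs care is the bookkeeping of signs — one must use that it is $1/\gamma_t$, rather than $\gamma_t$, that is the monotone (non-decreasing) sequence, so that the telescoped differences are non-negative and the substitution of $u_{\max}$ goes in the correct direction.
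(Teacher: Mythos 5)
Your proof is correct and follows essentially the same route as the paper's: both are Abel summation arguments that exploit the monotonicity of $1/\gamma_t$ to replace $u_t$ by $u_{\max}$ against non-negative coefficients and then telescope. If anything, your bookkeeping is slightly cleaner, since the paper's version introduces an undefined boundary term $1/\gamma_0$ (harmless, as one may take $\gamma_0 = \gamma_1$), whereas you isolate the $a_1 u_1$ term explicitly and avoid it.
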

\begin{proof}
    \[
\begin{aligned}
     \sum_{t=1}^{T}\frac{1}{\gamma_t}(u_t - u_{t+1})   &= \sum_{t=1}^{T}\frac{u_t}{\gamma_{t-1}} - \frac{u_{t+1}}{\gamma_{t}} + \sum_{t=1}^{T} \br{\frac{1}{\gamma_{t}}-\frac{1}{\gamma_{t-1}}}{u_{t}} \\
     &\leq \sum_{t=1}^{T}\frac{u_t}{\gamma_{t-1}} - \frac{u_{t+1}}{\gamma_{t}} + u_{\max}\sum_{t=1}^{T} \frac{1}{\gamma_{t}}-\frac{1}{\gamma_{t-1}} \\
     &= \frac{u_1}{\gamma_0} - \frac{u_{T+1}}{\gamma_{T}} + \frac{u_{\max}}{\gamma_{T}} - \frac{u_{\max}}{\gamma_{0}}\\
     &\leq \frac{u_{\max}}{\gamma_{T}}
\end{aligned}
    \]
\end{proof}

\end{document}